\documentclass[a4paper]{article}

\usepackage{a4wide}
\usepackage{dsfont}
\usepackage{amsmath}
\usepackage{amsfonts}
\usepackage{mathrsfs}
\usepackage{amssymb}
\usepackage{amsthm}
\usepackage{color}
\usepackage{bm}
\newtheorem{thm}{Theorem}
\newtheorem{prop}{Proposition}
\newtheorem{remark}{Remark}

\newtheorem{lemma}{Lemma}

\newcommand{\bbr}{{\mathbb R}}
\newcommand{\bbs}{{\mathbb S}}

\usepackage{authblk}

\begin{document}

\title{Small solutions of the Einstein-Boltzmann-scalar field system with Bianchi symmetry}

\author[1]{Ho Lee\footnote{holee@khu.ac.kr}}
\author[2]{Jiho Lee\footnote{leejiho72@khu.ac.kr}}
\author[3]{Ernesto Nungesser\footnote{em.nungesser@upm.es}}

\affil[1]{Department of Mathematics and Research Institute for Basic Science, Kyung Hee University, Seoul, 02447, Republic of Korea}
\affil[2]{Department of Mathematics, Kyung Hee University, Seoul, 02447, Republic of Korea}
\affil[3]{M2ASAI, Universidad Polit\'{e}cnica de Madrid, ETSI Navales, Avda. de la Memoria, 4, 28040 Madrid, Spain}

\maketitle

\begin{abstract}
We show that small homogeneous solutions to the Einstein-Boltzmann-scalar field system exist globally towards the future and tend to the de Sitter solution in a suitable sense. More specifically, we assume that the spacetime is of Bianchi type I--VIII, that the matter is described by Israel particles and that there exists a scalar field with a potential which has a positive lower bound. This represents a generalization of the work \cite{LN4}, where a cosmological constant was considered, and a generalization of \cite{LL22}, where a spatially flat FLRW spacetime was considered. We obtain the global existence and asymptotic behavior of classical solutions to the Einstein-Boltzmann-scalar field system for small initial data. 
\end{abstract}

\section{Introduction}
In the very first paper \cite{Einstein} about cosmology after having established the theory of general relativity, Albert Einstein introduced the cosmological constant extending his theory to obtain a static solution describing the universe. Considering an isotropic spacetime and matter such that the resulting pressure and energy density is positive, there is no possibility for a static universe unless one introduces a cosmological constant. However, Einstein's solution has two problems. Hubble \cite{Hubble}, a few years later, discovered a redshift in the light rays coming from different galaxies, thus indicating a possible expansion of the universe making a static universe unrealistic. In fact, before the observation of the expansion of the universe, already in 1922, i.e., exactly 100 years ago, Alexander Friedmann \cite{Friedmann} found that there are other non-static, non-vacuum solutions. The second problem is that Einstein's solution, which corresponds in modern terminology to a closed FLRW universe ($k=1$) having a critical radius $a_c$ filled with dust having a critical density $\rho_c$ and a cosmological constant, is unstable. For a long time, the interest in the cosmological constant diminished considerably, and the most popular model became the closed FLRW universe ($k=1$) without cosmological constant. It was clear in any case that the value of the cosmological constant must be small, since we do not detect it in the solar system.

Later again the cosmological constant has become very popular for two different reasons. First, the theory of inflation \cite{Guth} has been able to explain different phenomena of the very early universe successfully, and observations of the microwave background have been consistent with that theory. On the other hand, different observations of supernova \cite{Riess} seem to indicate that the universe is in an accelerated expansion {\it now}, and a cosmological constant is the simplest way to model this. The cosmological constant can be understood as the effect of the vacuum on spacetime curvature, for in the vacuum case the Einstein tensor would have to be equal to minus the cosmological constant times the metric. It is natural to model the vacuum by a quantum field, and one can for simplicity consider just a (non-linear) scalar field. Certainly, the discovery of the Higgs field has increased the interest in cosmological models with a scalar field.

The interest in the cosmological constant, or its generalization the scalar field, comes also from the fact that it helps to explain isotropization of the universe. Consider any cosmological model which is future complete and generic within the class of initial data in consideration with a cosmological constant. All the mathematical results available until now indicate that the model tends to the de Sitter solution, i.e., it isotropizes. The first mathematical result in that direction was obtained by Wald \cite{Wald83} obtaining the asymptotic behavior for homogeneous spacetimes with a positive cosmological constant, where the matter model satisfies certain energy conditions. This was then generalized to the scalar field cases by Kitada and Maeda \cite{KM92, KM93} and Rendall \cite{Rendall04}. Considering collisionless matter and homogeneous spacetimes, Hayoung Lee was able to obtain precise decay rates in the cosmological constant case \cite{Lee04} and then in the scalar field case \cite{Lee05}. Future non-linear stability for the Einstein-Vlasov system coupled with a scalar field was shown in the monograph by Ringstr{\"o}m \cite{Ringstrom} based on previous work by him \cite{Ringstrom08, Ringstrom09}. Concerning the Einstein-Boltzmann case, the global existence was obtained by Noutchegueme and Takou \cite{NT} in the isotropic case with certain assumptions on the scattering kernel. Finally, more recently considering the scattering kernel for Israel particles, the asymptotic behavior in the isotropic case was obtained in Refs.~\cite{LL22, LN172}, and the isotropization in the cosmological constant case was proved in Ref.~\cite{LN4}.

In this paper, we study the Cauchy problem for the Einstein-Boltzmann-scalar field system with Bianchi symmetry. Suitable assumptions on the scalar field will be made. For the Boltzmann equation, we will consider the scattering kernel for Israel particles. This is a simple scattering kernel, first studied by Israel in Ref.~\cite{Israel63}, which has a specific, mathematically tractable form so that the collision operator can be easily estimated. In this paper, we show that small solutions to the Einstein-Boltzmann-scalar field system with Bianchi symmetry exist globally towards the future and tend to the de Sitter solution. We extend the work in Refs.~\cite{LN172, LN4}, where a cosmological constant was considered, to the scalar field case, and the work in Ref.~\cite{LL22}, where the FLRW case was treated, to the Bianchi case. Moreover, we generalize the work in Ref.~\cite{Lee05}, where the Vlasov case was studied, to the Boltzmann case.

The structure of this paper is as follows. In Section \ref{sec Bianchi}, we briefly introduce the spacetime of Bianchi types. The main equations of this paper, the Einstein-Boltzmann-scalar field system with Bianchi symmetry, will be introduced here. In Section \ref{sec main}, we first make the assumptions on the potential, the scattering kernel, etc., and then state the main result in Theorem \ref{thm main}. Note that we need the smallness of initial data, especially for the Boltzmann equation, which is not necessary in the Vlasov case, for instance see Ref.~\cite{Lee05}. In Section \ref{sec EBs}, we study the local existence and uniqueness for the Einstein-Boltzmann-scalar field system. In Section \ref{sec Boltzmann}, we first consider the local existence for the single Boltzmann equation. The argument is basically the same as in Refs.~\cite{LN172, LN4}, but in this paper we present a more detailed proof, for instance the higher order derivatives of post-collision momenta are estimated in Lemma \ref{lem dp'} in a mathematically rigorous way. In Section \ref{sec local}, we study the coupled Einstein-Boltzmann-scalar field system. The local existence is proved in Section \ref{sec existence}, and the uniqueness is proved in Section \ref{sec uniqueness}. It turns out that the uniqueness should be proved with great care in the case that the Boltzmann equation is coupled with Einstein's equations so that the distribution function needs to be estimated with a norm depending on the metric. This will be proved in Section \ref{sec uniqueness} in detail. In Section \ref{sec Global}, we assume that initial data is small and obtain the global existence and asymptotic behavior.

\subsection{Bianchi spacetimes}\label{sec Bianchi}

We consider a Lorentz manifold $ M = I \times G $, where $ I $ is a time interval and $ G $ is a 3-dimensional Lie group. Let $ { \bf E }_a $, $ a = 1 , 2 , 3 $, be a left-invariant frame on $ G $ with $ { \bf W }^a $ its dual frame. We take $ { \bf E }_0 = \partial_t $ to be future oriented and suppose that a metric $ ^4 g $ on $ M $ is given by
\[
^4 g = - d t^2 + g , \qquad g = g_{ a b } ( t ) { \bf W }^a { \bf W }^b,
\]
where $ g $ is a Riemannian metric on $ G $, which is left-invariant. Throughout the paper the Einstein summation convention will be used, where Latin indices run from 1 to 3 and Greek indices run from 0 to 3. In this paper we are interested in the following system of equations. We first consider the evolution equations for the metric $ g_{ a b } $ and the second fundamental form $ k_{ a b } $:
\begin{align}
\frac{ d g_{ a b } }{ d t } & = 2 k_{ a b } , \label{EBs1} \\
\frac{ d k_{ a b } }{ d t } & = 2 k^c_a k_{ b c } - k k_{ a b } - R_{ a b } + S_{ a b } + \frac12 ( \rho - S ) g_{ a b } + V ( \phi ) g_{ a b } ,  \label{EBs2}
\end{align}
where $ k $ is the trace of $ k_{ a b } $ and $ R_{ a b } $ is the Ricci curvature of $ g_{ a b } $. The quantities $ S_{ a b } $, $ \rho $ and $ S $ are defined by $ S_{ a b } = T_{ a b } $, $ \rho = T_{ 0 0 } $ and $ S = S_{ a b } g^{ a b } $, where $ T_{ \alpha \beta } $ is the energy-momentum tensor associated with the Boltzmann matter, and $ V ( \phi ) $ is the potential for the scalar field. We note that the structure constants are defined by 
\begin{align*}
[ { \bf E }_a , { \bf E }_b ] = C^c_{ a b } { \bf E }_c , \qquad [ { \bf E }_0 , { \bf E }_a ] = 0 ,
\end{align*}
where we choose $ { \bf E }_0 = \partial_t $. We define the connection coefficients by $ \nabla_{ { \bf E }_\alpha } { \bf E }_\beta = \Gamma^\gamma_{ \alpha \beta } { \bf E }_\gamma $ and use the Koszul formula to conclude that the only nontrivial coefficients are 
\begin{align*}
& \Gamma_{ c a b } = \frac12 \left( - g_{ a d } C^d_{ b c } + g_{ b d } C^d_{ c a } + g_{ c d } C^d_{ a b } \right) , \\
& \Gamma_{ 0 a b } = - k_{ a b } , \qquad \Gamma_{ c 0 b } = k_{ b c } , \qquad \Gamma_{ c a 0 } = k_{ c a } . 
\end{align*}
In this case, the Riemann curvature and the Ricci curvature of $ g_{ a b } $ can be written by 
\begin{align*}
R^a_{bcd} & = \Gamma^e_{db}\Gamma^a_{ce} - \Gamma^e_{cb}\Gamma^a_{de} - C^e_{cd}\Gamma^a_{eb} , \qquad R_{ b d } = R^a_{ b a d } . 
\end{align*}
Hence, we observe that the Ricci curvature $ R_{ a b } $ is a polynomial of $ g_{ a b } $ and $ g^{ a b } $. The equations for the scalar field $ \phi $ are given by
\begin{align}
\frac{ d \phi }{ d t } & = \psi , \label{EBs3} \\
\frac{ d \psi }{ d t } & = - k \psi - V ' ( \phi ) , \label{EBs4}
\end{align}
and the assumptions on $ V $ will be given in Section \ref{sec main}. The Boltzmann equation can be written in a simple form in the Bianchi cases by using the structure constants:
\begin{align}
\frac{ \partial f }{ \partial t } + \frac{ 1 }{ p^0 } C^d_{ b c } p_d p^b \frac{ \partial f }{ \partial p_c } = Q ( f , f ) , \label{EBs5}
\end{align}
where the left hand side is derived from the equation (25.20) of Ref.~\cite{Ringstrom} by applying the above representation of the connection coefficients. The collision operator $ Q ( f , f ) $ is defined by
\begin{align*}
Q ( f , f ) = ( \det g )^{ - \frac12 } \int_{ \bbr^3 } \int_{ \bbs^2 } \frac{ h \sqrt{ s } }{ p^0 q^0 } \sigma ( h , \omega ) ( f ( p' ) f ( q' ) - f ( p ) f ( q ) ) \, d \omega \, d q , 
\end{align*}
which will be studied in Section \ref{sec Boltzmann} in detail. Finally, the constraint equations are given by
\begin{align}
R - k_{ a b } k^{ a b } + k^2 & = 2 \rho + \psi^2 + 2 V ( \phi ) , \label{EBs6} \\
\nabla^a k_{ a b } & = - T_{ 0 b } , \label{EBs7}
\end{align}
where $ R $ is the Ricci scalar of $ g_{ a b } $. The system of equations \eqref{EBs1}--\eqref{EBs7} will be referred to as the Einstein-Boltzmann-scalar field (hereafter EBs) system in this paper. The assumptions on the potential $ V $ for the scalar field and the scattering kernel $ \sigma $ for the Boltzmann equation will be given in Section \ref{sec main}, and the energy-momentum tensor $ T_{ \alpha \beta } $ associated with the Boltzmann equation will be given in Section \ref{sec Boltzmann}.

\subsection{Main result} \label{sec main}
We state the main result of this paper. The following will be assumed throughout the paper. 

\begin{itemize}

\item We assume that the spacetime is of Bianchi types I--VIII. Hence, the Ricci scalar $ R $ of $ g_{ a b } $ will be non-positive.

\item We assume that the potential $ V = V ( \phi ) $ is a smooth function and that there exists a positive constant $ V_0 $ such that 
\[
V ( \phi ) \geq V_0 > 0 . 
\]
Moreover, we assume that the minimum $ V_0 $ is attained at $ \phi = 0 $ and is non-degenerate: 
\begin{align*}
V ( 0 ) = V_0 > 0 , \qquad V ' ( 0 ) = 0 , \qquad V '' ( 0 ) > 0 . 
\end{align*}

\item We assume that the scattering kernel $ \sigma ( h , \omega ) $ for the Boltzmann equation is the one for Israel particles. We further assume that it is independent of the angular variable $ \omega $, i.e., 
\[
\sigma ( h , \omega ) = \frac{ 1 }{ h s } , 
\]
where $ h $ and $ s $ are defined in Section \ref{sec Boltzmann}. 

\item The unknowns of the EBs system will be the metric $ g^{ a b } $, the second fundamental form $ k^{ a b } $, the scalar fields $ \phi $ and $ \psi $, and the distribution function $ f $. Initial value of these quantities will be given at some $ t_0 > 0 $ and will be denoted by 
\[
{ g_0 }^{ a b } = g^{ a b } ( t_0 ) , \qquad { k_0 }^{ a b } = k^{ a b } ( t_0 ) , \qquad \phi_0 = \phi ( t_0 ) , \qquad \psi_0 = \psi ( t_0 ) , \qquad f_0 ( p ) = f ( t_0 , p ) . 
\]

\item The Hubble variable $ H = H ( t ) $ will be defined by 
\[
H = \frac13 k , \qquad k = k_{ a b } g^{ a b } , \qquad H_0 = H ( t_0 ) , 
\]
where $ H_0 $ denotes the initial value of the Hubble variable. 

\item We introduce a scale factor $ Z = Z ( t ) $ defined by 
\begin{align}
Z = e^{ \gamma t } , \qquad \gamma = \sqrt{ \frac{ V_0 }{ 3 } } , \label{scale factor}
\end{align}
where $ V_0 $ denotes the positive lower bound of $ V $.

\end{itemize}

For an introduction to the Bianchi symmetry and the scalar field we refer to Refs.~\cite{Rendall04, Ringstrom}. For more details about the Israel particles we refer to Ref.~\cite{Israel63}. The following is the main result of this paper.

\begin{thm} \label{thm main}
Let $ { g_0 }^{ a b } $, $ { k_0 }^{ a b } $, $ \phi_0 $, $ \psi_0 $ and $ f_0 $ be a set of initial data of the EBs system \eqref{EBs1}--\eqref{EBs7}. Suppose that $ { g_0 }^{ a b } $ and $ { k_0 }^{ a b } $ are positive definite and $ H_0 > \gamma $. Then, there exist positive numbers $ \varepsilon $, $ C_0 $, $ b_1 $, $ b_2 $ and $ b_3 $ such that if initial data satisfy 
\[
H_0 - \gamma + E_0 < \varepsilon , \qquad \max_{ a , b } | { g_0 }^{ a b } | \leq C_0 , \qquad \| f_0 \|^2_{ g_0 , m + \frac12 , N } < \varepsilon , 
\]
with $ m > 7 / 2 $ and $ N \geq 3 $, then the EBs system admits a unique classical solution $ g^{ a b } $, $ k^{ a b } $, $ \phi $, $ \psi $ and $ f $ on $ [ t_0 , \infty ) $ such that $ f $ is non-negative and 
\begin{align*}
\sup_{ t \in [ t_0 , \infty ) } \max_{ a , b } | g^{ a b } ( t ) | & \leq C_0 + C \sqrt{ \varepsilon } , \\ 
\sup_{ t \in [ t_0 , \infty ) } \max_{ a , b } | k^{ a b } ( t ) | & \leq C_0 \gamma + C \sqrt{ \varepsilon } , \\ 
\sup_{ t \in [ t_0 , \infty ) } \| f ( t ) \|^2_{ g , m , N } & \leq C \varepsilon , 
\end{align*}
with the following asymptotics: 
\begin{align*}
| \phi ( t ) | + | \psi ( t ) | & \leq C \sqrt{ \varepsilon } e^{ - \frac12 b_1 t } , \\ 
\rho ( t ) + | R ( t ) | + \sigma_{ a b } \sigma^{ a b } ( t ) & \leq C \varepsilon e^{ - 2 \gamma t } , \\ 
0 \leq H ( t ) - \gamma & \leq C \varepsilon e^{ - b_2 t } , \\ 
\det g^{ - 1 } & \leq C e^{ - 6 \gamma t } , \\ 
| Z^2 ( t ) g^{ a b } ( t ) - { g_\infty }^{ a b } | & \leq C \sqrt{ \varepsilon } e^{ - b_3 t } , \\ 
| Z^{ - 2 } ( t ) g_{ a b } ( t ) - { g_\infty }_{ a b } | & \leq C \sqrt{ \varepsilon } e^{ - b_3 t } , 
\end{align*}
for some constant metrics $ { g_\infty }^{ a b } $ and $ { g_\infty }_{ a b } $. 
\end{thm}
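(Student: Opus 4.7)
\emph{Proof plan.} The strategy is a standard continuous-induction (bootstrap) argument built on top of the local existence and uniqueness results established in Sections~\ref{sec local}--\ref{sec uniqueness}. Assume the classical solution exists on a maximal interval $[t_0, T_*)$, and on every subinterval $[t_0, T] \subset [t_0, T_*)$ impose bootstrap hypotheses looser by a constant factor than the conclusions of the theorem: say $H - \gamma \leq \sqrt{\varepsilon}$, $|\phi| + |\psi| \leq \varepsilon^{1/4}$, and $\|f\|^2_{g,m,N} \leq 2C\varepsilon$. The goal is to close each of these strictly under the hypotheses; by the standard continuation principle this forces $T_* = \infty$ and simultaneously yields the claimed asymptotics.

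The first step is to extract the crucial lower bound $H \geq \gamma$ from the Hamiltonian constraint \eqref{EBs6}. In Bianchi types I--VIII one has $R \leq 0$, and since $\rho \geq 0$, $\sigma_{ab}\sigma^{ab} \geq 0$, $\psi^2 \geq 0$, and $V(\phi) \geq V_0$, rearranging the constraint gives $k^2 \geq 3V_0 + (\text{nonneg.})$, hence $H \geq \gamma$ throughout. Tracing the evolution equation \eqref{EBs2} yields a Raychaudhuri-type identity for $\dot{H}$ whose right-hand side is bounded by the bootstrap, and a monotonicity/Gr\"onwall argument gives the quantitative decay $0 \leq H - \gamma \leq C\varepsilon e^{-b_2 t}$. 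The volume estimate $\det g^{-1} \leq C e^{-6\gamma t}$ is then immediate from $\partial_t \log\det g = 2k = 6H$. For the scalar field, equations \eqref{EBs3}--\eqref{EBs4} linearize near $(\phi,\psi)=(0,0)$ to a damped harmonic oscillator with damping $3H \geq 3\gamma$ and restoring constant $V''(0) > 0$, and an energy identity for a quantity of the form $V''(0)\phi^2 + \psi^2$ plus a cross-term delivers $|\phi| + |\psi| \leq C\sqrt{\varepsilon}\,e^{-b_1 t/2}$. Feeding these bounds into the shear part of \eqref{EBs2} (again damped by $3H$) produces $\sigma_{ab}\sigma^{ab} \leq C\varepsilon e^{-2\gamma t}$.

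The delicate step is the Boltzmann estimate, since the norm $\|\cdot\|_{g,m,N}$ depends on the evolving metric. The plan is to differentiate $\|f\|^2_{g,m,N}$ directly along \eqref{EBs5}: the transport contribution produces terms involving the structure constants, the inverse metric, and derivatives of $p^0$, all algebraically bounded in the bootstrap regime, while the Israel collision operator $Q(f,f)$ with kernel $\sigma(h,\omega)=1/(hs)$ is quadratic in $f$ and carries the prefactor $(\det g)^{-1/2}$. The key point is that $(\det g)^{-1/2} \lesssim e^{-3\gamma t}$ is integrable in time, so the collision contribution is time-integrable against $\|f\|^2_{g,m,N}$ and a Gr\"onwall argument closes the bootstrap strictly; here one invokes the precise estimates on derivatives of post-collision momenta supplied by Lemma~\ref{lem dp'}. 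A standard moment estimate then converts $\|f\|^2_{g,m,N} \leq C\varepsilon$ into $\rho \leq C\varepsilon e^{-2\gamma t}$.

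Finally, splitting $k^{ab} = H g^{ab} + \sigma^{ab}$, the contravariant form of \eqref{EBs1} reads
\[
\frac{d}{dt}\bigl(Z^2 g^{ab}\bigr) = 2(\gamma - H)\,Z^2 g^{ab} - 2 Z^2 \sigma^{ab},
\]
and both driving terms are integrable on $[t_0,\infty)$ because $|H-\gamma|$ and $|\sigma^{ab}|$ decay strictly faster than $Z^{-2} = e^{-2\gamma t}$ grows. Hence $Z^2 g^{ab}$ is Cauchy in $t$ and converges exponentially to some positive-definite ${g_\infty}^{ab}$, at a rate $b_3$ dictated by the slowest of the preceding decays; inversion gives the companion statement for $Z^{-2} g_{ab}$. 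The principal obstacle, as foreshadowed in the introduction, is the Boltzmann step: the metric dependence of the norm, the quadratic nonlinearity of $Q$, and the need to differentiate post-collision momenta require the estimates to be carried out with considerable care, whereas every other piece of the argument reduces to ODE-style analysis once the decay of matter has been established.
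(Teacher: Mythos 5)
Your overall skeleton — bootstrap/continuation, the scalar-field energy $V''(0)\phi^2 + \psi^2 + (\text{cross term})$ (this is exactly the paper's $E$), $H \geq \gamma$ and $\dot H \leq 0$ from the constraint plus the Raychaudhuri identity, $(\det g)^{-1/4}$ integrability closing the Gr\"onwall argument for $\|f\|^2_{g,m,N}$, and the ODE for $Z^2 g^{ab}$ whose source terms are integrable — matches the paper step for step. The one place where your plan diverges and genuinely breaks down is the decay of $\rho$, $|R|$ and $\sigma_{ab}\sigma^{ab}$ and, as a corollary, the quantitative rate for $H-\gamma$.

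You propose to get $\sigma_{ab}\sigma^{ab} \lesssim \varepsilon e^{-2\gamma t}$ by "feeding the bounds into the shear part of \eqref{EBs2}" and $\rho \lesssim \varepsilon e^{-2\gamma t}$ by a moment estimate, and you never address $|R|$ at all, even though it appears in the conclusion. The shear-evolution route is problematic: the trace-free part of \eqref{EBs2} contains the trace-free Ricci $R_{ab}$, whose decay you have not established and which depends on the metric asymptotics you are trying to prove, so the argument threatens to be circular. Moreover the Raychaudhuri identity $\dot H = \tfrac16 R - \tfrac12\sigma_{ab}\sigma^{ab} - \tfrac12\rho - \tfrac12\psi^2 - \tfrac16 S$ gives $\dot H \leq 0$ but no obvious rate; "monotonicity/Gr\"onwall" does not by itself produce exponential decay of $H - \gamma$. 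The paper avoids all of this with a single Wald-type quantity: it sets $X = 3H^2 - \tfrac12\psi^2 - V(\phi)$, observes from the constraint \eqref{constraint} that $X = \rho - \tfrac12 R + \tfrac12\sigma_{ab}\sigma^{ab} \geq 0$, and computes $\dot X = -2H\bigl(-\tfrac12 R + \tfrac32\sigma_{ab}\sigma^{ab} + \tfrac32\rho + \tfrac12 S\bigr) \leq -2\gamma X$, so $X$ decays at rate $2\gamma$ and each of its three nonnegative summands inherits the same rate simultaneously. This also yields the quantitative decay $0 \leq H - \gamma \lesssim (E_0 + X_0)e^{-b_2 t}$ by rewriting $3H^2 - V_0$ in terms of $X$, $\psi^2$ and $V(\phi)-V_0$. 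Without this single scalar quantity, your decays for $|R|$ and for $H-\gamma$ do not close, and your shear estimate is at best much harder. Everything else in your plan is essentially the paper's proof.
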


Here, the quantities $ H $ and $ H_0 $ denote the Hubble variable and the initial value of it, and $ E_0 $ is the initial value of $ E $, which is a quantity equivalent to $ \phi^2 + \psi^2 $. The quantity $ \sigma_{ a b } $ denotes the shear tensor, defined by the trace free part of the second fundamental form $ k_{ a b } $. These quantities will be defined in Section \ref{sec Global}.

\section{The Einstein-Boltzmann-scalar field system}\label{sec EBs}
In this part we study the local existence of solutions to the EBs system. Let us first consider the equations \eqref{EBs1}--\eqref{EBs4}, which are simple ODEs for $ g_{ a b } $, $ k_{ a b } $, $ \phi $ and $ \psi $, where the right hand sides are basically given by polynomials of $ g_{ a b } $, $ g^{ a b } $, $ k_{ a b } $, $ \phi $ and $ \psi $ together with matter terms induced by the (particle) distribution function $ f $. Hence, one can easily obtain the local existence for $ g_{ a b } $, $ k_{ a b } $, $ \phi $ and $ \psi $, provided that the distribution function is properly estimated. Below, we first consider the local existence for the Boltzmann equation. The local existence for the coupled EBs system will be studied in Section \ref{sec local}.

\subsection{The Boltzmann equation}\label{sec Boltzmann}
In the Bianchi case, by using the left invariant frame $ { \bf E }_a $, the distribution function can be written as a function of time $ t $ and momentum $ p $. It is well known that the Boltzmann equation can be written in a simple form if we use covariant variables:
\[
f = f ( t , p ) , \qquad p = ( p_1 , p_2 , p_3 ) \in \bbr^3.
\]
Then, the Boltzmann equation is written as 
\begin{align}
\frac{ \partial f }{ \partial t } + \frac{ 1 }{ p^0 } C^d_{ b c } p_d p^b \frac{ \partial f }{ \partial p_c } = Q ( f , f ) , \label{Boltzmann}
\end{align}
where $ C^d_{ b c } $ are the structure constants. The collision operator $ Q ( f , f ) $ can be written as 
\begin{align*}
Q ( f , f ) & = Q_+ ( f , f ) - Q_- ( f , f ) , 
\end{align*}
where $ Q_+ ( f , f ) $ and $ Q_- ( f , f ) $ are called the gain and loss terms, respectively, and defined by 
\begin{align*}
Q_+ ( f , f ) & = ( \det g )^{ - \frac12 } \int_{ \bbr^3 } \int_{ \bbs^2 } \frac{ 1 }{ p^0 q^0 \sqrt{ s } } f ( p' ) f ( q' ) \, d \omega \, d q , \\
Q_- ( f , f ) & = ( \det g )^{ - \frac12 } \int_{ \bbr^3 } \int_{ \bbs^2 } \frac{ 1 }{ p^0 q^0 \sqrt{ s } } f ( p ) f ( q ) \, d \omega \, d q . 
\end{align*}
Here, $ d q = d q_1 \, d q_2 \, d q_3 $, and $ d \omega $ is the usual surface element on $ \bbs^2 $, where $ \omega \in \bbs^2 $ is the unit vector in the sense that $ \delta^{ i j } \omega_i \omega_j = 1 $. Throughout the paper, the quantity $ \det g $ will denote the determinant of the $ 3 \times 3 $ matrix $ g_{ a b } $, while $ \det g^{ - 1 } $ will denote the determinant of the $ 3 \times 3 $ matrix $ g^{ a b } $. The energy-momentum tensor $ T_{ \alpha \beta } $ is defined by
\begin{align}
T_{ \alpha \beta } = ( \det g )^{ - \frac12 } \int_{ \bbr^3 } f p_\alpha p_\beta \frac{ d p }{ p^0 } , \label{T_ab} 
\end{align}
where $ d p = d p_1 \, d p_2 \, d p_3 $. The mass shell condition implies that
\[
p^0 = \sqrt{ 1 + g^{ a b } p_a p_b } , \qquad q^0 = \sqrt{ 1 + g^{ a b } q_a q_b } ,
\]
and the quantities $ h $ and $ s $ are the relative momentum and total energy defined by
\[
h = \sqrt{ g^{ \alpha \beta } ( p_\alpha - q_\alpha ) ( p_\beta - q_\beta ) } , \qquad s = - g^{ \alpha \beta } ( p_\alpha + q_\alpha ) ( p_\beta + q_\beta ) .
\]
In order to have explicit representations of post-collision momenta we need to introduce an orthonormal frame $ { \bf e }_i $, which can be written with respect to the left invariant frame as
\begin{align*}
{ \bf e }_i = { e_i }^a { \bf E }_a , \qquad { \bf e }^i = { e^i }_a { \bf W }^a ,
\end{align*}
where $ { e^i }_a $ is the inverse of $ { e_i }^a $ as a $ 3 \times 3 $ matrix and satisfies $ \delta^{ i j } = { e^i }_a { e^j }_b g^{ a b } $. It will be convenient to write
\[
n_\alpha = p_\alpha + q_\alpha. 
\]
Then, the post-collision momentum is written as
\begin{align}
p'^0 & = p^0 + 2 \bigg( - q^0 \frac{ n^a \omega_i { e^i }_a }{ \sqrt{ s } } + q^a \omega_i { e^i }_a + \frac{ n^a \omega_i { e^i }_a n_b q^b }{ \sqrt{ s } ( n^0 + \sqrt{ s } ) } \bigg) \frac{ n^c \omega_j { e^j }_c }{ \sqrt{ s } } , \label{p'^0} \\
q'^0 & = q^0 - 2 \bigg( - q^0 \frac{ n^a \omega_i { e^i }_a }{ \sqrt{ s } } + q^a \omega_i { e^i }_a + \frac{ n^a \omega_i { e^i }_a n_b q^b }{ \sqrt{ s } ( n^0 + \sqrt{ s } ) } \bigg) \frac{ n^c \omega_j { e^j }_c }{ \sqrt{ s } } , \label{q'^0} 
\end{align}
and
\begin{align}
p'_d & = p_d + 2 \bigg( - q^0 \frac{ n^a \omega_i { e^i }_a }{ \sqrt{ s } } + q^a \omega_i { e^i }_a + \frac{ n^a \omega_i { e^i }_a n_b q^b }{ \sqrt{ s } ( n^0 + \sqrt{ s } ) } \bigg) \bigg( \omega_j { e^j }_d + \frac{ n^c \omega_j { e^j }_c n_d }{ \sqrt{ s } ( n^0 + \sqrt{ s } ) } \bigg) , \label{p'} \\
q'_d & = q_d - 2 \bigg( - q^0 \frac{ n^a \omega_i { e^i }_a }{ \sqrt{ s } } + q^a \omega_i { e^i }_a + \frac{ n^a \omega_i { e^i }_a n_b q^b }{ \sqrt{ s } ( n^0 + \sqrt{ s } ) } \bigg) \bigg( \omega_j { e^j }_d + \frac{ n^c \omega_j { e^j }_c n_d }{ \sqrt{ s } ( n^0 + \sqrt{ s } ) } \bigg) , \label{q'} 
\end{align}
where $ \omega = ( \omega_1 , \omega_2 , \omega_3 ) \in \bbs^2 $ in the sense that $ \delta^{ i j } \omega_i \omega_j = 1 $. Note that the pre- and post-collision momenta satisfy the energy-momentum conservation:
\begin{align}\label{energy-momentum}
p'_\alpha + q'_\alpha = p_\alpha + q_\alpha. 
\end{align}
We also note that the above representation reduces to the one in the Minkowski case when written with respect to the orthonormal basis. For instance, the quantity $ n^a \omega_i { e^i }_a $ in the above representation of $ p'^0 $ can be written as
\[
n^a \omega_i { e^i }_a = g^{ a b } n_b \omega_i { e^i }_a = g^{ a b } { \hat n }_j { e^j }_b \omega_i { e^i }_a = \delta^{ i j } { \hat n }_j \omega_i , 
\]
where $ n_b $ and $ { \hat n }_j $ denote the components of $ n $ with respect to $ { \bf W }^b $ and $ { \bf e }^j $, respectively, so that the last quantity on the right side above is $ { \hat n } \cdot \omega $, which is the Euclidean inner product of $ { \hat n } $ and $ \omega $. Since the quantities $ p^0 $, $ q^0 $, $ n^0 $ and $ s $ are independent of the choice of basis, one can see that the above representation of $ p'^0 $ is the same as the one in the Minkowski case, which can be found in (36) of Ref.~\cite{LN172}. For more details about the Boltzmann equation, we refer to Refs.~\cite{CIP, CK, Glassey}.

\subsubsection{Assumptions on the metric} 
In Proposition \ref{prop Boltzmann}, we will show that the Boltzmann equation \eqref{Boltzmann} admits a local solution in a given spacetime. To establish the existence result for the Boltzmann equation in a given spacetime we need to make suitable assumptions on the metric. To be precise, we make the following assumptions.

\medskip

\noindent{\bf Assumptions on the metric.} {\it We assume that there exist a time interval $ [ t_0 , T ] $ and a constant $ c_1 \geq 1 $ such that a metric $ g^{ a b } $ and a second fundamental form $ k^{ a b } $ exist on $ [ t_0 , T ] $ and satisfy 
\begin{align}
\frac{ 1 }{ c_1 } | p |^2 \leq Z^2 g^{ a b } p_a p_b \leq c_1 | p |^2, \qquad \frac{ 1 }{ c_1 } | p |^2 \leq Z^2 k^{ a b } p_a p_b \leq c_1 | p |^2 , \qquad \max_{ a , b } | Z^2 g^{ a b } | \leq c_1 , \label{assumption1}
\end{align}
where $ | p | = \sqrt{ \delta^{ a b } p_a p_b } $ for $ p \in \bbr^3 $. }

\subsubsection{Basic estimates}

We define the following norm for the distribution function:
\begin{align}\label{norm1}
\| f ( t ) \|^2_{ g , m , N } = \sum_{ | { \mathcal I } | \leq N } \int_{ \bbr^3 } | \partial^{ \mathcal I }_p f ( t , p ) |^2 \langle p \rangle^{ 2 m } e^{ p^0 } \, d p, 
\end{align}
where $ { \mathcal I } $ is a multi-index, and $ \langle p \rangle $ is defined by
\[
\langle p \rangle = \sqrt{ 1 + | p |^2 }, 
\]
which does not depend on the metric $ g $. We notice that the norm depends on the metric $ g $ due to the quantity $ e^{ p^0 } $, which is necessary to control the collision term. We will also need the following norm: 
\begin{align}\label{norm2}
\| f ( t ) \|^2_{ m , N } = \sum_{ | { \mathcal I } | \leq N } \int_{ \bbr^3 } | \partial^{ \mathcal I }_p f ( t , p ) |^2 \langle p \rangle^{ 2 m } \, d p, 
\end{align}
which does not depend on the metric.

Below, we will denote by $ { \mathfrak L }^{ ( n ) } $ a linear combination of the following form: 
\[
{ \mathfrak L }^{ ( n ) } = \sum_{ \substack{ 2 k + l + m = n \\ k , l , m \geq 0 } } C_{ a_1 b_1 a_2 \cdots d_m } g^{ a_1 b_1 } g^{ a_2 b_2 } \cdots g^{ a_k b_k } \frac{ p^{ c_1 } }{ p^0 } \frac{ p^{ c_2 } }{ p^0 } \cdots \frac{ p^{ c_l } }{ p^0 } \frac{ q^{ d_1 } }{ q^0 } \frac{ q^{ d_2 } }{ q^0 } \cdots \frac{ q^{ d_m } }{ q^0 } , 
\]
where $ C_{ a_1 b_1 a_2 \cdots d_m } $ are constants. For instance, $ { \mathfrak L }^{ ( 1 ) } $ denotes any linear combination of $ p^a / p^0 $ and $ q^b / q^0 $; $ { \mathfrak L }^{ ( 2 ) } $ is any linear combination of $ g^{ a b } $, $ p^a p^b / ( p^0 )^2 $, $ p^a q^b / ( p^0 q^0 ) $ and $ q^a q^b / ( q^0 )^2 $; and so on. We will also denote by $ { \mathfrak L }^{ ( n ) }_P $ a combination of the following form:
\[
{ \mathfrak L }^{ ( n ) }_P = \sum_{ \substack{ 2 k + l + m = n \\ k , l , m \geq 0 } } P_{ a_1 b_1 a_2 \cdots d_m } g^{ a_1 b_1 } g^{ a_2 b_2 } \cdots g^{ a_k b_k } \frac{ p^{ c_1 } }{ p^0 } \frac{ p^{ c_2 } }{ p^0 } \cdots \frac{ p^{ c_l } }{ p^0 } \frac{ q^{ d_1 } }{ q^0 } \frac{ q^{ d_2 } }{ q^0 } \cdots \frac{ q^{ d_m } }{ q^0 } ,
\]
where $ P_{ a_1 b_1 a_2 \cdots d_m } $ denote some polynomials of the following quantities:
\begin{align}
\frac{ 1 }{ p^0 } , \quad \frac{ 1 }{ q^0 } , \quad \frac{ 1 }{ \sqrt{ s } } , \quad \frac{ 1 }{ n^0 + \sqrt{ s } } . \label{Pcoeff} 
\end{align}
We notice that the above quantities are bounded by $ 1 $, $ 1 $, $ 1/2 $ and $ 1/4 $, respectively, so that $ P_{ a_1 b_1 a_2 \cdots d_m } $ are also bounded. Finally, we will denote by $ { \mathcal C } $ a finite sum of the following form:
\[
{ \mathcal C } = \sum_{ n = M }^N Z^n | { \mathfrak L }^{ ( n ) }_P | , 
\]
for any $ N \geq M \geq 0 $, where $ Z $ is defined by \eqref{scale factor}.

\begin{lemma}\label{lem L}
The following hold for any $ n \geq 1 $:
\begin{align}
& \partial_{ p_a } { \mathfrak L }^{ ( n ) } = \frac{ 1 }{ p^0 } { \mathfrak L }^{ ( n + 1 ) } , \label{dL} \\
& \partial_{ p_a } { \frak L }^{ ( n ) }_P = q^0 { \frak L }^{ ( n + 1 ) }_P . \label{dLP}
\end{align}
\end{lemma}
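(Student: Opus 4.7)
My plan is to prove both identities by the Leibniz product rule, after computing how $\partial_{p_a}$ acts on the elementary building blocks. For $\mathfrak{L}^{(n)}$ those are $g^{bc}$ (weight $2$), $p^c/p^0$ (weight $1$) and $q^d/q^0$ (weight $1$); for $\mathfrak{L}^{(n)}_P$ we additionally need the four bounded scalars $1/p^0$, $1/q^0$, $1/\sqrt{s}$ and $1/(n^0+\sqrt{s})$ appearing in the coefficients. Since $\partial_{p_a}$ kills $g^{bc}$, $q^c$ and $q^0$ and acts on $p^0$, $p^c$ via $\partial_{p_a}p^0 = p^a/p^0$ and $\partial_{p_a}p^c = g^{ac}$, every relevant derivative is an elementary calculation.

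For part (a), the only nontrivial building-block derivative is
\[
\partial_{p_a}\!\left(\frac{p^c}{p^0}\right) \;=\; \frac{1}{p^0}\!\left(g^{ac} - \frac{p^a p^c}{(p^0)^2}\right),
\]
which is exactly $1/p^0$ times a weight-$2$ object. Applied term by term to a generic monomial of $\mathfrak{L}^{(n)}$, Leibniz raises the weight of the differentiated factor $p^{c_i}/p^0$ from $1$ to $2$ and pulls out an overall $1/p^0$, so the total weight passes from $n$ to $n+1$, giving $(1/p^0)\,\mathfrak{L}^{(n+1)}$ as claimed.

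For part (b), the extra work is to differentiate the four scalar coefficients. Writing $s = (p^0+q^0)^2 - g^{bc}(p_b+q_b)(p_c+q_c)$, a direct computation gives
\[
\partial_{p_a} s \;=\; 2(p^0+q^0)\frac{p^a}{p^0} - 2(p^a+q^a) \;=\; 2\,q^0\!\left(\frac{p^a}{p^0}-\frac{q^a}{q^0}\right),
\]
whence $\partial_{p_a}(1/\sqrt{s}) = -\,q^0\,(1/\sqrt{s})^3\bigl((p^a/p^0) - (q^a/q^0)\bigr)$. Combined with $\partial_{p_a}n^0 = p^a/p^0$ and the same identity, $\partial_{p_a}(1/(n^0+\sqrt{s}))$ admits an analogous representation; meanwhile $\partial_{p_a}(1/p^0) = -(1/p^0)^2(p^a/p^0)$ and $\partial_{p_a}(1/q^0)=0$ are trivial. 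Each of these derivatives therefore belongs to $q^0\,\mathfrak{L}^{(1)}_P$, where for the two $p$-side derivatives one harmlessly inserts $1 = q^0\cdot(1/q^0)$ to produce the requested $q^0$ prefactor. The Leibniz rule applied to a generic monomial of $\mathfrak{L}^{(n)}_P$ then yields a sum of monomials of total weight $n+1$, each carrying exactly one $q^0$ prefactor and coefficients that are again polynomials in the four allowed bounded quantities — i.e., an element of $q^0\,\mathfrak{L}^{(n+1)}_P$.

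The one genuine subtlety is bookkeeping: in $\partial_{p_a}s$ the stray term $q^a$ must be rewritten as $q^0\cdot(q^a/q^0)$ so that $q$-dependence re-enters only through the admissible factor $q^d/q^0$, and this is precisely what forces the overall $q^0$ on the right-hand side of the second identity. Once this rewriting is made for each scalar building block, the remainder of the argument is a mechanical induction on the number of factors in the defining expression via Leibniz, using that products and powers of the four bounded quantities in \eqref{Pcoeff} stay polynomials in those same quantities.
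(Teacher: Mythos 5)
Your proof is correct and follows essentially the same approach as the paper: compute the $\partial_{p_a}$-derivative of each elementary factor ($p^c/p^0$ for \eqref{dL}, and the four scalar coefficients in \eqref{Pcoeff} for \eqref{dLP}), then close up under the Leibniz rule, inserting $1 = q^0\cdot(1/q^0)$ where needed to extract the overall $q^0$ prefactor. The only superficial difference is that you start from the unsimplified form $s = (p^0+q^0)^2 - g^{bc}(p_b+q_b)(p_c+q_c)$, while the paper uses the reduced expression $s = 2 + 2p^0 q^0 - 2p_a q^a$; both yield the same $\partial_{p_a} s$ and hence the same conclusion.
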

\begin{proof}
We first notice that $ { \mathfrak L }^{ ( m ) } { \mathfrak L }^{ ( n ) } = { \mathfrak L }^{ ( m + n ) } $ and $ { \mathfrak L }^{ ( m ) }_P { \mathfrak L }^{ ( n ) }_P = { \mathfrak L }^{ ( m + n ) }_P $. By a direct calculation, we have
\[
\partial_{ p_a } \left( \frac{ p^b }{ p^0 } \right) = \frac{ 1 }{ p^0 } \left( g^{ a b } - \frac{ p^a p^b }{ ( p^0 )^2 } \right) = \frac{ 1 }{ p^0 } { \mathfrak L}^{ ( 2 ) } , 
\]
and this proves \eqref{dL}. Recall that $ s = 2 + 2 p^0 q^0 - 2 p_a q^a $. By direct calculations, we have
\begin{align}
& \partial_{ p_a } \left( \frac{ 1 }{ p^0 } \right) = - \frac{ p^a }{ ( p^0 )^3 } = \frac{ 1 }{ ( p^0 )^2 } { \mathfrak L}^{ ( 1 ) } = \frac{ q^0 }{ ( p^0 )^2 q^0 } { \mathfrak L}^{ ( 1 ) } = q^0 { \mathfrak L }^{ ( 1 ) }_P , \label{d1/p} \\
& \partial_{ p_a } \left( \frac{ 1 }{ \sqrt{ s } } \right) = - \frac{ 1 }{ 2 s^{ \frac32 } } \left( 2 \frac{ p^a }{ p^0 } q^0 - 2 q^a \right) = \frac{ q^0 }{ s^{ \frac32 } } { \mathfrak L }^{ ( 1 ) } = q^0 { \mathfrak L }^{ ( 1 ) }_P , \label{d1/s} \\
& \partial_{ p_a } \left( \frac{ 1 }{ n^0 + \sqrt{ s } } \right) = - \frac{ 1 }{ ( n^0 + \sqrt{ s } )^2 } \left( \frac{ p^a }{ p^0 } + \frac{ 1 }{ 2 \sqrt{ s } } \left( 2 \frac{ p^a }{ p^0 } q^0 - 2 q^a \right) \right) = \frac{ q^0 }{ ( n^0 + \sqrt{ s } )^2 } { \frak L }^{ ( 1 ) }_P , \label{d1/ns} 
\end{align}
where the last one can also be written as $ q^0 { \mathfrak L }^{ ( 1 ) }_P $. The above three calculations show that for any polynomial $ P $ of the quantities in \eqref{Pcoeff}, we have
\[
\partial_{ p_a } P = q^0 { \mathfrak L }^{ ( 1 ) }_P . 
\]
We notice that the first result \eqref{dL} can be written as
\[
\partial_{ p_a } { \mathfrak L }^{ ( n ) } = \frac{ q^0 }{ p^0 q^0 } { \mathfrak L }^{ ( n + 1 ) } = q^0 { \mathfrak L }^{ ( n + 1 ) }_P .
\]
We now combine the above two calculations to obtain the desired result \eqref{dLP}. 
\end{proof}

\begin{remark}
Let $ { \mathcal I } $ be a multi-index with $ | { \mathcal I } | = n \geq 0 $ and $ P $ be a polynomial of the quantities in \eqref{Pcoeff}. Then, from the proof of Lemma \ref{lem L}, one can deduce that 
\begin{align}
\partial^{ \mathcal I }_p P = ( q^0 )^n { \mathfrak L }^{ ( n ) }_P , \label{dP} 
\end{align}
for some $ { \mathfrak L }^{ ( n ) }_P $. Moreover, the calculation \eqref{d1/p} shows that for any $ { \mathcal I } $ with $ | { \mathcal I } | = n \geq 0 $, 
\[
\partial^{ { \mathcal I } }_p \left( \frac{ 1 }{ p^0 } \right) = \frac{ 1 }{ ( p^0 )^{ n + 1 } } { \mathfrak L }^{ ( n ) } , 
\]
which implies that
\begin{align}
\partial^{ { \mathcal I } }_p \left( \frac{ 1 }{ p^0 q^0 } \right) = \frac{ 1 }{ p^0 q^0 } { \mathfrak L }^{ ( n ) }_P , \label{d1/pq}
\end{align}
for some $ { \mathfrak L }^{ ( n ) }_P$. 
\end{remark}

\begin{lemma}\label{lem p/p^0}
Let $ { \mathcal I } $ be a multi-index with $ | { \mathcal I } | = n \geq 0 $. Then, we have
\[
\partial_p^{ \mathcal I } \bigg( \frac{ p^a }{ p^0 } \bigg) = \frac{ 1 }{ ( p^0 )^n } { \mathfrak L }^{ ( n + 1 ) } ,
\]
for some $ { \mathfrak L }^{ ( n + 1 ) } $. 
\end{lemma}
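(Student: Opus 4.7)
The plan is to prove the claim by induction on $n = |\mathcal{I}|$, leaning almost entirely on Lemma \ref{lem L} together with the observation that $p^a / p^0$ is itself an instance of $\mathfrak{L}^{(1)}$.

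For the base case $n = 0$, there is nothing to show: by the very definition of $\mathfrak{L}^{(n)}$ the expression $p^a / p^0$ is a permissible $\mathfrak{L}^{(1)}$, and $1/(p^0)^0 = 1$.

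For the inductive step, assume the identity holds for every multi-index of length $n$, and pick a derivative $\partial_{p_b}$. Then by the Leibniz rule,
\[
\partial_{p_b} \!\left( \frac{1}{(p^0)^n} \mathfrak{L}^{(n+1)} \right)
= \left( \partial_{p_b} \frac{1}{(p^0)^n} \right) \mathfrak{L}^{(n+1)} + \frac{1}{(p^0)^n} \, \partial_{p_b} \mathfrak{L}^{(n+1)} .
\]
A direct computation gives
\[
\partial_{p_b} \frac{1}{(p^0)^n} = -\frac{n \, p^b}{(p^0)^{n+2}} = \frac{1}{(p^0)^{n+1}} \cdot \left( -n \, \frac{p^b}{p^0} \right),
\]
so the first summand has the form $\frac{1}{(p^0)^{n+1}} \mathfrak{L}^{(1)} \mathfrak{L}^{(n+1)} = \frac{1}{(p^0)^{n+1}} \mathfrak{L}^{(n+2)}$, using the elementary fact (noted at the start of the proof of Lemma \ref{lem L}) that $\mathfrak{L}^{(m)} \mathfrak{L}^{(n)} = \mathfrak{L}^{(m+n)}$. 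For the second summand, Lemma \ref{lem L} gives $\partial_{p_b} \mathfrak{L}^{(n+1)} = \frac{1}{p^0} \mathfrak{L}^{(n+2)}$, so this summand is likewise $\frac{1}{(p^0)^{n+1}} \mathfrak{L}^{(n+2)}$. Adding the two and using that a sum of two $\mathfrak{L}^{(n+2)}$'s is again an $\mathfrak{L}^{(n+2)}$ completes the induction.

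There is no genuine obstacle: the whole content is bookkeeping of the weight $n$ under differentiation, and Lemma \ref{lem L} already did the nontrivial work. The only point to be mildly careful about is that the factor $(p^0)^{-n}$ picks up one extra power of $1/p^0$ when differentiated, and this power must be absorbed into the $\mathfrak{L}^{(n+2)}$ via the identification $p^b / p^0 \in \mathfrak{L}^{(1)}$; otherwise one would naively get a spurious $(p^0)^{-(n+2)}$.
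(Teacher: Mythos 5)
Your proof is correct and follows essentially the same route as the paper: induction on $n$, with the base case being the observation that $p^a/p^0 \in \mathfrak{L}^{(1)}$, and the inductive step combining the product rule, the computation $\partial_{p_b}(p^0)^{-n} = -n\,p^b (p^0)^{-(n+2)}$ with the factor $p^b/p^0$ absorbed into an $\mathfrak{L}^{(n+2)}$, and Lemma~\ref{lem L} to differentiate $\mathfrak{L}^{(n+1)}$. The paper presents the same computation more tersely and simply states that the result follows by induction.
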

\begin{proof}
By a direct calculation using \eqref{dL}, we have
\begin{align*}
\partial_{ p_b } \left( \frac{ 1 }{ ( p^0 )^n } { \mathfrak L }^{ ( n + 1 ) } \right) = \frac{ - n p^b }{ ( p^0 )^{ n + 2 } } { \mathfrak L }^{ ( n + 1 ) } + \frac{ 1 }{ ( p^0 )^{ n + 1 } } { \mathfrak L }^{ ( n + 2 ) } = \frac{ 1 }{ ( p^0 )^{ n + 1 } } { \mathfrak L}^{ ( n + 2 ) } , 
\end{align*}
for some different $ { \mathfrak L}^{ ( n + 2 ) } $.  We now obtain the desired result by an induction. 
\end{proof}

\begin{lemma} \label{lem 1/s}
Let $ { \mathcal I } $ be a multi-index with $ | { \mathcal I } | = n \geq 1 $. Then, we have
\[
\partial^{ \mathcal I }_p \left( \frac{ 1 }{ \sqrt{ s } } \right) = \frac{ ( q^0 )^n }{ s^{ \frac32 } } { \mathfrak L }^{ ( n ) }_P ,  
\]
for some $ { \mathfrak L }^{ ( n ) }_P $. 
\end{lemma}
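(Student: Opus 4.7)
I would prove this by induction on $n = |\mathcal{I}|$. The base case $n=1$ is an immediate rewriting of the calculation \eqref{d1/s} in the proof of Lemma \ref{lem L}, which already produced $\partial_{p_a}(1/\sqrt{s}) = (q^0/s^{3/2})\, \mathfrak{L}^{(1)}$; since any $\mathfrak{L}^{(1)}$ with constant coefficients is a special case of $\mathfrak{L}^{(1)}_P$ (taking $P$ constant), this matches the claimed form with $n=1$.

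For the inductive step, I would assume $\partial^{\mathcal{I}}_p(1/\sqrt{s}) = (q^0)^n s^{-3/2}\, \mathfrak{L}^{(n)}_P$ for every multi-index with $|\mathcal{I}| = n$, and apply one further derivative $\partial_{p_b}$. Since $q^0 = \sqrt{1 + g^{cd} q_c q_d}$ does not depend on $p$, the factor $(q^0)^n$ pulls outside the derivative and Leibniz reduces the task to differentiating $s^{-3/2} \mathfrak{L}^{(n)}_P$. For the first piece I would compute $\partial_{p_b} s = 2 q^0(p^b/p^0 - q^b/q^0) = 2 q^0\, \mathfrak{L}^{(1)}$ directly from $s = 2 + 2 p^0 q^0 - 2 p_a q^a$, whence $\partial_{p_b}(s^{-3/2}) = -3 q^0 s^{-5/2}\, \mathfrak{L}^{(1)}$. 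The surplus $1/s$ is a polynomial in $1/\sqrt{s}$ and hence allowed inside a $P$, so after multiplying by $\mathfrak{L}^{(n)}_P$ and using the rule $\mathfrak{L}^{(1)} \mathfrak{L}^{(n)}_P = \mathfrak{L}^{(n+1)}_P$ established in the proof of Lemma \ref{lem L}, the contribution acquires the form $(q^0/s^{3/2})\, \mathfrak{L}^{(n+1)}_P$. The second piece $s^{-3/2}\, \partial_{p_b}\mathfrak{L}^{(n)}_P$ is handled directly by \eqref{dLP}, yielding $s^{-3/2}\, q^0\, \mathfrak{L}^{(n+1)}_P$, already of the required form. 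Summing the two contributions and restoring the prefactor $(q^0)^n$ yields $(q^0)^{n+1} s^{-3/2}\, \mathfrak{L}^{(n+1)}_P$, closing the induction.

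The main obstacle is bookkeeping rather than analysis: each application of the chain rule introduces additional factors of $1/s$, $1/p^0$, or $1/q^0$, and I have to verify that these can always be absorbed into the polynomial coefficient $P$ without increasing the $\mathfrak{L}$-index or escaping the allowed class of coefficients. This is guaranteed by the boundedness of the four quantities in \eqref{Pcoeff}---in particular by $1/\sqrt{s} \leq 1/2$ and therefore $1/s \leq 1/4$---so that every nonnegative integer power of them remains an admissible entry of $P$.
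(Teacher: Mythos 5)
Your proof is correct and follows exactly the same inductive strategy as the paper: the same base case computed via $\partial_{p_a}s = q^0\,\mathfrak{L}^{(1)}$, the same Leibniz split of $s^{-3/2}\,\mathfrak{L}^{(n)}_P$ in the inductive step, and the same absorption of the surplus $1/s$ into the polynomial coefficient $P$. Pulling the $p$-independent prefactor $(q^0)^n$ outside before differentiating is a cosmetic simplification, not a different route.
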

\begin{proof}
Recall that $ s = 2 + 2 p^0 q^0 - 2 p_a q^a $. By a direct calculation, we have
\begin{align}
\partial_{ p_a } s = 2 \frac{ p^a }{ p^0 } q^0 - 2 q^a = q^0 { \mathfrak L }^{ ( 1 ) } , \label{ds} 
\end{align}
for some $ { \mathfrak L }^{ ( 1 ) } $. Hence, we obtain
\begin{align*}
\partial_{ p_a } \left( \frac{ 1 }{ \sqrt{ s } } \right) = \frac{ q^0 }{ s^{ \frac32 } } { \mathfrak L }^{ ( 1 ) } , %\label{d1/s}
\end{align*}
for some different $ { \mathfrak L }^{ ( 1 ) } $. This shows that the lemma holds for $ n = 1 $. Suppose that the lemma holds for $ | \mathcal I | = n \geq 1 $, and compute 
\begin{align*}
\partial_{ p_a } \left( \frac{ ( q^0 )^n }{ s^{ \frac32 } } { \mathfrak L }^{ ( n ) }_P \right) & = ( q^0 )^n \partial_{ p_a } \left( \frac{ 1 }{ s^{ \frac32 } } \right) { \mathfrak L }^{ ( n ) }_P + \frac{ ( q^0 )^n }{ s^{ \frac32 } } \partial_{ p_a } { \mathfrak L }^{ ( n ) }_P \\
& = ( q^0 )^n \left( - \frac{ 3 }{ 2 s^{ \frac52 } } q^0 { \mathfrak L }^{ ( 1 ) } \right) { \mathfrak L }^{ ( n ) }_P + \frac{ ( q^0 )^{ n + 1 } }{ s^{ \frac32 } } { \mathfrak L }^{ ( n + 1 ) }_P , 
\end{align*}
where we used \eqref{ds} and \eqref{dLP}. Since we can write
\[
\frac{ 1 }{ s^{ \frac52 } } { \mathfrak L }^{ ( 1 ) } = \frac{ 1 }{ s^{ \frac32 } } { \mathfrak L }^{ ( 1 ) }_P , 
\]
we obtain the desired result for $ | { \mathcal I } | = n + 1 $. This completes the proof of the lemma. 
\end{proof}

\begin{lemma} \label{lem 1/ns}
Let $ { \mathcal I } $ be a multi-index with $ | { \mathcal I } | = n \geq 1 $. Then, we have
\[
\partial^{ \mathcal I }_p \left( \frac{ 1 }{ n^0 + \sqrt{ s } } \right) = \frac{ ( q^0 )^n }{ ( n^0 + \sqrt{ s } )^2 } { \mathfrak L }^{ ( n ) }_P , 
\]
for some $ { \mathfrak L }^{ ( n ) }_P $. 
\end{lemma}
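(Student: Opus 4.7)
The plan is to proceed by induction on $n$, mirroring the strategy already used in the proof of Lemma \ref{lem 1/s}. The base case $ n = 1 $ is precisely the computation \eqref{d1/ns} performed in the proof of Lemma \ref{lem L}, which gives
\[
\partial_{ p_a } \left( \frac{ 1 }{ n^0 + \sqrt{ s } } \right) = \frac{ q^0 }{ ( n^0 + \sqrt{ s } )^2 } \mathfrak{ L }^{ ( 1 ) }_P .
\]

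Assuming the claim holds for a multi-index of length $ n \geq 1 $, I would differentiate once more and apply the product rule, using $\partial_{p_a} q^0 = 0$:
\[
\partial_{ p_a } \left( \frac{ ( q^0 )^n }{ ( n^0 + \sqrt{ s } )^2 } \mathfrak{ L }^{ ( n ) }_P \right) = ( q^0 )^n \partial_{ p_a } \left( \frac{ 1 }{ ( n^0 + \sqrt{ s } )^2 } \right) \mathfrak{ L }^{ ( n ) }_P + \frac{ ( q^0 )^n }{ ( n^0 + \sqrt{ s } )^2 } \partial_{ p_a } \mathfrak{ L }^{ ( n ) }_P .
\]
The second term is immediately handled by \eqref{dLP} of Lemma \ref{lem L}, producing $ ( q^0 )^{ n + 1 } / ( n^0 + \sqrt{ s } )^2 \cdot \mathfrak{ L }^{ ( n + 1 ) }_P $. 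For the first term I would compute directly
\[
\partial_{ p_a } \left( \frac{ 1 }{ ( n^0 + \sqrt{ s } )^2 } \right) = - \frac{ 2 }{ ( n^0 + \sqrt{ s } )^3 } \partial_{ p_a } ( n^0 + \sqrt{ s } ) ,
\]
and show that $ \partial_{ p_a } ( n^0 + \sqrt{ s } ) = q^0 \mathfrak{ L }^{ ( 1 ) }_P $. This splits as $ p^a / p^0 + \partial_{ p_a } \sqrt{ s } $; the first piece equals $ q^0 \cdot ( 1 / q^0 ) \cdot p^a / p^0 = q^0 \mathfrak{ L }^{ ( 1 ) }_P $, and the second piece equals $ ( 1 / ( 2 \sqrt{ s } ) ) \cdot q^0 \mathfrak{ L }^{ ( 1 ) } = q^0 \mathfrak{ L }^{ ( 1 ) }_P $ by the calculation \eqref{ds}.

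The only subtle point, and what I expect to be the main bookkeeping obstacle, is that the first term now carries a factor $ 1 / ( n^0 + \sqrt{ s } )^3 $ rather than the required $ 1 / ( n^0 + \sqrt{ s } )^2 $. This is resolved exactly as in Lemma \ref{lem 1/s}: since $ 1 / ( n^0 + \sqrt{ s } ) $ is one of the bounded quantities listed in \eqref{Pcoeff}, the identity
\[
\frac{ 1 }{ ( n^0 + \sqrt{ s } )^3 } \mathfrak{ L }^{ ( 1 ) }_P \mathfrak{ L }^{ ( n ) }_P = \frac{ 1 }{ ( n^0 + \sqrt{ s } )^2 } \mathfrak{ L }^{ ( n + 1 ) }_P
\]
absorbs the extra factor into the polynomial coefficients of a new $ \mathfrak{ L }^{ ( n + 1 ) }_P $. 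Combining the two contributions yields the desired representation for $ | \mathcal{ I } | = n + 1 $, completing the induction.
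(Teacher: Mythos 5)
Your proposal is correct and follows essentially the same inductive argument as the paper: base case from \eqref{d1/ns}, then the product rule, using $\partial_{p_a}q^0 = 0$, $\partial_{p_a}(n^0+\sqrt{s}) = q^0\mathfrak{L}^{(1)}_P$, and $\partial_{p_a}\mathfrak{L}^{(n)}_P = q^0\mathfrak{L}^{(n+1)}_P$ from \eqref{dLP}, with the extra $1/(n^0+\sqrt{s})$ factor absorbed into the polynomial coefficients exactly as you describe. The only cosmetic difference is that the paper writes out $\partial_{p_a}(n^0+\sqrt{s})$ explicitly in coordinates rather than treating it as a unit.
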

\begin{proof}
We have already showed that in \eqref{d1/ns} the lemma holds for $ n = 1 $:
\[
\partial_{ p_a } \left( \frac{ 1 }{ n^0 + \sqrt{ s } } \right) = \frac{ q^0 }{ ( n^0 + \sqrt{ s } )^2 } { \frak L }^{ ( 1 ) }_P . 
\]
Suppose that the lemma holds for $ n \geq 1 $. Then, we obtain
\begin{align*}
\partial_{ p_a } \left( \frac{ ( q^0 )^n }{ ( n^0 + \sqrt{ s } )^2 } { \mathfrak L }^{ ( n ) }_P \right) & = \partial_{ p_a } \left( \frac{ ( q^0 )^n }{ ( n^0 + \sqrt{ s } )^2 } \right) { \mathfrak L }^{ ( n ) }_P + \frac{ ( q^0 )^n }{ ( n^0 + \sqrt{ s } )^2 } \partial_{ p_a } { \mathfrak L }^{ ( n ) }_P \\
& = \frac{ - 2 ( q^0 )^n }{ ( n^0 + \sqrt{ s } )^3 } \left( \frac{ p^a }{ p^0 } + \frac{ 1 }{ 2 \sqrt{ s } } \left( 2 \frac{ p^a }{ p^0 } q^0 - 2 q^a \right) \right) { \mathfrak L }^{ ( n ) }_P + \frac{ ( q^0 )^n }{ ( n^0 + \sqrt{ s } )^2 } \partial_{ p_a } { \mathfrak L }^{ ( n ) }_P \\
& = \frac{ ( q^0 )^{ n + 1 } }{ ( n^0 + \sqrt{ s } )^2 } { \mathfrak L }^{ ( n + 1 ) }_P , 
\end{align*}
where we used $ { \mathfrak L }^{ ( 1 ) }_P { \mathfrak L }^{ ( n ) }_P = { \mathfrak L }^{ ( n + 1 ) }_P $ and $ \partial_{ p_a } { \frak L }^{ ( n ) }_P = q^0 { \frak L }^{ ( n + 1 ) }_P $ in \eqref{dLP}. This completes the proof of the lemma. 
\end{proof}

\begin{remark}
Note that the estimates in Lemma \ref{lem 1/s} and \ref{lem 1/ns} can also be written as follows: for any multi-index $ { \mathcal I } $ with $ | { \mathcal I } | = n \geq 0 $, we have
\begin{align}
& \partial^{ \mathcal I }_p \left( \frac{ 1 }{ \sqrt{ s } } \right) = \frac{ ( q^0 )^n }{ \sqrt{ s } } { \mathfrak L }^{ ( n ) }_P , \label{rem 1/s} \\
& \partial^{ \mathcal I }_p \left( \frac{ 1 }{ n^0 + \sqrt{ s } } \right) = \frac{ ( q^0 )^n }{ n^0 + \sqrt{ s } } { \mathfrak L }^{ ( n ) }_P , \label{rem 1/ns}
\end{align}
for some different $ { \mathfrak L }^{ ( n ) }_P $. 
\end{remark}

\begin{lemma}\label{lem dp'}
Suppose that a metric $ g^{ a b } $ is given and satisfies \eqref{assumption1}. Then, the post-collision momenta $ p' $ and $ q' $ satisfy for any multi-index $ { \mathcal I } $ with $ | { \mathcal I } | = n \geq 1 $, 
\begin{align*}
& | \partial^{ \mathcal I }_p p' | \leq C \delta^n_1 + ( q^0 )^{ n + 8 } \left( \max_{ i , a } | { e^i }_a | \right)^2 Z^{ - n - 1 } { \mathcal C } , \\ 
& | \partial^{ \mathcal I }_p q' | \leq ( q^0 )^{ n + 8 } \left( \max_{ i , a } | { e^i }_a | \right)^2 Z^{ - n - 1 } { \mathcal C } , 
\end{align*}
where $ \delta^n_1 $ denotes the Kronecker delta. 
\end{lemma}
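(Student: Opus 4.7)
My strategy is to differentiate the explicit formulas \eqref{p'}--\eqref{q'} directly, apply Leibniz's rule together with the derivative identities established in Lemmas \ref{lem L}--\ref{lem 1/ns}, and then extract the powers of $Z$ and $q^0$ from the metric assumption \eqref{assumption1}. Write $p'_d = p_d + \Phi_d$ and $q'_d = q_d - \Phi_d$, where $\Phi_d = 2 A B_d$ with
\[
A = -q^0 \frac{n^a \omega_i {e^i}_a}{\sqrt{s}} + q^a \omega_i {e^i}_a + \frac{n^a \omega_i {e^i}_a n_b q^b}{\sqrt{s}(n^0+\sqrt{s})}, \qquad B_d = \omega_j {e^j}_d + \frac{n^c \omega_j {e^j}_c n_d}{\sqrt{s}(n^0+\sqrt{s})}.
\]
Since $\partial_p^{\mathcal{I}} p_d$ equals a Kronecker delta for $|\mathcal{I}|=1$ and vanishes otherwise, while $\partial_p^{\mathcal{I}} q_d = 0$ for all $|\mathcal{I}| \geq 1$, this leading piece yields the $C\delta^n_1$ contribution in the bound for $p'$ and contributes nothing to the bound for $q'$. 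It therefore remains to bound $|\partial_p^{\mathcal{I}} \Phi_d|$, and this bound is symmetric between $p'$ and $q'$.

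Expanding $\partial_p^{\mathcal{I}}(A B_d)$ by the Leibniz rule reduces the task to controlling derivatives of the constituent factors: the constants $\omega_i {e^i}_a$ (each bounded by $\max_{i,a}|{e^i}_a|$), the quantities $q^0$, $q^a$, $q_d$ (independent of $p$), the linear combinations $n^a = p^a + q^a$ and $n_b q^b = (p_b + q_b)q^b$, and the scalars $1/\sqrt{s}$ and $1/(n^0+\sqrt{s})$. Lemmas \ref{lem p/p^0}, \ref{lem 1/s}, \ref{lem 1/ns}, together with the reformulations \eqref{rem 1/s}--\eqref{rem 1/ns} and the identity \eqref{dP}, express each such derivative in a packaged form containing one power of $q^0$ per order of differentiation, multiplied by an $\mathfrak{L}^{(\cdot)}_P$ or $\mathfrak{L}^{(\cdot)}$ factor of matching index.

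Every product term in the Leibniz expansion therefore assembles into an expression of the form $(q^0)^{n+r}(\max_{i,a}|{e^i}_a|)^2\,\mathfrak{L}^{(m)}_P$ for nonnegative integers $r$ and $m$, the exponent $r$ coming from the finite number of $q^0$-bearing factors intrinsic to $\Phi_d$ (the explicit $q^0$, the four $n$-type objects $n^a$, $n^c$, $n_b q^b$, $n_d$, and the two scalar denominators $1/\sqrt{s}$, $1/(n^0+\sqrt{s})$), and a careful count shows $r \leq 8$ throughout. Invoking the metric assumption \eqref{assumption1}, which gives $|g^{ab}| \leq c_1 Z^{-2}$ and $|p^a/p^0|,\,|q^a/q^0| \leq C Z^{-1}$, we have $|\mathfrak{L}^{(m)}| \leq C Z^{-m}$ while each $\mathfrak{L}^{(m)}_P$ remains uniformly bounded; repackaging the resulting sum into the $\mathcal{C}$ notation then produces exactly the claimed bound $(q^0)^{n+8}(\max|{e^i}_a|)^2 Z^{-n-1}\mathcal{C}$.

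The conceptual content is straightforward, but the technical obstacle lies in the bookkeeping: one must organize the six summands of $A B_d$ and all their Leibniz-expanded derivatives into the compact $\mathfrak{L}^{(n)}$, $\mathfrak{L}_P^{(n)}$, $\mathcal{C}$ framework without losing track of either the number of $g^{ab}$ contractions (which determines the $Z^{-n-1}$ scaling) or the number of $q^0$ factors emitted during successive differentiations (which fixes the $+8$ offset). Once this accounting is performed term by term using the previously established lemmas, the stated inequalities follow immediately.
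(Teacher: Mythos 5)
Your strategy---differentiate the explicit formulas, expand by Leibniz, invoke the derivative identities of Lemmas \ref{lem L}--\ref{lem 1/ns}, and read off the $Z$- and $q^0$-scaling from \eqref{assumption1}---is the right one and matches the paper's proof in its overall architecture. The leading-term analysis ($p_d$ contributes $C\delta^n_1$, $q_d$ contributes nothing) is also correct.

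There is, however, a genuine gap in the central bookkeeping claim that ``every product term in the Leibniz expansion therefore assembles into an expression of the form $(q^0)^{n+r}(\max_{i,a}|{e^i}_a|)^2\,\mathfrak{L}^{(m)}_P$.'' The $n$-type factors you list ($n^a$, $n^c$, $n_b q^b$, $n_d$) are not merely ``$q^0$-bearing'': since $n_\alpha = p_\alpha + q_\alpha$, the available estimates are $|n^a|\leq p^0 q^0 Z^{-1}\mathcal{C}$ (cf.\ \eqref{n^a}) and $|n_b|\leq C Z p^0 q^0$ (cf.\ \eqref{n_b}), each of which carries a \emph{positive} power of $p^0$. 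If one treats $1/\sqrt{s}$ and $1/(n^0+\sqrt{s})$ merely as bounded $P$-type coefficients---as the $\mathfrak{L}_P$ framework suggests and as your sketch implicitly does---then, e.g., $n^a n_b/(\sqrt{s}(n^0+\sqrt{s}))$ retains a factor of order $(p^0)^2$, and the claimed form with a \emph{bounded} $\mathfrak{L}^{(m)}_P$ prefactor fails. This matters downstream: any surviving positive power of $p^0$ in $|\partial^{\mathcal I}_p p'|$ would compound multiplicatively through the Faa di Bruno expansion leading to \eqref{est gain 2} and overwhelm the weight $\langle p\rangle^{2m}$ in the gain-term integral. The missing ingredient is the inequality $s\geq\max(p^0/q^0,\,q^0/p^0)$ from Lemma 3 of \cite{LN171}, recorded here as \eqref{sp^0}, which converts $1/\sqrt{s}$ into a factor $\sqrt{q^0/p^0}$, together with the bound $n^0+\sqrt{s}\geq p^0$. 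Only by estimating the \emph{ratios} $n^a/\sqrt{s}$ and $n_b/(n^0+\sqrt{s})$ as indivisible units---yielding \eqref{n/s} and \eqref{n/n^0s} with residual half-powers $\sqrt{p^0}$ and $1/\sqrt{p^0}$ that cancel across the Leibniz products---does one obtain a $p^0$-free bound. Relatedly, the assertion that ``a careful count shows $r\leq 8$'' is left unverified; in the paper this exponent emerges from the explicit chain of intermediate estimates \eqref{n/s}, \eqref{n/n^0s}, \eqref{nn/sn^0s} and \eqref{p'est1}--\eqref{p'est8}, and cannot be read off without performing that computation.
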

\begin{proof}
Recall that the post-collision momenta are given by \eqref{p'} and \eqref{q'}. Note that $ { e^i }_a $ are the components of $ { \bf e }^i $ with respect to $ { \bf W }^a $, which depends only on $ t $, and $ \omega_i \in \bbs^2 $ is an independent variable. Hence, concerning $ p $-derivatives of $ p' $ and $ q' $, we only need to estimate the following quantities:
\[
\frac{ n^a }{ \sqrt{ s } } , \qquad \frac{ n_b }{ n^0 + \sqrt{ s } } . 
\]
We note that $ n^a $ and $ n_b $ in the above quantities are estimated as follows. Since we can write 
\begin{align*}
n^a = p^0 q^0 \left( \frac{ 1 }{ q^0 } \frac{ p^a }{ p^0 } + \frac{ 1 }{ p^0 } \frac{ q^a }{ q^0 } \right) = p^0 q^0 { \mathfrak L }^{ ( 1 ) }_P , 
\end{align*}
we obtain 
\begin{align}
| n^a | \leq p^0 q^0 Z^{ - 1 } { \mathcal C } , \label{n^a} 
\end{align}
for some $ { \mathcal C } $. We also have 
\begin{align}
| n_b | \leq | p | + | q | \leq \sqrt{ c_1 } Z ( p^0 + q^0 ) \leq C Z p^0 q^0 , \label{n_b} 
\end{align}
where we used the assumption \eqref{assumption1} such that $ | p | \leq \sqrt{ c_1 } Z p^0 $. We also need the following property: for any $ p , q \in \bbr^3 $, 
\begin{align}
s \geq \max \left( \frac{ p^0 }{ q^0 } , \frac{ q^0 }{ p^0 } \right) , \label{sp^0}
\end{align}
which can be found in Lemma 3 of Ref.~\cite{LN171}. We first consider the following quantity:
\[
\partial^{ \mathcal I }_p \left( \frac{ n^a }{ \sqrt{ s } } \right) . 
\]
If $ | \mathcal I | = 0 $, we use \eqref{n^a} and \eqref{sp^0} to obtain 
\[
\left| \frac{ n^a }{ \sqrt{ s } } \right| \leq p^0 q^0 Z^{ - 1 } { \mathcal C } \sqrt{ \frac{ q^0 }{ p^0 } } \leq \sqrt{ p^0 } ( q^0 )^{ \frac32 } Z^{ - 1 } { \mathcal C } . 
\]
If $ | \mathcal I | = n \geq 1 $, we write 
\begin{align*}
\partial^{ \mathcal I }_p \left( \frac{ n^a }{ \sqrt{ s } } \right) & = \sum_{ { \mathcal I }_1 \leq { \mathcal I } } { { \mathcal I } \choose { \mathcal I }_1 } ( \partial^{ { \mathcal I }_1 }_p n^a ) \partial^{ { \mathcal I } - { \mathcal I }_1 }_p \left( \frac{ 1 }{ \sqrt{ s } } \right) \\
& = n^a \partial^{ { \mathcal I } }_p \left( \frac{ 1 }{ \sqrt{ s } } \right) + \sum_{ \substack{ { \mathcal I }_1 \leq { \mathcal I } \\ | { \mathcal I }_1 | = 1 } } { { \mathcal I } \choose { \mathcal I }_1 } ( \partial^{ { \mathcal I }_1 }_p n^a ) \partial^{ { \mathcal I } - { \mathcal I }_1 }_p \left( \frac{ 1 }{ \sqrt{ s } } \right) ,
\end{align*}
where the first quantity is estimated as
\begin{align*}
\left| n^a \partial^{ { \mathcal I } }_p \left( \frac{ 1 }{ \sqrt{ s } } \right) \right| & \leq p^0 q^0 Z^{ - 1 } { \mathcal C } \frac{ ( q^0 )^n }{ s^{ \frac32 } } | { \mathfrak L }^{ ( n ) }_P | \leq \frac{ ( q^0 )^{ n + \frac52 } }{ \sqrt{ p^0 } } Z^{ - n - 1 } { \mathcal C } . 
\end{align*}
Note that $ \partial^{ { \mathcal I }_1 }_p n^a = { \mathfrak L }^{ ( 2 ) } $ for $ | { \mathcal I }_1 | = 1 $. We use \eqref{rem 1/s} and \eqref{sp^0} for the second quantity to obtain 
\begin{align*}
\left| \sum_{ \substack{ { \mathcal I }_1 \leq { \mathcal I } \\ | { \mathcal I }_1 | = 1 } } { { \mathcal I } \choose { \mathcal I }_1 } ( \partial^{ { \mathcal I }_1 }_p n^a ) \partial^{ { \mathcal I } - { \mathcal I }_1 }_p \left( \frac{ 1 }{ \sqrt{ s } } \right) \right| \leq C | { \mathfrak L }^{ ( 2 ) } | \frac{ ( q^0 )^{ n - 1 } }{ \sqrt{ s } } | { \mathfrak L }^{ ( n - 1 ) }_P | \leq \frac{ ( q^0 )^{ n - \frac12 } }{ \sqrt{ p^0 } } Z^{ - n - 1 } { \mathcal C } , 
\end{align*}
where the constant $ C $ is absorbed into $ { \mathcal C } $. Next, we consider the following quantity:
\[
\partial^{ \mathcal I }_p \left( \frac{ n_b }{ n^0 + \sqrt{ s } } \right) . 
\]
If $ | { \mathcal I } | = 0 $, we use \eqref{n_b} to obtain 
\[
\left| \frac{ n_b }{ n^0 + \sqrt{ s } } \right| \leq \frac{ C Z p^0 q^0 }{ n^0 + \sqrt{ s } } \leq C Z q^0 . 
\]
If $ | \mathcal I | = n \geq 1 $, we write 
\begin{align*}
\partial^{ \mathcal I }_p \left( \frac{ n_b }{ n^0 + \sqrt{ s } } \right) & = \sum_{ { \mathcal I }_1 \leq { \mathcal I } } { { \mathcal I } \choose { \mathcal I }_1 } ( \partial^{ { \mathcal I }_1 }_p n_b ) \partial^{ { \mathcal I } - { \mathcal I }_1 }_p \left( \frac{ 1 }{ n^0 + \sqrt{ s } } \right) \\
& = n_b \partial^{ { \mathcal I } }_p \left( \frac{ 1 }{ n^0 + \sqrt{ s } } \right) + \sum_{ \substack{ { \mathcal I }_1 \leq { \mathcal I } \\ | { \mathcal I }_1 | = 1 } } { { \mathcal I } \choose { \mathcal I }_1 } ( \partial^{ { \mathcal I }_1 }_p n_b ) \partial^{ { \mathcal I } - { \mathcal I }_1 }_p \left( \frac{ 1 }{ n^0 + \sqrt{ s } } \right) ,
\end{align*}
where the first quantity is estimated as
\[
\left| n_b \partial^{ { \mathcal I } }_p \left( \frac{ 1 }{ n^0 + \sqrt{ s } } \right) \right| \leq C Z p^0 q^0 \frac{ ( q^0 )^n }{ ( n^0 + \sqrt{ s } )^2 } | { \mathfrak L }^{ ( n ) }_P | \leq \frac{ ( q^0 )^{ n + 1 } }{ p^0 } Z^{ - n + 1 } { \mathcal C } . 
\]
For the second quantity we use \eqref{rem 1/ns} to obtain 
\[
\left| \sum_{ \substack{ { \mathcal I }_1 \leq { \mathcal I } \\ | { \mathcal I }_1 | = 1 } } { { \mathcal I } \choose { \mathcal I }_1 } ( \partial^{ { \mathcal I }_1 }_p n_b ) \partial^{ { \mathcal I } - { \mathcal I }_1 }_p \left( \frac{ 1 }{ n^0 + \sqrt{ s } } \right) \right| \leq C \frac{ ( q^0 )^{ n - 1 } }{ n^0 + \sqrt{ s } } | { \mathfrak L }^{ ( n - 1 ) }_P | \leq \frac{ ( q^0 )^{ n - 1 } }{ p^0 } Z^{ - n + 1 } { \mathcal C } . 
\]
To summarize, we have obtained
\begin{align}
\left| \partial^{ \mathcal I }_p \left( \frac{ n^a }{ \sqrt{ s } } \right) \right| \leq 
\left\{
\begin{aligned}\label{n/s}
& \sqrt{ p^0 } ( q^0 )^{ \frac32 } Z^{ - 1 } { \mathcal C } , & & | { \mathcal I } | = 0 , \\
& \frac{ ( q^0 )^{ n + \frac52 } }{ \sqrt{ p^0 } } Z^{ - n - 1 } { \mathcal C } , & & | { \mathcal I } | = n \geq 1 , 
\end{aligned}
\right. 
\end{align}
and
\begin{align}\label{n/n^0s}
\left| \partial^{ \mathcal I }_p \left( \frac{ n_b }{ n^0 + \sqrt{ s } } \right) \right| \leq
\left\{
\begin{aligned}
& C Z q^0 , & & | { \mathcal I } | = 0 , \\
& \frac{ ( q^0 )^{ n + 1 } }{ p^0 } Z^{ - n + 1 } { \mathcal C } , & & | { \mathcal I } | = n \geq 1 . 
\end{aligned}
\right. 
\end{align}
We also need to estimate the following quantity, which can be easily estimated by using the above estimates \eqref{n/s} and \eqref{n/n^0s}: 
\[
\partial^{ \mathcal I }_p \left( \frac{ n^a n_b }{ \sqrt{ s } ( n^0 + \sqrt{ s } ) } \right) . 
\]
If $ | { \mathcal I } | = 0 $, we obtain 
\[
\left| \frac{ n^a n_b }{ \sqrt{ s } ( n^0 + \sqrt{ s } ) } \right| \leq \sqrt{ p^0 } ( q^0 )^{ \frac52 } { \mathcal C } . 
\]
If $ | { \mathcal I } | = n \geq 1 $, we have 
\begin{align*}
& \partial^{ \mathcal I }_p \left( \frac{ n^a n_b }{ \sqrt{ s } ( n^0 + \sqrt{ s } ) } \right) \\ 
& = \sum_{ { \mathcal I }_1 \leq { \mathcal I } } { { \mathcal I } \choose { \mathcal I }_1 } \partial^{ { \mathcal I }_1 }_p \left( \frac{ n^a }{ \sqrt{ s } } \right) \partial^{ { \mathcal I } - { \mathcal I }_1 }_p \left( \frac{ n_b }{ n^0 + \sqrt{ s } } \right) \\
& = \left( \frac{ n^a }{ \sqrt{ s } } \right) \partial^{ { \mathcal I } }_p \left( \frac{ n_b }{ n^0 + \sqrt{ s } } \right) + \sum_{ 0 < { \mathcal I }_1 < { \mathcal I } } { { \mathcal I } \choose { \mathcal I }_1 } \partial^{ { \mathcal I }_1 }_p \left( \frac{ n^a }{ \sqrt{ s } } \right) \partial^{ { \mathcal I } - { \mathcal I }_1 }_p \left( \frac{ n_b }{ n^0 + \sqrt{ s } } \right) + \partial^{ { \mathcal I } }_p \left( \frac{ n^a }{ \sqrt{ s } } \right) \left( \frac{ n_b }{ n^0 + \sqrt{ s } } \right) , 
\end{align*}
where the first and third quantities are estimated as
\begin{align*}
& \left| \left( \frac{ n^a }{ \sqrt{ s } } \right) \partial^{ { \mathcal I } }_p \left( \frac{ n_b }{ n^0 + \sqrt{ s } } \right) \right| \leq \frac{ ( q^0 )^{ n + \frac52 } }{ \sqrt{ p^0 } } Z^{ - n } { \mathcal C } , \\
& \left| \partial^{ { \mathcal I } }_p \left( \frac{ n^a }{ \sqrt{ s } } \right) \left( \frac{ n_b }{ n^0 + \sqrt{ s } } \right) \right| \leq \frac{ ( q^0 )^{ n + \frac72 } }{ \sqrt{ p^0 } } Z^{ - n } { \mathcal C } , 
\end{align*}
and the second quantity, which appears only for $ | { \mathcal I } | = n \geq 2 $, is estimated as follows:
\begin{align*}
& \left| \sum_{ 0 < { \mathcal I }_1 < { \mathcal I } } { { \mathcal I } \choose { \mathcal I }_1 } \partial^{ { \mathcal I }_1 }_p \left( \frac{ n^a }{ \sqrt{ s } } \right) \partial^{ { \mathcal I } - { \mathcal I }_1 }_p \left( \frac{ n_b }{ n^0 + \sqrt{ s } } \right) \right| \\
& \leq \sum_{ \substack{ l + m = n \\ l , m \neq 0 } } \frac{ ( q^0 )^{ l + \frac52 } }{ \sqrt{ p^0 } } Z^{ - l - 1 } { \mathcal C } \frac{ ( q^0 )^{ m + 1 } }{ p^0 } Z^{ - m + 1 } { \mathcal C } \\
& \leq \frac{ ( q^0 )^{ n + \frac72 } }{ ( p^0 )^{ \frac32 } } Z^{ - n } { \mathcal C } , 
\end{align*}
for some $ { \mathcal C } $. To summarize, we have 
\begin{align}\label{nn/sn^0s}
\left| \partial^{ \mathcal I }_p \left( \frac{ n^a n_b }{ \sqrt{ s } ( n^0 + \sqrt{ s } ) } \right) \right| \leq 
\left\{
\begin{aligned}
& \sqrt{ p^0 } ( q^0 )^{ \frac52 } { \mathcal C } , & & | { \mathcal I } | = 0 , \\
& \frac{ ( q^0 )^{ n + \frac72 } }{ \sqrt{ p^0 } } Z^{ - n } { \mathcal C } , & & | { \mathcal I } | = n \geq 1 . 
\end{aligned}
\right. 
\end{align}
Now, we can estimate the following quantity. We need the estimates \eqref{n/s} and \eqref{nn/sn^0s}: 
\[
\partial^{ \mathcal I }_p \left( - q^0 \frac{ n^a \omega_i { e^i }_a }{ \sqrt{ s } } + q^a \omega_i { e^i }_a + \frac{ n^a \omega_i { e^i }_a n_b q^b }{ \sqrt{ s } ( n^0 + \sqrt{ s } ) } \right) . 
\]
If $ | { \mathcal I } | = 0 $, we obtain
\begin{align}
& \left| - q^0 \frac{ n^a \omega_i { e^i }_a }{ \sqrt{ s } } + q^a \omega_i { e^i }_a + \frac{ n^a \omega_i { e^i }_a n_b q^b }{ \sqrt{ s } ( n^0 + \sqrt{ s } ) } \right| \nonumber \\ 
& \leq C \left( q^0 \sqrt{ p^0 } ( q^0 )^{ \frac32 } Z^{ - 1 } { \mathcal C } \left( \max_{ i , a } | { e^i }_a | \right) + q^0 | { \mathfrak L }^{ ( 1 ) } | \left( \max_{ i , a } | { e^i }_a | \right) + \sqrt{ p^0 } ( q^0 )^{ \frac52 } { \mathcal C } \left( \max_{ i , a } | { e^i }_a | \right) q^0 | { \mathfrak L }^{ ( 1 ) } | \right) \nonumber \\
& \leq \left( \max_{ i , a } | { e^i }_a | \right) \left( \sqrt{ p^0 } ( q^0 )^{ \frac52 } Z^{ - 1 } { \mathcal C } + q^0 | { \mathfrak L }^{ ( 1 ) } | + \sqrt{ p^0 } ( q^0 )^{ \frac72 } { \mathcal C } | { \mathfrak L }^{ ( 1 ) } | \right) \nonumber \\
& \leq \sqrt{ p^0 } ( q^0 )^{ \frac72 } \left( \max_{ i , a } | { e^i }_a | \right) Z^{ - 1 } { \mathcal C } . \label{p'est1} 
\end{align}
If $ | { \mathcal I } | = n \geq 1 $, we write 
\begin{align}
& \left| \partial^{ \mathcal I }_p \left( - q^0 \frac{ n^a \omega_i { e^i }_a }{ \sqrt{ s } } + q^a \omega_i { e^i }_a + \frac{ n^a \omega_i { e^i }_a n_b q^b }{ \sqrt{ s } ( n^0 + \sqrt{ s } ) } \right) \right| \nonumber \\ 
& = \left| - q^0 \partial^{ \mathcal I }_p \left( \frac{ n^a }{ \sqrt{ s } } \right) \omega_i { e^i }_a + \partial^{ \mathcal I }_p \left( \frac{ n^a n_b }{ \sqrt{ s } ( n^0 + \sqrt{ s } ) } \right) \omega_i { e^i }_a q^b \right| \nonumber \\ 
& \leq C \left( q^0 \frac{ ( q^0 )^{ n + \frac52 } }{ \sqrt{ p^0 } } Z^{ - n - 1 } { \mathcal C } \left( \max_{ i , a } | { e^i }_a | \right) + \frac{ ( q^0 )^{ n + \frac72 } }{ \sqrt{ p^0 } } Z^{ - n } { \mathcal C } \left( \max_{ i , a } | { e^i }_a | \right) q^0 | { \mathfrak L }^{ ( 1 ) } | \right) \nonumber \allowdisplaybreaks \\ 
& \leq \left( \max_{ i , a } | { e^i }_a | \right) \left( \frac{ ( q^0 )^{ n + \frac72 } }{ \sqrt{ p^0 } } Z^{ - n - 1 } { \mathcal C } + \frac{ ( q^0 )^{ n + \frac92 } }{ \sqrt{ p^0 } } Z^{ - n } { \mathcal C } | { \mathfrak L }^{ ( 1 ) } | \right) \nonumber \\
& \leq \frac{ ( q^0 )^{ n + \frac92 } }{ \sqrt{ p^0 } } \left( \max_{ i , a } | { e^i }_a | \right) Z^{ - n - 1 } { \mathcal C } . \label{p'est2} 
\end{align}
Similarly, we estimate the following quantity. Here, we only need the estimate \eqref{nn/sn^0s}:
\[
\partial^{ \mathcal I }_p \left( \omega_j { e^j }_d + \frac{ n^c \omega_j { e^j }_c n_d }{ \sqrt{ s } ( n^0 + \sqrt{ s } ) } \right) . 
\]
If $ | \mathcal I | = 0 $, then we have
\begin{align}
& \left| \omega_j { e^j }_d + \frac{ n^c \omega_j { e^j }_c n_d }{ \sqrt{ s } ( n^0 + \sqrt{ s } ) } \right| \nonumber \\
& \leq C \left( \left( \max_{ i , a } | { e^i }_a | \right) + \sqrt{ p^0 } ( q^0 )^{ \frac52 } { \mathcal C } \left( \max_{ i , a } | { e^i }_a | \right) \right) \nonumber \\
& \leq \sqrt{ p^0 } ( q^0 )^{ \frac52 } \left( \max_{ i , a } | { e^i }_a | \right) { \mathcal C } . \label{p'est3} 
\end{align}
If $ | { \mathcal I } | = n \geq 1 $, then we easily obtain 
\begin{align}
& \left| \partial^{ \mathcal I }_p \left( \omega_j { e^j }_d + \frac{ n^c \omega_j { e^j }_c n_d }{ \sqrt{ s } ( n^0 + \sqrt{ s } ) } \right) \right| \leq \frac{ ( q^0 )^{ n + \frac72 } }{ \sqrt{ p^0 } } \left( \max_{ i , a } | { e^i }_a | \right) Z^{ - n } { \mathcal C } . \label{p'est4} 
\end{align}
Now, we combine the above results to prove the lemma. Recall that the post-collision momenta are given by 
\begin{align*}
p'_d & = p_d + 2 \bigg( - q^0 \frac{ n^a \omega_i { e^i }_a }{ \sqrt{ s } } + q^a \omega_i { e^i }_a + \frac{ n^a \omega_i { e^i }_a n_b q^b }{ \sqrt{ s } ( n^0 + \sqrt{ s } ) } \bigg) \bigg( \omega_j { e^j }_d + \frac{ n^c \omega_j { e^j }_c n_d }{ \sqrt{ s } ( n^0 + \sqrt{ s } ) } \bigg) , \\
q'_d & = q_d - 2 \bigg( - q^0 \frac{ n^a \omega_i { e^i }_a }{ \sqrt{ s } } + q^a \omega_i { e^i }_a + \frac{ n^a \omega_i { e^i }_a n_b q^b }{ \sqrt{ s } ( n^0 + \sqrt{ s } ) } \bigg) \bigg( \omega_j { e^j }_d + \frac{ n^c \omega_j { e^j }_c n_d }{ \sqrt{ s } ( n^0 + \sqrt{ s } ) } \bigg) , 
\end{align*}
and let $ { \mathcal I } $ be a multi-index of order $ | \mathcal I | = n \geq 1 $. We first notice that
\begin{align}
| \partial^{ \mathcal I }_p p_d | \leq 
\left\{
\begin{aligned}
& C , & & | { \mathcal I } | = 1 , \\
& 0 , & & | { \mathcal I } | = n \geq 2 ,  
\end{aligned}
\right. \label{p'est5}
\end{align}
and $ \partial^{ \mathcal I }_p q_d = 0 $ for any $ | { \mathcal I } | = n \geq 1 $. Next, we consider the following quantity: 
\begin{align*}
& \partial^{ \mathcal I }_p \left( \bigg( - q^0 \frac{ n^a \omega_i { e^i }_a }{ \sqrt{ s } } + q^a \omega_i { e^i }_a + \frac{ n^a \omega_i { e^i }_a n_b q^b }{ \sqrt{ s } ( n^0 + \sqrt{ s } ) } \bigg) \bigg( \omega_j { e^j }_d + \frac{ n^c \omega_j { e^j }_c n_d }{ \sqrt{ s } ( n^0 + \sqrt{ s } ) } \bigg) \right) \\
& = \partial^{ \mathcal I }_p \bigg( - q^0 \frac{ n^a \omega_i { e^i }_a }{ \sqrt{ s } } + q^a \omega_i { e^i }_a + \frac{ n^a \omega_i { e^i }_a n_b q^b }{ \sqrt{ s } ( n^0 + \sqrt{ s } ) } \bigg) \bigg( \omega_j { e^j }_d + \frac{ n^c \omega_j { e^j }_c n_d }{ \sqrt{ s } ( n^0 + \sqrt{ s } ) } \bigg) \\
& \quad + \sum_{ 0 < { \mathcal I }_1 < { \mathcal I } } { { \mathcal I } \choose { \mathcal I }_1 } \partial^{ { \mathcal I }_1 }_p \left( - q^0 \frac{ n^a \omega_i { e^i }_a }{ \sqrt{ s } } + q^a \omega_i { e^i }_a + \frac{ n^a \omega_i { e^i }_a n_b q^b }{ \sqrt{ s } ( n^0 + \sqrt{ s } ) } \right) \partial^{ { \mathcal I } - { \mathcal I }_1 }_p \left( \omega_j { e^j }_d + \frac{ n^c \omega_j { e^j }_c n_d }{ \sqrt{ s } ( n^0 + \sqrt{ s } ) } \right) \\
& \quad + \bigg( - q^0 \frac{ n^a \omega_i { e^i }_a }{ \sqrt{ s } } + q^a \omega_i { e^i }_a + \frac{ n^a \omega_i { e^i }_a n_b q^b }{ \sqrt{ s } ( n^0 + \sqrt{ s } ) } \bigg) \partial^{ \mathcal I }_p \bigg( \omega_j { e^j }_d + \frac{ n^c \omega_j { e^j }_c n_d }{ \sqrt{ s } ( n^0 + \sqrt{ s } ) } \bigg) , 
\end{align*}
where the first and third terms are estimated as follows:
\begin{align}
& \left| \partial^{ \mathcal I }_p \bigg( - q^0 \frac{ n^a \omega_i { e^i }_a }{ \sqrt{ s } } + q^a \omega_i { e^i }_a + \frac{ n^a \omega_i { e^i }_a n_b q^b }{ \sqrt{ s } ( n^0 + \sqrt{ s } ) } \bigg) \bigg( \omega_j { e^j }_d + \frac{ n^c \omega_j { e^j }_c n_d }{ \sqrt{ s } ( n^0 + \sqrt{ s } ) } \bigg) \right| \nonumber \\
& \leq ( q^0 )^{ n + 7 } \left( \max_{ i , a } | { e^i }_a | \right)^2 Z^{ - n - 1 } { \mathcal C } , \label{p'est6}
\end{align}
where we used \eqref{p'est2} and \eqref{p'est3}, and 
\begin{align}
& \left| \bigg( - q^0 \frac{ n^a \omega_i { e^i }_a }{ \sqrt{ s } } + q^a \omega_i { e^i }_a + \frac{ n^a \omega_i { e^i }_a n_b q^b }{ \sqrt{ s } ( n^0 + \sqrt{ s } ) } \bigg) \partial^{ \mathcal I }_p \bigg( \omega_j { e^j }_d + \frac{ n^c \omega_j { e^j }_c n_d }{ \sqrt{ s } ( n^0 + \sqrt{ s } ) } \bigg) \right| \nonumber \\ 
& \leq ( q^0 )^{ n + 7 } \left( \max_{ i , a } | { e^i }_a | \right)^2 Z^{ - n - 1 } { \mathcal C } , \label{p'est7}
\end{align}
where we used \eqref{p'est1} and \eqref{p'est4}. The second term, which appears only for $ | { \mathcal I } | = n \geq 2 $, is estimated by using \eqref{p'est2} and \eqref{p'est4} as follows: 
\begin{align}
& \left| \sum_{ 0 < { \mathcal I }_1 < { \mathcal I } } { { \mathcal I } \choose { \mathcal I }_1 } \partial^{ { \mathcal I }_1 }_p \left( - q^0 \frac{ n^a \omega_i { e^i }_a }{ \sqrt{ s } } + q^a \omega_i { e^i }_a + \frac{ n^a \omega_i { e^i }_a n_b q^b }{ \sqrt{ s } ( n^0 + \sqrt{ s } ) } \right) \partial^{ { \mathcal I } - { \mathcal I }_1 }_p \left( \omega_j { e^j }_d + \frac{ n^c \omega_j { e^j }_c n_d }{ \sqrt{ s } ( n^0 + \sqrt{ s } ) } \right) \right| \nonumber \\
& \leq \frac{ ( q^0 )^{ n + 8 } }{ p^0 } \left( \max_{ i , a } | { e^i }_a | \right)^2 Z^{ - n - 1 } { \mathcal C } . \label{p'est8} 
\end{align}
We combine \eqref{p'est5}--\eqref{p'est8} to conclude that 
\[
| \partial^{ \mathcal I }_p p'_d | \leq 
\left\{
\begin{aligned}
& C + ( q^0 )^{ 8 } \left( \max_{ i , a } | { e^i }_a | \right)^2 Z^{ - 2 } { \mathcal C } , & & | { \mathcal I } | = 1 , \\
& ( q^0 )^{ n + 8 } \left( \max_{ i , a } | { e^i }_a | \right)^2 Z^{ - n - 1 } { \mathcal C } , & & | { \mathcal I } | = n \geq 2 , 
\end{aligned}
\right. 
\]
which implies the desired result for $ \partial^{ \mathcal I }_p p' $. The estimates for $ \partial^{ \mathcal I }_p q' $ are almost the same, and we skip the proof of it. This completes the proof of Lemma \ref{lem dp'}. 
\end{proof}

\begin{remark}
We observe that the $ 3 \times 3 $ matrix $ ( { e^i }_a )_{ i , a = 1 , 2 , 3 } $ defines a linear map: $ p_a = { \hat p }_i { e^i }_a $. Since the assumption \eqref{assumption1} implies 
\[
\frac{ | p |^2 }{ | { \hat p } |^2 } \leq \frac{ c_1 Z^2 g^{ a b } p_a p_b }{ | { \hat p } |^2 } = c_1 Z^2 , 
\]
the norm of the matrix $ ( { e^i }_a )_{ i , a = 1 , 2 , 3 } $, as an operator, is estimated by 
\[
\| ( { e^i }_a )_{ i , a = 1 , 2 , 3 } \| \leq \sqrt{ c_1 } Z . 
\]
Hence, we obtain
\begin{align}
\max_{ i , a } | { e^i }_a | \leq C Z , \label{eZ} 
\end{align}
where the constant $ C $ depends on $ c_1 $. Now, applying the estimate \eqref{eZ}, we may write Lemma \ref{lem dp'} as follows: for any multi-index $ { \mathcal I } $ with $ | { \mathcal I } | = n \geq 1 $, we have 
\begin{align}
| \partial^{ \mathcal I }_p p' | , | \partial^{ \mathcal I }_p q' | \leq ( q^0 )^{ n + 8 } Z^{ - n + 1 } { \mathcal C } , \label{dp'} 
\end{align}
where we notice that $ C \delta^n_1 = ( q^0 )^{ 9 } { \mathcal C } $ holds for $ n = 1 $. 
\end{remark}

\begin{remark}\label{rem ortho}
We remark that the orthonormal basis can be chosen as follows:
\begin{align*}
& ( { e^i }_a ) 
= \begin{pmatrix}
{ \bf e }^1 \\ 
{ \bf e }^2 \\ 
{ \bf e }^3 
\end{pmatrix}
= \begin{pmatrix}
\frac{ 1 }{ \sqrt{ g^{ 1 1 } } }  & 0 & 0 \\ 
\frac{ - g^{ 1 2 } }{ \sqrt{ g^{ 1 1 } ( g^{ 1 1 } g^{ 2 2 } - ( g^{ 1 2 } )^2 ) } } & \frac{ g^{ 1 1 } }{ \sqrt{ g^{ 1 1 } ( g^{ 1 1 } g^{ 2 2 } - ( g^{ 1 2 } )^2 ) } } & 0 \\
\frac{ g^{ 1 2 } g^{ 2 3 } - g^{ 1 3 } g^{ 2 2 } }{ \sqrt{ ( g^{ 1 1 } g^{ 2 2 } - ( g^{ 1 2 } )^2 ) ( \det g^{ - 1 } ) } } & \frac{ - g^{ 1 1 } g^{ 2 3 } + g^{ 1 2 } g^{ 1 3 } }{ \sqrt{ ( g^{ 1 1 } g^{ 2 2 } - ( g^{ 1 2 } )^2 ) ( \det g^{ - 1 } ) } } & \frac{ g^{ 1 1 } g^{ 2 2 } - ( g^{ 1 2 } )^2 }{ \sqrt{ ( g^{ 1 1 } g^{ 2 2 } - ( g^{ 1 2 } )^2 ) ( \det g^{ - 1 } ) } }
\end{pmatrix} , \\ 
& ( { e_i }^a ) 
= \begin{pmatrix}
{ \bf e }_1 & { \bf e }_2 & { \bf e }_3 
\end{pmatrix}
= \begin{pmatrix}
\frac{ g^{ 1 1 } }{ \sqrt{ g^{ 1 1 } } }  & 0 & 0 \\ 
\frac{ g^{ 1 2 } }{ \sqrt{ g^{ 1 1 } } } & \frac{ g^{ 1 1 } g^{ 2 2 } - ( g^{ 1 2 } )^2 }{ \sqrt{ g^{ 1 1 } ( g^{ 1 1 } g^{ 2 2 } - ( g^{ 1 2 } )^2 ) } } & 0 \\
\frac{ g^{ 1 3 } }{ \sqrt{ g^{ 1 1 } } } & \frac{ g^{ 1 1 } g^{ 2 3 } - g^{ 1 2 } g^{ 1 3 } }{ \sqrt{ g^{ 1 1 } ( g^{ 1 1 } g^{ 2 2 } - ( g^{ 1 2 } )^2 ) } } & \frac{ \det g^{ - 1 } }{ \sqrt{ ( g^{ 1 1 } g^{ 2 2 } - ( g^{ 1 2 } )^2 ) ( \det g^{ - 1 } ) } } 
\end{pmatrix} ,
\end{align*}
which satisfy $ { e^i }_a { e^j }_b g^{ a b } = \delta^{ i j } $ and $ { e_i }^a { e_j }^b g_{ a b } = \delta_{ i j } $. 
\end{remark}

We note that the post-collision momenta \eqref{p'} and \eqref{q'} depend on the metric $ g^{ a b } $ through the quantities $ q^0 $, $ n^a $, $ { e^i }_a $, $ \sqrt{ s } $, etc. In the following lemma, we estimate the derivatives of the post-collision momenta with respect to the metric $ g^{ a b } $. This will be necessary for the proof of uniqueness of solutions to the coupled EBs system in Proposition \ref{prop local}.

\begin{lemma}\label{lem dp'g}
Suppose that a metric $ g^{ a b } $ is given and satisfies 
\[
\frac{ 1 }{ C } | p |^2 \leq g^{ a b } p_a p_b \leq C | p |^2 , \qquad \max_{ a , b } | g^{ a b } | \leq C , \qquad \det g^{ - 1 } \geq \frac{ 1 }{ C } , 
\]
on an interval $ [ t_0 , T ] $. Then, the post-collision momenta $ p' $ and $ q' $ satisfy 
\[
\left| \frac{ \partial p_d' }{ \partial g^{ u v } } \right| + \left| \frac{ \partial q_d' }{ \partial g^{ u v } } \right| \leq C \langle p \rangle \langle q \rangle^4 , 
\]
on $ [ t_0 , T ] $. 
\end{lemma}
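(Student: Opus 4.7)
The plan is to differentiate the explicit formulas \eqref{p'} and \eqref{q'} with respect to $g^{uv}$ using the product and chain rules, together with uniform bounds on each constituent piece. The quantities entering $p'_d$ that depend on $g^{uv}$ are the energies $p^0, q^0$, the raised momenta $p^a, q^a, n^a$, the scalars $s$ and $n^0+\sqrt{s}$, and the orthonormal frame components ${e^i}_a$. Once bounds on these ingredients and on their first $g^{uv}$-derivatives are in hand, the claim follows by a direct book-keeping of powers of $\langle p\rangle$ and $\langle q\rangle$.

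First I would record the elementary building-block estimates. The identity $\partial p^0/\partial g^{uv}=p_up_v/(2p^0)$, the analogous one for $q^0$, the relation $\partial n^a/\partial g^{uv}=\tfrac12(\delta^a_u n_v+\delta^a_v n_u)$ (and similarly for $q^a$), and direct differentiation of $s=2+2p^0q^0-2g^{ab}p_aq_b$ all yield bounds of order $C\langle p\rangle\langle q\rangle$, using $p^0\leq C\langle p\rangle$, $q^0\leq C\langle q\rangle$, $|n^a|\leq C(\langle p\rangle+\langle q\rangle)$, $|n_b|\leq C(\langle p\rangle+\langle q\rangle)$ from the assumed bounds on $g^{ab}$. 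Since the unconditional lower bounds $s\geq 4$ and $n^0+\sqrt{s}\geq 2$ hold, the derivatives of $1/\sqrt{s}$ and $1/(n^0+\sqrt{s})$ inherit the same type of estimate.

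The main obstacle is to control ${e^i}_a$ and $\partial{e^i}_a/\partial g^{uv}$. Here I would invoke the explicit formula in Remark \ref{rem ortho}, which expresses the frame as a rational function of $g^{ab}$ with denominators $\sqrt{g^{11}}$, $\sqrt{g^{11}g^{22}-(g^{12})^2}$, and $\sqrt{\det g^{-1}}$. Positive definiteness of $g^{ab}$, combined with the hypotheses $g^{ab}p_ap_b\geq|p|^2/C$ and $\det g^{-1}\geq 1/C$, forces each of these denominators to be bounded below by a positive constant depending only on $C$: indeed $g^{11}\geq 1/C$ follows by testing on $p=(1,0,0)$, and the leading principal $2\times 2$ minor $g^{11}g^{22}-(g^{12})^2$ is a positive eigenvalue product of a uniformly positive definite matrix. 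Consequently ${e^i}_a$ and all of its first partial derivatives in $g^{uv}$ are bounded in absolute value by a constant depending only on $C$.

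Finally I would assemble $\partial p'_d/\partial g^{uv}$ by differentiating \eqref{p'} factor by factor. Writing $p'_d=p_d+2A\, B_d$ with $A$ and $B_d$ the first and second bracketed factors of \eqref{p'}, the product rule gives $\partial(AB_d)/\partial g^{uv}=(\partial A/\partial g^{uv})B_d+A(\partial B_d/\partial g^{uv})$; expanding each of these using the building-block bounds above and counting powers of $\langle p\rangle, \langle q\rangle$ (with $q^0$ and $n_b$ together producing at most $\langle q\rangle^4$ and the remaining $p$-dependence contributing a single $\langle p\rangle$ factor, parallel to the pattern in \eqref{p'est1}--\eqref{p'est4}) yields the desired estimate $|\partial p'_d/\partial g^{uv}|\leq C\langle p\rangle\langle q\rangle^4$. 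The estimate for $q'_d$ is then immediate from the conservation law \eqref{energy-momentum}: since $p'_d+q'_d=p_d+q_d$ is independent of $g^{uv}$, we have $\partial q'_d/\partial g^{uv}=-\partial p'_d/\partial g^{uv}$, and the same bound transfers at no cost.
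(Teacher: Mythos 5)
Your overall strategy matches the paper's: differentiate $p'_d=p_d+2AB_d$ term by term, bound the frame $ { e^i }_a $ and its $g^{uv}$-derivatives via Remark~\ref{rem ortho} together with uniform lower bounds on $g^{11}$, $g^{11}g^{22}-(g^{12})^2$ and $\det g^{-1}$, and assemble by the product rule. Your observation that the conservation law $p'_d+q'_d=p_d+q_d$ immediately gives $\partial q'_d/\partial g^{uv}=-\partial p'_d/\partial g^{uv}$ is a genuine simplification over the paper, which simply repeats the calculation for $q'$.

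However, there is a concrete gap in the power counting. You explicitly state that the ``unconditional lower bounds $s\geq 4$ and $n^0+\sqrt{s}\geq 2$'' are all that is needed to control the derivatives of $1/\sqrt{s}$ and $1/(n^0+\sqrt{s})$. This is not sufficient to reach $\langle p\rangle\langle q\rangle^4$. Using only $s\geq 4$, the term $q^0\,n^a\,\partial_{g^{uv}}(1/\sqrt{s})$ in $\partial A/\partial g^{uv}$ is controlled by
\[
\langle q\rangle\,(\langle p\rangle+\langle q\rangle)\,\frac{\langle p\rangle\langle q\rangle}{s^{3/2}}\lesssim \langle p\rangle^{2}\langle q\rangle^{2},
\]
and upon multiplying by $|B_d|\lesssim\langle p\rangle+\langle q\rangle$ (the crude bound without \eqref{sp^0}) one obtains $\langle p\rangle^{3}\langle q\rangle^{2}$, which fails to be bounded by $\langle p\rangle\langle q\rangle^{4}$ when $\langle p\rangle\gg\langle q\rangle$. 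The paper avoids this precisely by invoking \eqref{sp^0}, i.e.\ $s\geq\max(p^0/q^0,\,q^0/p^0)$, which turns each extra factor of $1/\sqrt{s}$ into a factor of $\sqrt{q^0/p^0}$ and thereby trades excess powers of $\langle p\rangle$ for powers of $\langle q\rangle$. This is the step that produces the sharp bounds $|A|\lesssim\langle p\rangle^{1/2}\langle q\rangle^{3/2}$, $|\partial A/\partial g^{uv}|\lesssim\langle p\rangle^{1/2}\langle q\rangle^{7/2}$, $|B_d|\lesssim\langle p\rangle^{1/2}\langle q\rangle^{1/2}$, $|\partial B_d/\partial g^{uv}|\lesssim\langle p\rangle^{1/2}\langle q\rangle^{5/2}$, whose combination gives $\langle p\rangle\langle q\rangle^{4}$. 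Your gesture toward ``parallel to the pattern in \eqref{p'est1}--\eqref{p'est4}'' suggests awareness of the needed machinery, but your explicit reliance on $s\geq 4$ alone contradicts that, and the bookkeeping phrase ``$q^0$ and $n_b$ together producing at most $\langle q\rangle^4$'' is not an argument: $n_b$ contributes a factor of $\langle p\rangle+\langle q\rangle$, not $\langle q\rangle$. To close the gap you need to make \eqref{sp^0} an explicit ingredient and carry out the exponent bookkeeping that the paper records in detail.
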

\begin{proof}
Recall that $ p' $ is given by 
\begin{align*}
p'_d & = p_d + 2 \bigg( - q^0 \frac{ n^a \omega_i { e^i }_a }{ \sqrt{ s } } + q^a \omega_i { e^i }_a + \frac{ n^a \omega_i { e^i }_a n_b q^b }{ \sqrt{ s } ( n^0 + \sqrt{ s } ) } \bigg) \bigg( \omega_j { e^j }_d + \frac{ n^c \omega_j { e^j }_c n_d }{ \sqrt{ s } ( n^0 + \sqrt{ s } ) } \bigg) . 
\end{align*}
Note that the quantities $ p_d $, $ \omega_i $ and $ n_b $ are independent of $ g^{ a b } $ so that we only need to compute the following quantities: 
\[
\frac{ \partial q^0 }{ \partial g^{ u v } } , \qquad \frac{ \partial n^a }{ \partial g^{ u v } } , \qquad \frac{ \partial { e^i }_a }{ \partial g^{ u v } } , \qquad \frac{ \partial }{ \partial g^{ u v } } \left( \frac{ 1 }{ \sqrt{ s } } \right) , \qquad \frac{ \partial q^a }{ \partial g^{ u v } } , \qquad \frac{ \partial }{ \partial g^{ u v } } \left( \frac{ 1 }{ n^0 + \sqrt{ s } } \right) , 
\]
where $ u , v = 1 , 2 , 3 $. We will need to compute the derivatives of $ g^{ a b } $ with respect to $ g^{ u v } $, which can be explicitly written as follows: 
\[
\frac{ \partial g^{ a b } }{ \partial g^{ u v } } = \frac{ 1 }{ 2^{ \delta_{ u v } } } \left( \delta^a_u \delta^b_v + \delta^b_u \delta^e_v \right) , 
\]
but it will be enough to know that they are bounded quantities. Note that the first condition of the lemma implies that $ p^0 $ and $ \langle p \rangle $ are equivalent. Then, we obtain the following estimates: 
\begin{align}
\left| \frac{ \partial q^0 }{ \partial g^{ u v } } \right| = \left| \frac{ 1 }{ 2 q^0 } \frac{ \partial g^{ a b } }{ \partial g^{ u v } } q_a q_b \right| & \leq C \langle q \rangle , \label{dq^0g} \\ 
\left| \frac{ \partial n^a }{ \partial g^{ u v } } \right| = \left| \frac{ \partial g^{ a b } }{ \partial g^{ u v } } n_b \right| & \leq C \left( \langle p \rangle + \langle q \rangle \right) , \nonumber \\ 
\left| \frac{ \partial q^a }{ \partial g^{ u v } } \right| = \left| \frac{ \partial g^{ a b } }{ \partial g^{ u v } } q_b \right| & \leq C \langle q \rangle . \nonumber 
\end{align}
To compute the derivatives of $ { e^i }_a $, we use the expression in Remark \ref{rem ortho}. Since we have 
\[
\frac{ 1 }{ g^{ 1 1 } } \leq \frac{ g^{ 2 2 } g^{ 3 3 } }{ \det g^{ - 1 } } , \qquad \frac{ 1 }{ g^{ 1 1 } g^{ 2 2 } - ( g^{ 1 2 } )^2 } \leq \frac{ g^{ 3 3 } }{ \det g^{ - 1 } } , 
\]
(see (7) of Ref.~\cite{LN171}), and $ g^{ a b } $, $ \partial g^{ a b } / \partial g^{ u v } $ and $ 1 / \det g^{ - 1 } $ are all bounded quantities, we can conclude that 
\[
\left| \frac{ \partial { e^i }_a }{ \partial g^{ u v } } \right| \leq C . 
\]
To compute the derivatives of $ 1 / \sqrt{ s } $, we recall that $ s = 2 + 2 p^0 q^0 - 2 g^{ a b } p_a q_b $. Hence, we have
\[
\left| \frac{ \partial s }{ \partial g^{ u v } } \right| = \left| 2 \frac{ \partial p^0 }{ \partial g^{ u v } } q^0 + 2 p^0 \frac{ \partial q^0 }{ \partial g^{ u v } } - 2 \frac{ \partial g^{ a b } }{ \partial g^{ u v } } p_a q_a \right| \leq C \langle p \rangle \langle q \rangle , 
\]
and obtain 
\begin{align}
\left| \frac{ \partial }{ \partial g^{ u v } } \left( \frac{ 1 }{ \sqrt{ s } } \right) \right| \leq \frac{ C }{ s^{ \frac32 } } \langle p \rangle \langle q \rangle . \label{dsg} 
\end{align}
The derivatives of $ 1 / ( n^0 + \sqrt{ s } ) $ are estimated by using $ n^0 + \sqrt{ s } \geq C^{ - 1 } \langle p \rangle $ for some $ C > 0 $: 
\begin{align*}
\left| \frac{ \partial }{ \partial g^{ u v } } \left( \frac{ 1 }{ n^0 + \sqrt{ s } } \right) \right| & \leq \frac{ C }{ ( n^0 + \sqrt{ s } )^2 } \left( \langle p \rangle + \langle q \rangle + \frac{ 1 }{ \sqrt{ s } } \langle p \rangle \langle q \rangle \right) \\ 
& \leq C \left( \frac{ 1 }{ \langle p \rangle } + \frac{ \langle q \rangle }{ \langle p \rangle^2 } + \frac{ \langle q \rangle }{ \sqrt{ s } \langle p \rangle } \right) \\ 
& \leq C \frac{ \langle q \rangle }{ \langle p \rangle } , 
\end{align*}
where we used $ \sqrt{ s } \geq 2 $. Now, we write $ p'_d = p_d + A B_d $, where 
\begin{align*}
A & = 2 \bigg( - q^0 \frac{ n^a \omega_i { e^i }_a }{ \sqrt{ s } } + q^a \omega_i { e^i }_a + \frac{ n^a \omega_i { e^i }_a n_b q^b }{ \sqrt{ s } ( n^0 + \sqrt{ s } ) } \bigg) , \qquad B_d = \bigg( \omega_j { e^j }_d + \frac{ n^c \omega_j { e^j }_c n_d }{ \sqrt{ s } ( n^0 + \sqrt{ s } ) } \bigg) . 
\end{align*}
The quantity $ A $ is estimated by using \eqref{sp^0} and the inequality $ n^0 + \sqrt{ s } \geq C^{ - 1 } ( \langle p \rangle + \langle q \rangle ) $: 
\begin{align*}
| A | & \leq C \left( \langle q \rangle \frac{ \langle p \rangle + \langle q \rangle }{ \sqrt{ s } } + \langle q \rangle + \frac{ ( \langle p \rangle + \langle q \rangle )^2 \langle q \rangle }{ \sqrt{ s } ( n^0 + \sqrt{ s } ) } \right) \\ 
& \leq C \left( \langle p \rangle^{ \frac12 } \langle q \rangle^{ \frac32 } + \langle q \rangle + \frac{ ( \langle p \rangle + \langle q \rangle ) \langle q \rangle }{ \sqrt{ s } } \right) \\ 
& \leq C \langle p \rangle^{ \frac12 } \langle q \rangle^{ \frac32 } . 
\end{align*}
We combine the above results to estimate the derivatives of $ A $: 
\begin{align*}
\left| \frac{ \partial A }{ \partial g^{ u v } } \right| & \leq C \bigg( \langle q \rangle \frac{ \langle p \rangle + \langle q \rangle }{ \sqrt{ s } } + \langle q \rangle \frac{ \langle p \rangle + \langle q \rangle }{ s^{ \frac32 } } \langle p \rangle \langle q \rangle + \langle q \rangle \\
& \qquad \qquad + \frac{ ( \langle p \rangle + \langle q \rangle )^2 \langle q \rangle }{ \sqrt{ s } ( n^0 + \sqrt{ s } ) } + \frac{ ( \langle p \rangle + \langle q \rangle )^2 \langle q \rangle }{ s^{ \frac32 } ( n^0 + \sqrt{ s } ) } \langle p \rangle \langle q \rangle + \frac{ ( \langle p \rangle + \langle q \rangle )^2 \langle q \rangle^2 }{ \sqrt{ s } \langle p \rangle } \bigg) \\ 
& \leq C \left( \langle p \rangle^{ \frac12 } \langle q \rangle^{ \frac32 } + \langle p \rangle^{ \frac12 } \langle q \rangle^{ \frac72 } + \langle q \rangle + \frac{ 1 }{ \sqrt{ s } } \left( \langle p \rangle \langle q \rangle^2 + \langle q \rangle^3 + \frac{ \langle q \rangle^4 }{ \langle p \rangle } \right) \right) \\ 
& \leq C \langle p \rangle^{ \frac12 } \langle q \rangle^{ \frac72 } . 
\end{align*}
In a similar way, the quantity $ B_d $ is estimated as 
\begin{align*}
| B_d | & \leq C \left( 1 + \frac{ ( \langle p \rangle + \langle q \rangle )^2 }{ \sqrt{ s } ( n^0 + \sqrt{ s } ) } \right) \leq C \left( 1 + \frac{ \langle p \rangle + \langle q \rangle }{ \sqrt{ s } } \right) \leq C \langle p \rangle^{ \frac12 } \langle q \rangle^{ \frac12 } , 
\end{align*}
and its derivatives are 
\begin{align*} 
\left| \frac{ \partial B_d }{ \partial g^{ u v } } \right| & \leq C \left( 1 + \frac{ ( \langle p \rangle + \langle q \rangle )^2 }{ \sqrt{ s } ( n^0 + \sqrt{ s } ) } + \frac{ ( \langle p \rangle + \langle q \rangle )^2 \langle p \rangle \langle q \rangle }{ s^{ \frac32 } ( n^0 + \sqrt{ s } ) } + \frac{ ( \langle p \rangle + \langle q \rangle )^2 \langle q \rangle }{ \sqrt{ s } \langle p \rangle } \right) \\ 
& \leq C \langle p \rangle^{ \frac12 } \langle q \rangle^{ \frac52 } . 
\end{align*}
Now, we can conclude that the derivatives of $ p' $ with respect to $ g^{ u v } $ are estimated as follows: 
\[
\left| \frac{ \partial p'_d }{ \partial g^{ u v } } \right| \leq C \langle p \rangle \langle q \rangle^4 . 
\]
The estimates of the derivatives of $ q' $ are exactly the same, and this completes the proof. 
\end{proof}

\begin{lemma}\label{lem pp'q'}
Suppose that a metric $ g^{ a b } $ is given and satisfies \eqref{assumption1}. Then, the post-collision momenta $ p' $ and $ q' $ satisfy 
\[
\langle p \rangle \leq C \langle p' \rangle \langle q' \rangle . 
\]
\end{lemma}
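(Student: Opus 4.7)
The plan is to combine the energy conservation relation \eqref{energy-momentum} with the two one-sided comparisons between $\langle p \rangle$ and the mass-shell energy $p^{0} = \sqrt{1+g^{ab}p_{a}p_{b}}$ that assumption \eqref{assumption1} provides. The point is that $\langle p \rangle$ is metric-independent while $p^{0}$ is not, so the two quantities differ by a factor depending on $Z$ in one direction but not the other.

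First, I would derive the two comparisons. The lower bound in \eqref{assumption1} gives $|p|^{2} \leq c_{1} Z^{2} g^{ab}p_{a}p_{b} = c_{1} Z^{2}((p^{0})^{2}-1)$, so using $Z \geq 1$ on the interval in question I obtain $\langle p \rangle \leq C Z p^{0}$ with $C$ depending only on $c_{1}$. In the other direction, the upper bound in \eqref{assumption1} together with $Z \geq 1$ gives $(p'^{0})^{2} = 1 + g^{ab} p'_{a} p'_{b} \leq 1 + c_{1}|p'|^{2}/Z^{2} \leq c_{1} \langle p' \rangle^{2}$, hence $p'^{0} \leq \sqrt{c_{1}}\,\langle p' \rangle$ and analogously $q'^{0} \leq \sqrt{c_{1}}\,\langle q' \rangle$, both constants independent of $Z$.

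Second, I would invoke the $0$-component of the conservation law \eqref{energy-momentum}: $p^{0} + q^{0} = p'^{0} + q'^{0}$, which in particular gives $p^{0} \leq p'^{0} + q'^{0}$. Since $p'^{0}, q'^{0} \geq 1$ by the mass-shell relation, the elementary inequality $a + b \leq 2ab$ for $a, b \geq 1$ converts this sum into a product: $p^{0} \leq 2 p'^{0} q'^{0}$. Chaining the three bounds yields
\[
\langle p \rangle \leq C Z p^{0} \leq 2 C Z p'^{0} q'^{0} \leq 2 C c_{1} Z \langle p' \rangle \langle q' \rangle,
\]
and since \eqref{assumption1} is posed on the bounded interval $[t_{0}, T]$ where $Z = e^{\gamma t} \leq e^{\gamma T}$, the $Z$ factor is absorbed into the final constant.

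The argument is a direct chain of elementary estimates with no substantive analytical obstacle. The only point requiring care is that the resulting constant $C$ depends on the upper bound of $Z$ on $[t_{0}, T]$; this is unavoidable, since $\langle p \rangle$ can exceed $p^{0}$ by a factor of $Z$, and it is consistent with the local-existence framework in which this lemma is applied.
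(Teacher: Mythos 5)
Your chain of estimates does produce $\langle p\rangle \leq C Z\langle p'\rangle\langle q'\rangle$, and you correctly observe that this leaves a stray factor of $Z$ in the constant. The gap is in your final remark: you claim the $Z$-dependence is ``unavoidable'' and ``consistent with the local-existence framework,'' but neither is true. The lemma is applied in the derivation of \eqref{localB4}, hence in \eqref{iterationB}, and \eqref{iterationB} is reused verbatim in the global-existence argument of Section \ref{sec Global}, where the constants $\sup_{p,q}\mathcal{C}_j$ must be bounded uniformly on $[t_0,\infty)$ (the paper explicitly states they depend only on $c_1$). Moreover, the lemma enters that estimate raised to the power $m$ when $\langle p\rangle^{2m}$ is split and one factor $\langle p\rangle^m$ is replaced by $C^m\langle p'\rangle^m\langle q'\rangle^m$. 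A constant of size $Z$ would therefore contribute $Z^m$, and since $(\det g)^{-1/4}\sim Z^{-3/2}$, the coefficient of the gain term would behave like $Z^{m-3/2}$, which is unbounded and non-integrable for the required range $m>7/2$. The global bound \eqref{f bound} would not follow.

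The source of the loss is the one-sided passage $\langle p\rangle\leq C Z p^0$: this is tight only when $|\hat{p}|\gg 1$, whereas for $|\hat{p}|\lesssim 1$ one has $\langle p\rangle\sim 1$ but $Zp^0\sim Z$, so the inequality is off by a full factor of $Z$ precisely in the regime where the bound should be trivial. The paper instead keeps the $Z$-weight attached \emph{symmetrically} to all three momenta: writing $g^{ab}p_ap_b=|\hat p|^2$ and using both sides of \eqref{assumption1}, one has $\langle p\rangle^2\leq 1+c_1Z^2|\hat p|^2$ and $\langle p'\rangle^2\geq 1+c_1^{-1}Z^2|\hat p'|^2$, which reduces the claim to bounding
\[
\frac{1+Z^2|\hat p|^2}{1+Z^2|\hat p'|^2+Z^2|\hat q'|^2}
\]
uniformly in $Z$. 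This is done not via $p^0\leq p'^0+q'^0$ but via the sharper $p^0-1\leq(p'^0-1)+(q'^0-1)$ (since $q^0\geq 1$). Setting $a=p^0-1$, $b=p'^0-1$, $c=q'^0-1$ so that $|\hat p|^2=a(a+2)$, etc., and $a\leq b+c$, the numerator $1+2Z^2a+Z^2a^2$ is at most twice the denominator $1+2Z^2(b+c)+Z^2(b^2+c^2)$, with no $Z$-dependence in the constant. That is the step you would need to add; without it, the argument does not establish the lemma as it is actually used in the paper.
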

\begin{proof}
We note that $ g^{ a b } p_a p_b = \delta^{ i j } { \hat p }_i { \hat p }_j = | { \hat p } |^2 $, where $ { \hat p }_i $ denotes the components of a momentum with respect to the orthonormal frame $ { \bf e }^i $, i.e., $ p_a = { \hat p }_i { e^i }_a $. Hence, the assumption \eqref{assumption1} can be written as
\[
\frac{ 1 }{ c_1 } | p |^2 \leq Z^2 | { \hat p } |^2 \leq c_1 | p |^2 , 
\]
so that we obtain 
\begin{align}
\frac{ 1 + | p |^2 }{ ( 1 + | p' |^2 ) ( 1 + | q' |^2 ) } & \leq \frac{ 1 + c_1 Z^2 | { \hat p } |^2 }{ ( 1 + c_1^{ - 1 } Z^2 | { \hat p }' |^2 ) ( 1 + c_1^{ - 1 } Z^2 | { \hat q }' |^2 ) } \\
& \leq \frac{ 1 + c_1 Z^2 | { \hat p } |^2 }{ 1 + c_1^{ - 1 } Z^2 | { \hat p }' |^2 + c_1^{ - 1 } Z^2 | { \hat q }' |^2 } \\
& \leq \frac{ c_1^2 ( 1 + Z^2 | { \hat p } |^2 ) }{ 1 + Z^2 | { \hat p }' |^2 + Z^2 | { \hat q }' |^2 } , 
\end{align}
where we used the assumption $ c_1 \geq 1 $. Since $ \langle p \rangle = \sqrt{ 1 + | p |^2 } $, we only need to show that the following quantity is bounded:
\[
\frac{ 1 + Z^2 | { \hat p } |^2 }{ 1 + Z^2 | { \hat p }' |^2 + Z^2 | { \hat q }' |^2 } . 
\]
Now, we follow the proof of Lemma 4 of Ref.~\cite{LN172} to obtain the desired result. 
\end{proof}

\subsubsection{Local existence for the Boltzmann equation}
We assume that a metric $ g^{ a b } $ is given and satisfies the assumption \eqref{assumption1}. The existence can be obtained by using the following standard iteration:
\begin{align}\label{iteration}
\frac{ \partial f^{ n + 1 } }{ \partial t } + \frac{ 1 }{ p^0 } C_{ b c }^d p_d p^b \frac{ \partial f^{ n + 1 } }{ \partial p_c } = ( \det g )^{ - \frac12 } \iint \frac{ 1 }{ p^0 q^0 \sqrt{ s } } ( f^n ( p' ) f^n ( q' ) - f^{ n + 1 } ( p ) f^n ( q ) ) \, d \omega \, d q . 
\end{align}
Let $ { \mathcal I } $ be a multi-index with $ | { \mathcal I } | \leq N $. We take the derivative $ \partial_p^{ \mathcal I } $ of both sides of the above equation and multiply it by $ \partial_p^{ \mathcal I } f^{ n + 1 } \langle p \rangle^{ 2 m } e^{ p^0 } $. From the first term on the left hand side we obtain
\begin{align}
\frac12 \frac{ \partial }{ \partial t } \left( ( \partial^{ \mathcal I }_p f^{ n + 1 } )^2 \langle p \rangle^{ 2 m } e^{ p^0 } \right) + \frac12 ( \partial_p^{ \mathcal I } f^{ n + 1 } )^2 \langle p \rangle^{ 2 m } \frac{ k^{ a b } p_a p_b }{ p^0 } e^{ p^0 } , \label{localB0} 
\end{align}
where the second term above is non-negative, since we assumed that the second fundamental form $ k^{ a b } $ is positive definite in the sense of \eqref{assumption1}. Integrating the above quantity over $ \bbr^3_p $ we obtain 
\begin{align}
\frac12 \frac{ d }{ d t } \int_{ \bbr^3 } ( \partial^{ \mathcal I }_p f^{ n + 1 } )^2 \langle p \rangle^{ 2 m } e^{ p^0 } \, d p , \label{localB1}
\end{align}
where the second term of \eqref{localB0} has been ignored. From the second term on the left hand side of \eqref{iteration} we have
\begin{align}\label{iteration2}
\frac{ 1 }{ 2 p^0 } C^d_{ b c } p_d p^b \frac{ \partial ( \partial_p^{ \mathcal I } f^{ n + 1 } )^2 }{ \partial p_c } \langle p \rangle^{ 2 m } e^{ p^0 } + \sum_{ \substack{ { \mathcal J } + { \mathcal K } = { \mathcal I } \\ | { \mathcal J } | \geq 1 } } { { \mathcal I } \choose { \mathcal J } } \partial_p^{ \mathcal J } \left( \frac{ 1 }{ p^0 } C^d_{ b c } p_d p^b \right) \frac{ \partial ( \partial_p^{ \mathcal K } f^{ n + 1 } ) }{ \partial p_c } \partial_p^{ \mathcal I } f^{ n + 1 } \langle p \rangle^{ 2 m } e^{ p^0 } . 
\end{align}
Integrating the first term of \eqref{iteration2} by parts we obtain
\[
- \frac12 \int_{ \bbr^3 } \frac{ \partial }{ \partial p_c } \left( \frac{ 1 }{ p^0 } C^d_{ b c } p_d p^b \langle p \rangle^{ 2 m } e^{ p^0 } \right) ( \partial_p^{ \mathcal I } f^{ n + 1 } )^2 \, d p .
\]
Notice that the terms involving the derivatives of $ ( p^0 )^{ - 1 } $, $ p^b $ and $ e^{ p^0 } $ vanish due to the anti-symmetry of the structure constants, since 
\[
\frac{ \partial p^0 }{ \partial p_c } = \frac{ p^c }{ p^0 } , \qquad \frac{ \partial p_d }{ \partial p_c } = \delta^c_d , \qquad \frac{ \partial p^b }{ \partial p_c } = g^{ b c } , \qquad \frac{ \partial \langle p \rangle }{ \partial p_c } = \frac{ p_c }{ \langle p \rangle } . 
\]
Hence, we only need to consider the derivatives of $ p_d $ and $ \langle p \rangle $:
\begin{align}
& \left| \int_{ \bbr^3 } \frac{ 1 }{ p^0 } C^d_{ b c } p^b \left( \delta^c_d \langle p \rangle^{ 2 m } + 2 m p_d \langle p \rangle^{ 2m - 1 } \frac{ p_c }{ \langle p \rangle } \right) e^{ p^0 } ( \partial_p^{ \mathcal I } f^{ n + 1 } )^2 \, d p \right| , \nonumber \\
& \leq \int_{ \bbr^3 } | { \mathfrak L }^{ ( 1 ) } | \langle p \rangle^{ 2 m } e^{ p^0 } ( \partial_p^{ \mathcal I } f^{ n + 1 } )^2 \, d p \nonumber \\ 
& \leq Z^{ - 1 } \left( \sup_{ p , q } { \mathcal C } \right) \| f^{ n + 1 } ( t ) \|^2_{ g , m , N } , \label{localB2}
\end{align}
for some $ { \mathcal C } $. The second term of \eqref{iteration2} appears only for $ | { \mathcal I } | \geq 1 $. In this case, we need to estimate the following quantity for $ | { \mathcal J } | \geq 1 $:
\[
\partial_p^{ \mathcal J } \left( \frac{ 1 }{ p^0 } p_d p^b \right) = \sum_{ \substack{ { \mathcal J }_1 + { \mathcal J }_2 = { \mathcal J } \\ | { \mathcal J }_1 | \leq 1 } } { { \mathcal J } \choose { \mathcal J }_1 } \left( \partial^{ { \mathcal J }_1 }_p p_d \right) \partial_p^{ { \mathcal J }_2 } \left( \frac{ p^b }{ p^0 } \right) , 
\]
where we only need to consider the following two cases: $ | \partial^{ { \mathcal J }_1 }_p p_d | \leq | p | $ for $ { \mathcal J }_1 = 0 $ and $ | \partial^{ { \mathcal J }_1 }_p p_d | \leq C $ for $ | { \mathcal J }_1 | = 1 $. Hence, we apply Lemma \ref{lem p/p^0} to obtain 
\begin{align*}
\left| \partial_p^{ \mathcal J } \left( \frac{ 1 }{ p^0 } p_d p^b \right) \right| & \leq \frac{ | p | }{ ( p^0 )^{ | { \mathcal J } | } } | { \mathfrak L }^{ ( | { \mathcal J } | + 1 ) } | + \frac{ 1 }{ ( p^0 )^{ | { \mathcal J } | - 1 } } | { \mathfrak L }^{ ( | { \mathcal J } | ) } | \\
& \leq \frac{ c_1 }{ ( p^0 )^{ | { \mathcal J } | - 1 } } \left( Z | { \mathfrak L }^{ ( | \mathcal J | + 1 ) } | + | { \mathfrak L }^{ ( | { \mathcal J } | ) } | \right) \\
& \leq \frac{ Z^{ - | { \mathcal J } | } }{ ( p^0 )^{ | { \mathcal J } | - 1 } } { \mathcal C } , 
\end{align*}
where we used the assumption \eqref{assumption1} such that $ | p | \leq c_1 Z p^0 $, and the constant $ c_1 $ was absorbed into $ { \mathcal C } $. Now, we integrate the second term of \eqref{iteration2} to obtain the following estimate: 
\begin{align}
& \left| \int_{ \bbr^3 } \sum_{ \substack{ { \mathcal J } + { \mathcal K } = { \mathcal I } \\ | { \mathcal J } | \geq 1 } } { { \mathcal I } \choose { \mathcal J } } \partial_p^{ \mathcal J } \left( \frac{ 1 }{ p^0 } C^d_{ b c } p_d p^b \right) \frac{ \partial ( \partial_p^{ \mathcal K } f^{ n + 1 } ) }{ \partial p_c } \partial_p^{ \mathcal I } f^{ n + 1 } \langle p \rangle^{ 2 m } e^{ p^0 } \, d p \right| \nonumber \\
& \leq Z^{ - 1 } \left( \sup_{ p , q } { \mathcal C } \right) \| f^{ n + 1 } ( t ) \|^2_{ g , m , N } , \label{localB3}
\end{align}
where we used $ p^0 \geq 1 $ and $ Z \geq 1 $. Finally, we consider the collision term of \eqref{iteration}. We first consider the gain term: taking the derivative $ \partial^{ \mathcal I }_p $ of the gain term, we obtain 
\[
( \det g )^{ - \frac12 } \sum_{ { \mathcal J } + { \mathcal K } + { \mathcal L } = { \mathcal I } } { \mathcal I \choose { \mathcal J } + { \mathcal K } } { { \mathcal J } + { \mathcal K } \choose { \mathcal J } } \iint \partial^{ { \mathcal J } }_p \left( \frac{ 1 }{ p^0 q^0 \sqrt{ s } } \right) \partial^{ { \mathcal K } }_p \left( f^n ( p' ) \right) \partial^{ { \mathcal L } }_p \left( f^n ( q' ) \right) \, d \omega \, d q . 
\]
We notice that the following quantity can be estimated by using \eqref{dP} and \eqref{d1/pq}: 
\begin{align}
\left| \partial^{ { \mathcal J } }_p \left( \frac{ 1 }{ p^0 q^0 \sqrt{ s } } \right) \right| & \leq C \sum_{ { \mathcal J }_1 + { \mathcal J }_2 = { \mathcal J } } \left| \partial^{ { \mathcal J }_1 }_p \left( \frac{ 1 }{ p^0 q^0 } \right) \right| \left| \partial^{ { \mathcal J }_2 } \left( \frac{ 1 }{ \sqrt{ s } } \right) \right| \nonumber \\
& \leq \frac{ ( q^0 )^{ | { \mathcal J } | } }{ p^0 q^0 } | { \mathfrak L }^{ ( | { \mathcal J } | ) }_P | \nonumber \\
& \leq \frac{ ( q^0 )^{ | { \mathcal J } | } }{ p^0 q^0 } Z^{ - | { \mathcal J } | } { \mathcal C } . \label{est gain 1}
\end{align}
The quantities $ \partial^{ { \mathcal K } }_p \left( f^n ( p' ) \right) $ and $ \partial^{ { \mathcal L } }_p \left( f^n ( q' ) \right) $ can be estimated as in the proof of Lemma 10 of Ref.~\cite{LN172}. For $ | { \mathcal K } | = k  \geq 1 $, we apply the multivariate Faa di Bruno formula \cite{CS96} to write $ \partial^{ { \mathcal K } }_p \left( f^n ( p' ) \right) $ as a linear combination of the following quantities: 
\[
( \partial^{ { \mathcal K }_{ 1 } }_p f^n ) ( p' ) \prod_{ j = 1 }^s \left( \partial^{ { \mathcal K }_{ 2 j } }_p p' \right)^{ { \mathcal K }_{ 3 j } } , 
\]
where $ { \mathcal K }_{ 1 } $, $ { \mathcal K }_{ 2 j } $ and $ { \mathcal K }_{ 3 j } $ are some multi-indices satisfying 
\[
1 \leq | { \mathcal K }_1 | , | { \mathcal K }_{ 2 j } | , | { \mathcal K }_{ 3 j } | \leq k = | { \mathcal K } | , \qquad k = \sum_{ j = 1 }^s | { \mathcal K }_{ 2 j } | | { \mathcal K }_{ 3 j } | . 
\]
Since $ | { \mathcal K }_{ 2 j } | \geq 1 $, we can apply \eqref{dp'} to obtain 
\begin{align*}
\prod_{ j = 1 }^s \left| \left( \partial^{ { \mathcal K }_{ 2 j } }_p p' \right)^{ { \mathcal K }_{ 3 j } } \right| & \leq \prod_{ j = 1 }^s \left( ( q^0 )^{ | { \mathcal K }_{ 2 j } | + 8 } Z^{ - | { \mathcal K }_{ 2 j } | + 1 } { \mathcal C } \right)^{ | { \mathcal K }_{ 3 j } | } \leq ( q^0 )^{ 9 k } { \mathcal C } , 
\end{align*}
so that we have for any multi-index $ { \mathcal K } $ with $ | { \mathcal K } | = k \geq 1 $, 
\[
\left| \partial^{ { \mathcal K } }_p \left( f^n ( p' ) \right) \right| \leq ( q^0 )^{ 9 k } { \mathcal C } \sum_{ 1 \leq | { \mathcal K }_1 | \leq k } \left| ( \partial^{ { \mathcal K }_{ 1 } }_p f^n ) ( p' ) \right| . 
\]
We note that the estimate for the case $ { \mathcal K } = 0 $ is trivial. Hence, we have for any multi-index $ { \mathcal K } $ with $ | { \mathcal K } | = k \geq 0 $,
\begin{align}\label{est gain 2}
\left| \partial^{ { \mathcal K } }_p \left( f^n ( p' ) \right) \right| \leq ( q^0 )^{ 9 k } { \mathcal C } \sum_{ 0 \leq | { \mathcal K }_1 | \leq k } \left| ( \partial^{ { \mathcal K }_{ 1 } }_p f^n ) ( p' ) \right| . 
\end{align}
In a similar way, we obtain for any multi-index $ { \mathcal L } $ with $ | { \mathcal L } | = l \geq 0 $, 
\begin{align}\label{est gain 3}
\left| \partial^{ { \mathcal L } }_p \left( f^n ( q' ) \right) \right| \leq ( q^0 )^{ 9 l } { \mathcal C } \sum_{ 0 \leq | { \mathcal L }_1 | \leq l } \left| ( \partial^{ { \mathcal L }_{ 1 } }_p f^n ) ( p' ) \right| . 
\end{align}
We now combine the estimates \eqref{est gain 1}--\eqref{est gain 3} to obtain the following estimate for the gain term:
\begin{align*}
& \left| \int_{ \bbr^3 } \partial^{ { \mathcal I } }_p Q_+ ( f^n , f^n ) \partial^{ { \mathcal I } }_p f^{ n + 1 } \langle p \rangle^{ 2 m } e^{ p^0 } \, d p \right| \\
& \leq ( \det g )^{ - \frac12 } \sum \iiint \frac{ ( q^0 )^{ 9 N } }{ p^0 q^0 } { \mathcal C } \left| ( \partial^{ { \mathcal K }_{ 1 } }_p f^n ) ( p' ) \right| \left| ( \partial^{ { \mathcal L }_{ 1 } }_p f^n ) ( p' ) \right| \left| \partial^{ { \mathcal I } }_p f^{ n + 1 } ( p ) \right| \langle p \rangle^{ 2 m } e^{ p^0 } \, d \omega \, d q \, d p \\
& \leq ( \det g )^{ - \frac12 } \left( \sup_{ p , q } { \mathcal C } \right) \sum \iiint ( q^0 )^{ 9 N } \left| ( \partial^{ { \mathcal K }_{ 1 } }_p f^n ) ( p' ) \right| \left| ( \partial^{ { \mathcal L }_{ 1 } }_p f^n ) ( p' ) \right| \left| \partial^{ { \mathcal I } }_p f^{ n + 1 } ( p ) \right| \langle p \rangle^{ 2 m } e^{ p^0 } \, \frac{ d \omega \, d q \, d p }{ p^0 q^0 } , 
\end{align*}
where the summation is over all the possible $ { \mathcal K }_1 $ and $ { \mathcal L }_1 $ such that $ | { \mathcal K }_1 | + | { \mathcal L }_1 | \leq | { \mathcal I } | $. The above integral is estimated as follows: 
\begin{align*}
& \iiint ( q^0 )^{ 9 N } \left| ( \partial^{ { \mathcal K }_{ 1 } }_p f^n ) ( p' ) \right| \left| ( \partial^{ { \mathcal L }_{ 1 } }_p f^n ) ( p' ) \right| \left| \partial^{ { \mathcal I } }_p f^{ n + 1 } ( p ) \right| \langle p \rangle^{ 2 m } e^{ p^0 } \, \frac{ d \omega \, d q \, d p }{ p^0 q^0 } \\
& \leq C \iiint ( q^0 )^{ 9 N } \left| ( \partial^{ { \mathcal K }_{ 1 } }_p f^n ) ( p' ) \right| \langle p' \rangle^{ m } e^{ \frac12 p'^0 } \left| ( \partial^{ { \mathcal L }_{ 1 } }_p f^n ) ( p' ) \right| \langle q' \rangle^{ m } e^{ \frac12 q'^0 } \left| \partial^{ { \mathcal I } }_p f^{ n + 1 } ( p ) \right| \langle p \rangle^{ m } e^{ \frac12 p^0 } e^{ - \frac12 q^0 } \, \frac{ d \omega \, d q \, d p }{ p^0 q^0 } \\
& \leq C \left( \iiint \left| ( \partial^{ { \mathcal K }_{ 1 } }_p f^n ) ( p' ) \right|^2 \langle p' \rangle^{ 2 m } e^{ p'^0 } \left| ( \partial^{ { \mathcal L }_{ 1 } }_p f^n ) ( p' ) \right|^2 \langle q' \rangle^{ 2 m } e^{ q'^0 } \, \frac{ d \omega \, d q \, d p }{ p^0 q^0 } \right)^{ \frac12 } \\ 
& \qquad \times \left( \iiint ( q^0 )^{ 18 N } \left| \partial^{ { \mathcal I } }_p f^{ n + 1 } ( p ) \right|^2 \langle p \rangle^{ 2 m } e^{ p^0 } e^{ - q^0 } \, \frac{ d \omega \, d q \, d p }{ p^0 q^0 } \right)^{ \frac12 } \\
& \leq C \| f^n ( t ) \|^2_{ g , m , N } \| f^{ n + 1 } ( t ) \|_{ g , m , N } \left( \int ( q^0 )^{ 18 N } e^{ - q^0 } \, d q \right)^{ \frac12 } , 
\end{align*}
where we used Lemma \ref{lem pp'q'} and the following change of variables \cite{LL22}:
\begin{align}
\frac{ d p' \, d q' }{ p'^0 q'^0 } = \frac{ d p \, d q }{ p^0 q^0 } . \label{change}
\end{align}
By a direct calculation we have 
\[
\left( \int ( q^0 )^{ 18 N } e^{ - q^0 } \, d q \right)^{ \frac12 } \leq C ( \det g )^{ \frac14 } , 
\]
so that we obtain for the gain term the following estimate: 
\begin{align}
\left| \int_{ \bbr^3 } \partial^{ { \mathcal I } }_p Q_+ ( f^n , f^n ) \partial^{ { \mathcal I } }_p f^{ n + 1 } \langle p \rangle^{ 2 m } e^{ p^0 } \, d p \right| \leq ( \det g )^{ - \frac14 } \left( \sup_{ p , q } { \mathcal C } \right) \| f^n ( t ) \|^2_{ g , m , N } \| f^{ n + 1 } ( t ) \|_{ g , m , N } , \label{localB4} 
\end{align}
where the constant $ C $ has been absorbed into $ { \mathcal C } $. In a similar way we obtain the following estimate for the loss term: 
\begin{align}
\left| \int_{ \bbr^3 } \partial^{ { \mathcal I } }_p Q_- ( f^{ n + 1 } , f^n ) \partial^{ { \mathcal I } }_p f^{ n + 1 } \langle p \rangle^{ 2 m } e^{ p^0 } \, d p \right| \leq ( \det g )^{ - \frac14 } \left( \sup_{ p , q } { \mathcal C } \right) \| f^n ( t ) \|_{ g , m , N } \| f^{ n + 1 } ( t ) \|^2_{ g , m , N } . \label{localB5}
\end{align}
We now combine the estimates \eqref{localB1}, \eqref{localB2}, \eqref{localB3}, \eqref{localB4} and \eqref{localB5} to obtain 
\begin{align}
& \frac{ d }{ d t } \| f^{ n + 1 } ( t ) \|^2_{ g , m , N } \nonumber \\ 
& \leq Z^{ - 1 } \left( \sup_{ p , q } { \mathcal C }_1 \right) \| f^{ n + 1 } ( t ) \|^2_{ g , m , N } + ( \det g )^{ - \frac14 } \left( \sup_{ p , q } { \mathcal C }_2 \right) \| f^n ( t ) \|^2_{ g , m , N } \| f^{ n + 1 } ( t ) \|_{ g , m , N } \nonumber \\
& \quad + ( \det g )^{ - \frac14 } \left( \sup_{ p , q } { \mathcal C }_3 \right) \| f^n ( t ) \|_{ g , m , N } \| f^{ n + 1 } ( t ) \|^2_{ g , m , N } , \label{iterationB}
\end{align}
for some $ { \mathcal C }_1 $, $ { \mathcal C }_2 $ and $ { \mathcal C }_3 $. We now obtain the local existence:

\begin{prop}\label{prop Boltzmann}
Suppose that a metric $ g^{ a b } $ is given and satisfies \eqref{assumption1}. Let $ f_0 $ be an initial data of the Boltzmann equation \eqref{Boltzmann} such that 
\[
\| f_0 \|^2_{ g , m + \frac12 , N } \leq M , 
\]
for some $ M > 0 $, with $ m \geq 0 $ and $ N \geq 3 $. Then, there exists $ t_0 < T_1 \leq T $ such that the Boltzmann equation admits a unique non-negative classical solution $ f $ on $ [ t_0 , T_1 ] $ satisfying 
\[
\sup_{ t_0 \leq \tau \leq T_1 } \| f ( \tau ) \|^2_{ g , m , N } \leq 2 M . 
\]
\end{prop}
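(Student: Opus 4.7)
The plan is to run the iteration \eqref{iteration} with initial choice $f^0 \equiv f_0$, so that each iterate $f^{n+1}$ solves a linear transport equation of the form $\partial_t f^{n+1} + \frac{1}{p^0} C^d_{bc} p_d p^b \partial_{p_c} f^{n+1} + L^n(t,p)\, f^{n+1} = Q_+(f^n, f^n)$ with non-negative loss coefficient $L^n(p) = (\det g)^{-1/2}\iint \frac{f^n(q)}{p^0 q^0 \sqrt s}\, d\omega\, dq$. The first thing to verify is that this linear equation admits a unique classical solution on $[t_0, T]$ by the method of characteristics, which is immediate once $\|f^n\|_{g,m,N}$ is bounded and the metric satisfies \eqref{assumption1}. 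The integrating-factor representation then gives $f^{n+1} \geq 0$ whenever $f^n \geq 0$ and $f_0 \geq 0$, so by induction $f^n \geq 0$ throughout.

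For the uniform bound I would rerun the estimates that produced \eqref{iterationB} at the regularity level $m + \tfrac12$ in place of $m$; the derivation is term-for-term identical since it nowhere uses the specific value of the weight exponent. Writing $E_n(t) = \|f^n(t)\|^2_{g, m+1/2, N}$ and assuming inductively $\sup_{[t_0, T_1]} E_n \leq 2M$, the resulting differential inequality reads
\[
\frac{d}{dt} E_{n+1} \leq A\, E_{n+1} + B\sqrt{2M}\, E_{n+1} + 2BM \sqrt{E_{n+1}},
\]
with constants $A, B$ depending only on the bounds \eqref{assumption1} on $[t_0, T]$. Since $E_{n+1}(t_0) \leq M$, an elementary comparison with the corresponding scalar ODE yields a time $T_1 \in (t_0, T]$ independent of $n$ such that $E_{n+1}(t) \leq 2M$ on $[t_0, T_1]$, closing the induction uniformly in $n$.

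Next I would derive an analogous inequality for the differences $f^{n+1} - f^n$ in the weaker norm $\|\cdot\|_{g, m, N}$. Splitting $Q(f^n, f^n) - Q(f^{n-1}, f^{n-1}) = Q(f^n - f^{n-1}, f^n) + Q(f^{n-1}, f^n - f^{n-1})$ and repeating the gain- and loss-term estimates of \eqref{localB4}--\eqref{localB5}, with the uniform bound at level $m + \tfrac12$ supplying the factor that is not differenced, I expect an inequality of the form
\[
\frac{d}{dt}\|f^{n+1} - f^n\|^2_{g, m, N} \leq C_1 \|f^{n+1} - f^n\|^2_{g, m, N} + C_2 \sqrt{M}\, \|f^n - f^{n-1}\|_{g, m, N}\, \|f^{n+1} - f^n\|_{g, m, N}.
\]
Gronwall combined with a geometric-series argument, after possibly shrinking $T_1$, shows that $\{f^n\}$ is Cauchy in $C([t_0, T_1]; L^2_{g, m, N}e^{p^0})$. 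The limit $f$ inherits non-negativity and the bound $\sup\|f(t)\|^2_{g,m,N} \leq 2M$, and continuity of $Q$ in this topology under the uniform bound lets me pass to the limit and obtain a classical solution of \eqref{Boltzmann}. Uniqueness then follows from the same difference estimate applied to two solutions with identical initial data.

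The main obstacle I anticipate is managing the half-order weight gap between hypothesis ($m+\tfrac12$) and conclusion ($m$). The Cauchy estimate in the $\|\cdot\|_{g, m, N}$ norm produces a factor of $\langle p\rangle^{1/2}$ that has nowhere to go unless the undifferenced factor is controlled in a strictly stronger norm, and it is precisely the hypothesis $\|f_0\|^2_{g, m+1/2, N} \leq M$ that supplies this. Care is needed to choose a single $T_1$ that works both for propagating the uniform bound at level $m+\tfrac12$ and for running the contraction at level $m$, but since both steps involve the same structural constants up to absorbed factors, a common $T_1$ can always be taken.
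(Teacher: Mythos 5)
Your overall architecture — the iteration \eqref{iteration}, propagation of non-negativity along characteristics, a uniform energy bound for the iterates, and then a Cauchy/contraction estimate for the differences in a weaker norm — matches the structure of the paper's proof. However, you have misidentified where the half-order weight gap between $m+\tfrac12$ and $m$ actually enters, and in doing so you have missed the step that this hypothesis is genuinely needed for.

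You claim that the contraction estimate at level $m$ ``produces a factor of $\langle p\rangle^{1/2}$ that has nowhere to go unless the undifferenced factor is controlled in a strictly stronger norm.'' This is not the case. Inspecting \eqref{localB4}--\eqref{localB5}, the gain and loss estimates close at a single weight level: the only inequality involving weights is $\langle p\rangle^{2m}\leq C\langle p'\rangle^{2m}\langle q'\rangle^{2m}$ (Lemma~\ref{lem pp'q'}), which is weight-homogeneous and leaves no stray $\langle p\rangle^{1/2}$. The same is true after the bilinear splitting $Q(f^n,f^n)-Q(f^{n-1},f^{n-1})=Q(f^n-f^{n-1},f^n)+Q(f^{n-1},f^n-f^{n-1})$, so the undifferenced factor in your Cauchy inequality is controlled at level $m$, not $m+\tfrac12$; the constant $C_2\sqrt M$ already comes from $\sup\|f^n\|_{g,m,N}\leq\sqrt{2M}$. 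The actual use of the $m+\tfrac12$ hypothesis in the paper is the term \eqref{diff}: when establishing \eqref{iterationB} one \emph{discards} the non-negative contribution
\[
\tfrac12\int_{\bbr^3}(\partial_p^{\mathcal I}f)^2\langle p\rangle^{2m}\tfrac{k^{ab}p_ap_b}{p^0}e^{p^0}\,dp
\]
arising from $\partial_t e^{p^0}$, since it has a favourable sign. That is fine for a one-sided bound, but to assert that the limit $f$ is a \emph{classical} solution one needs a two-sided bound on $\tfrac{d}{dt}\|f(t)\|^2_{g,m,N}$, and reinstating this term costs one extra power of $|p|$ (from $k^{ab}p_ap_b/p^0\sim Z^{-1}|p|$), which is precisely the jump from $m$ to $m+\tfrac12$. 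Your proposal never addresses time-differentiability of the limiting $f$ — it is asserted rather than obtained — and since your attribution of the $m+\tfrac12$ hypothesis to the contraction step is spurious, the hypothesis is, in your account, doing no work at all. You should drop the ``$\langle p\rangle^{1/2}$ in the Cauchy estimate'' discussion and instead retain the positive term in \eqref{localB0}, bound it as in \eqref{diff}, and use that two-sided control on $\tfrac{d}{dt}\|f\|^2_{g,m,N}$ (finite once $\sup_t\|f(t)\|^2_{g,m+\frac12,N}$ is bounded, which your level-$(m+\tfrac12)$ uniform estimate does provide) to justify that the constructed $f$ is $C^1$ in $t$.
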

\begin{proof}
The proposition is proved by using the iteration described above. Let us first consider the quantities $ { \mathcal C_1 } $, $ { \mathcal C_2 } $ and $ { \mathcal C_3 } $ in \eqref{iterationB}. Note that the assumptions on $ g^{ a b } $ implies that 
\[
\frac{ | p^a | }{ p^0 } \leq \left( \max_{ a , b } | g^{ a b } | \right) \frac{ | p | }{ p^0 } \leq c_1^{ 3 / 2 } Z^{ - 1 } ,  
\]
where we used $ | p | \leq \sqrt{ c_1 } Z p^0 $, so that for any $ 2 k + l + m = n $, we have 
\[
\left| g^{ a_1 b_1 } g^{ a_2 b_2 } \cdots g^{ a_k b_k } \frac{ p^{ c_1 } }{ p^0 } \frac{ p^{ c_2 } }{ p^0 } \cdots \frac{ p^{ c_l } }{ p^0 } \frac{ q^{ d_1 } }{ q^0 } \frac{ q^{ d_2 } }{ q^0 } \cdots \frac{ q^{ d_m } }{ q^0 } \right| \leq \left( c_1 Z^{ - 2 } \right)^k \left( c_1^{ 3 / 2 } Z^{ - 1 } \right)^{ l + m } \leq C Z^{ - n } , 
\]
which shows that $ | { \mathfrak L }^{ ( n ) }_P | \leq C Z^{ - n } $. Hence, we can conclude that the quantities $ { \mathcal C_1 } $, $ { \mathcal C_2 } $ and $ { \mathcal C_3 } $ are bounded uniformly with respect to $ p , q \in \bbr^3 $ and $ t \in [ t_0 , T ] $ such that 
\[
\sup_{ p , q , t } { \mathcal C }_j \leq C , \qquad j = 1 , 2 , 3 , 
\]
where $ C $ depends on $ c_1 $. Note that $ ( \det g )^{ - 1 } $ is the determinant of the matrix $ ( g^{ a b } )_{ a , b = 1 , 2 , 3 } $. Hence, we obtain by the assumptions on $ g^{ a b } $, 
\[
( \det g )^{ - \frac14 } \leq C Z^{ - \frac32 } , 
\]
which holds on $ [ t_0 , T ] $. Applying the above estimates to \eqref{iterationB}, and integrating it with respect to $ t $, we obtain 
\begin{align*}
\| f^{ n + 1 } ( t ) \|^2_{ g , m , N } & \leq M + C \sup_{ t_0 \leq \tau \leq t } \| f^{ n + 1 } ( \tau ) \|^2_{ g , m , N } \int_{ t_0 }^t Z^{ - 1 } ( \tau ) \, d \tau \\
& \quad + C \sup_{ t_0 \leq \tau \leq t } \| f^n ( \tau ) \|^2_{ g , m , N } \sup_{ t_0 \leq \tau \leq t } \| f^{ n + 1 } ( \tau ) \|_{ g , m , N } \int_{ t_0 }^t Z^{ - \frac32 } ( \tau ) \, d \tau \\ 
& \quad + C \sup_{ t_0 \leq \tau \leq t } \| f^n ( \tau ) \|_{ g , m , N } \sup_{ t_0 \leq \tau \leq t } \| f^{ n + 1 } ( \tau ) \|^2_{ g , m , N } \int_{ t_0 }^t Z^{ - \frac32 } ( \tau ) \, d \tau . 
\end{align*}
Since $ Z $ is continuous, we can find a time interval $ [ t_0 , T_1 ] \subset [ t_0 , T ] $ such that 
\[
\sup_{ t_0 \leq \tau \leq T_1 } \| f^{ n } ( \tau ) \|_{ g , m , N }^2 \leq 2 M , \qquad n = 1 , 2 , \cdots , 
\]
where $ T_1 $ is independent of $ n $. Now, by the standard arguments (see Refs.~\cite{G03, LL22}), we obtain the local existence of classical solutions. To ensure the differentiability with respect to $ t $, we need to estimate the second term of \eqref{localB0}, which had been ignored, as follows: 
\begin{align}
\frac12 \int_{ \bbr^3 } ( \partial_p^{ \mathcal I } f )^2 \langle p \rangle^{ 2 m } \frac{ k^{ a b } p_a p_b }{ p^0 } e^{ p^0 } \, d p & \leq C Z^{ - 2 } \int_{ \bbr^3 } ( \partial_p^{ \mathcal I } f )^2 \langle p \rangle^{ 2 m } \frac{ | p |^2 }{ p^0 } e^{ p^0 } \, d p \nonumber \\ 
& \leq C Z^{ - 1 } \int_{ \bbr^3 } ( \partial_p^{ \mathcal I } f )^2 \langle p \rangle^{ 2 m } | p | e^{ p^0 } \, d p \nonumber \\ 
& \leq C Z^{ - 1 } \| f ( t ) \|^2_{ g , m + \frac12 , N } , \label{diff}
\end{align}
where we used the assumption \eqref{assumption1}. Then, we obtain the following estimate:
\begin{align*}
\left| \frac{ d }{ d t } \| f ( t ) \|^2_{ g , m , N } \right| & \leq C Z^{ - 1 } \| f ( t ) \|^2_{ g , m + \frac12 , N } + Z^{ - 1 } \left( \sup_{ p , q } { \mathcal C }_1 \right) \| f ( t ) \|^2_{ g , m , N } \nonumber \\ 
& \quad + ( \det g )^{ - \frac14 } \left( \sup_{ p , q } { \mathcal C }_2 + \sup_{ p , q } { \mathcal C }_3 \right) \| f ( t ) \|^3_{ g , m , N } , 
\end{align*}
which is the estimate \eqref{iterationB}, where $ f^n $ and $ f^{ n + 1 } $ are replaced by $ f $, with the term \eqref{diff} taken into account. This ensures the differentiability with respect to $ t $, provided that the initial data $ \| f_0 \|^2_{ g , m + \frac12 , N } $ is finite. We also need to prove the uniqueness, but it will be proved for the coupled EBs system in Section \ref{sec uniqueness}. This completes the proof of Proposition \ref{prop Boltzmann}. 
\end{proof}

\subsection{Local existence for the EBs system}\label{sec local}
In this part we study the local existence for the coupled EBs system. The evolution equations for $ g_{ a b } $ and $ k_{ a b } $ are given by \eqref{EBs1} and \eqref{EBs2}, but for technical reasons we will consider the equations for $ g^{ a b } $ and $ k^{ a b } $:
\begin{align}
\frac{ d g^{ a b } }{ d t } & = - 2 k^{ a b } , \label{EBs^1} \\
\frac{ d k^{ a b } }{ d t } & = - 2 k^a_c k^{ b c } - k k^{ a b } - R^{ a b } + S^{ a b } + \frac12 ( \rho - S ) g^{ a b } + V ( \phi ) g^{ a b } . \label{EBs^2} 
\end{align}
We note that the quantities $ k_a^c $, $ k $ and $ R^{ a b } $ on the right hand side of the second equation are some polynomials of $ g_{ a b } $, $ k_{ a b } $, $ g^{ a b } $ and $ k^{ a b } $, or equivalently some polynomials of $ g^{ a b } $, $ k^{ a b } $ and $ \det g $, which is the determinant of the matrix $ g_{ a b } $. The equations for the scalar field \eqref{EBs3} and \eqref{EBs4} are 
\begin{align*}
\frac{ d \phi }{ d t } & = \psi , \\
\frac{ d \psi }{ d t } & = - k \psi - V ' ( \phi ) . 
\end{align*}
We recall that the potential $ V ( \phi ) $ is a smooth function, and the matter terms are given by $ S^{ a b } = T^{ a b } $, $ \rho = T^{ 0 0 } $ and $ S = T_{ a b } g^{ a b } $, so that we have by using \eqref{T_ab}:
\begin{align*}
S^{ a b } & = ( \det g )^{ - \frac12 } \int_{ \bbr^3 } f p^a p^b \frac{ d p }{ p^0 } , \\
\rho & = ( \det g )^{ - \frac12 } \int_{ \bbr^3 } f p^0 \, d p , \\
S & = ( \det g )^{ - \frac12 } \int_{ \bbr^3 } f p_a p_b g^{ a b } \frac{ d p }{ p^0 } .
\end{align*}
We observe that the matter terms are also smooth functions of $ g_{ a b } $ and $ g^{ a b } $ provided that the distribution function $ f $ is given appropriately. 

Now, we define the iteration. Let $ { g_0 }^{ a b } $, $ { k_0 }^{ a b } $, $ \phi_0 $, $ \psi_0 $ and $ f_0 $ be a set of initial data of the EBs system. Define
\begin{gather*}
{ g_{ 0 } }^{ a b } ( t ) = { g_0 }^{ a b } , \qquad { k_{ 0 } }^{ a b } ( t ) = { k_0 }^{ a b } , \qquad \phi_0 ( t ) = \phi_0 , \qquad \psi_0 ( t ) = \psi_0 , \qquad f_0 ( t , p ) = f_0 ( p ) , 
\end{gather*}
by abuse of the notation, and suppose that $ { g_j }^{ a b } $, $ { k_j }^{ a b } $, $ \phi_j $, $ \psi_j $ and $ f_j $ are given for some $ j \geq 0 $. We first consider the scalar field equations to obtain $ \phi_{ j + 1 } $ and $ \psi_{ j + 1 } $: 
\begin{align}
\frac{ d \phi_{ j + 1 } }{ d t } & = \psi_j , \label{phi_j} \\
\frac{ d \psi_{ j + 1 } }{ d t } & = - k_j \psi_j - V ' ( \phi_j ) , \label{psi_j}
\end{align}
where $ k_j = { k_j }^{ a b } { g_j }_{ a b } $ and $ { g_j }_{ a b } $ is defined by the inverse of $ { g_j }^{ a b } $. For the Boltzmann equation, we solve the following equation to obtain $ f_{ j + 1 } $:
\begin{align}
\frac{ \partial f_{ j + 1 } }{ \partial t } + \frac{ 1 }{ { p_j }^0 } C^d_{ b c } p_d { p_j }^b \frac{ \partial f_{ j + 1 } }{ \partial p_c } = Q ( f_{ j + 1 } , f_{ j + 1 } ) , \label{f_j}
\end{align}
where $ { p_j }^0 $ and $ { p_j }^b $ are defined by 
\[
{ p_j }^0 = \sqrt{ 1 + { g_j }^{ a b } p_a p_b } , \qquad { p_j }^b = { g_j }^{ a b } p_a , 
\]
and the collision term is given by 
\begin{align*}
Q ( f_{ j + 1 } , f_{ j + 1 } ) & = ( \det g_j )^{ - \frac12 } \iint \frac{ 1 }{ { p_j }^0 { q_j }^0 \sqrt{ s_j } } ( f_{ j + 1 } ( { p_j }' ) f_{ j + 1 } ( { q_j }' ) - f_{ j + 1 } ( p ) f_{ j + 1 } ( q ) ) \, d \omega \, d q , 
\end{align*}
where $ \det g_j $ denotes the determinant of the $ 3 \times 3 $ matrix $ { g_j }_{ a b } $, and $ { q_j }^0 $, $ s_j $, $ { p_j }' $ and $ { q_j }' $ are defined by 
\[
{ q_j }^0 = \sqrt{ 1 + { g_j }^{ a b } q_a q_b } , \qquad s_j = ( { p_{ j } }^0 + { q_j }^0 )^2 - { g_j }^{ a b } ( p_a + q_a ) ( p_b + q_b ) , 
\]
and
\begin{align*}
{ p_j }'_d & = p_d + 2 \bigg( - { q_j }^0 \frac{ { n_j }^a \omega_i { { e_j }^i }_a }{ \sqrt{ s_j } } + { q_j }^a \omega_i { { e_j }^i }_a + \frac{ { n_j }^a \omega_i { { e_j }^i }_a n_b { q_j }^b }{ \sqrt{ s_{ j } } ( { n_{ j } }^0 + \sqrt{ s_j } ) } \bigg) \bigg( \omega_k { { e_j }^k }_d + \frac{ { n_j }^c \omega_k { { e_j }^k }_c n_d }{ \sqrt{ s_j } ( { n_j }^0 + \sqrt{ s_j } ) } \bigg) , \\
{ q_j }'_d & = q_d - 2 \bigg( - { q_j }^0 \frac{ { n_j }^a \omega_i { { e_j }^i }_a }{ \sqrt{ s_j } } + { q_j }^a \omega_i { { e_j }^i }_a + \frac{ { n_j }^a \omega_i { { e_j }^i }_a n_b { q_j }^b }{ \sqrt{ s_{ j } } ( { n_{ j } }^0 + \sqrt{ s_j } ) } \bigg) \bigg( \omega_k { { e_j }^k }_d + \frac{ { n_j }^c \omega_k { { e_j }^k }_c n_d }{ \sqrt{ s_j } ( { n_j }^0 + \sqrt{ s_j } ) } \bigg) . 
\end{align*}
In other words, we consider the Boltzmann equation with the $ j $-th metric $ { g_j }^{ a b } $ to obtain $ f_{ j + 1 } $ with initial data $ f_0 $. Now, we define $ { S_{ j + 1 } }^{ a b } $, $ \rho_{ j + 1 } $ and $ S_{ j + 1 } $ by 
\begin{align*}
{ S_{ j + 1 } }^{ a b } & = ( \det g_j )^{ - \frac12 } \int_{ \bbr^3 } f_{ j + 1 } { p_j }^a { p_j }^b \frac{ d p }{ { p_j }^0 } , \\
\rho_{ j + 1 } & = ( \det g_j )^{ - \frac12 } \int_{ \bbr^3 } f_{ j + 1 } { p_j }^0 \, d p , \\
S_{ j + 1 } & = ( \det g_j )^{ - \frac12 } \int_{ \bbr^3 } f_{ j + 1 } p_a p_b { g_j }^{ a b } \frac{ d p }{ { p_j }^0 } . 
\end{align*}
For the Einstein equations, we consider the following equations to obtain $ { g_{ j + 1 } }^{ a b } $ and $ { k_{ j + 1 } }^{ a b } $: 
\begin{align}
\frac{ d { g_{ j + 1 } }^{ a b } }{ d t } & = - 2 { k_{ j } }^{ a b } , \label{g_j} \\
\frac{ d { k_{ j + 1 } }^{ a b } }{ d t } & = - 2 { k_j }^a_c { k_j }^{ b c } - k_j { k_j }^{ a b } - { R_j }^{ a b } + { S_{ j + 1 } }^{ a b } + \frac12 ( \rho_{ j + 1 } - S_{ j + 1 } ) { g_j }^{ a b } + V ( \phi_{ j + 1 } ) { g_j }^{ a b } , \label{k_j} 
\end{align}
where $ { k_j }^a_c = { k_j }^{ a d } { g_j }_{ c d } $ and $ { R_j }^{ a b } $ is given by a polynomial of $ { g_j }_{ a b } $ and $ { g_j }^{ a b } $. This completes the iteration, and the local existence will be obtained by using the iteration described above.

In Proposition \ref{prop local} below, we prove the existence and uniqueness of local solutions to the EBs system. To prove the uniqueness, we will need the following lemma, which shows that determinants of positive definite matrices are, in some sense, concave.

\begin{lemma}\label{lem HJ}
Let $ A $ and $ B $ be positive definite $ n \times n $ matrices. For any $ 0 < \alpha < 1 $, we have 
\[
\det ( \alpha A + ( 1 - \alpha ) B ) \geq ( \det A )^\alpha ( \det B )^{ 1 - \alpha } , 
\]
where the equality holds, if and only if $ A = B $. 
\end{lemma}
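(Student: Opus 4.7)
The plan is to reduce to a problem about eigenvalues via a simultaneous diagonalization argument and then invoke the weighted AM--GM inequality componentwise.

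First I would exploit the positive definiteness of $A$ to normalize. Since $A$ is symmetric positive definite, the square root $A^{1/2}$ exists and is invertible, so
\[
\alpha A + (1-\alpha) B = A^{1/2}\bigl(\alpha I + (1-\alpha)\, A^{-1/2} B A^{-1/2}\bigr) A^{1/2}.
\]
Writing $C = A^{-1/2} B A^{-1/2}$, which is symmetric positive definite, this identity gives $\det(\alpha A + (1-\alpha)B) = \det(A)\, \det(\alpha I + (1-\alpha)C)$ and $\det(B) = \det(A) \det(C)$. Substituting into the claim, the inequality reduces to
\[
\det\bigl(\alpha I + (1-\alpha) C\bigr) \;\geq\; (\det C)^{1-\alpha}.
\]

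Next I would diagonalize. Since $C$ is symmetric, there is an orthogonal $U$ with $U^{T} C U = \mathrm{diag}(\lambda_1,\ldots,\lambda_n)$ where $\lambda_i > 0$. Both sides of the reduced inequality are invariant under this conjugation, so it becomes
\[
\prod_{i=1}^{n}\bigl(\alpha + (1-\alpha)\lambda_i\bigr) \;\geq\; \prod_{i=1}^{n} \lambda_i^{1-\alpha}.
\]
This follows from the scalar weighted AM--GM inequality applied factor by factor:
\[
\alpha\cdot 1 + (1-\alpha)\lambda_i \;\geq\; 1^{\alpha}\lambda_i^{1-\alpha} = \lambda_i^{1-\alpha},\qquad i=1,\ldots,n,
\]
which holds for all $\lambda_i > 0$ and $0 < \alpha < 1$.

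Finally, for the equality case, the weighted AM--GM inequality becomes an equality in the $i$th factor if and only if $\lambda_i = 1$. Thus equality in the product holds if and only if $\lambda_i = 1$ for every $i$, which means $C = I$, i.e., $A^{-1/2} B A^{-1/2} = I$, or equivalently $A = B$. There is no real obstacle here: the only subtlety is being careful that $A^{-1/2}$ is well defined and that the conjugation $C = A^{-1/2} B A^{-1/2}$ preserves positive definiteness, both of which are immediate from the hypotheses.
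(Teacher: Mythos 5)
Your proof is correct and complete. The paper itself does not provide a proof of this lemma at all; it simply cites Corollary 7.6.8 of Horn and Johnson's \emph{Matrix Analysis}. You have supplied the standard self-contained argument: conjugate by $A^{-1/2}$ to reduce to the case $A = I$, orthogonally diagonalize $C = A^{-1/2}BA^{-1/2}$, and then apply the scalar weighted AM--GM inequality to each eigenvalue factor, with the equality case traced back to $C = I$, i.e.\ $A = B$. This is the classical proof and is almost certainly the one behind the cited reference, so the only genuine difference is that you have made the argument explicit rather than relying on an external citation. One minor point worth flagging: the argument implicitly uses that a real positive definite matrix is symmetric (to form $A^{1/2}$ and to orthogonally diagonalize $C$), which is the standing convention both in the paper (the matrices are Riemannian metrics $g^{ab}$) and in Horn--Johnson, but it is good form to state it.
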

\begin{proof}
We refer to Corollary 7.6.8.\ of Ref.~\cite{HJ} for the proof. 
\end{proof}

\begin{prop}\label{prop local}
Let $ { g_0 }^{ a b } $, $ { k_0 }^{ a b } $, $ \phi_0 $, $ \psi_0 $ and $ f_0 $ be a set of initial data of the EBs system \eqref{EBs1}--\eqref{EBs7}. Suppose that $ { g_0 }^{ a b } $ and $ { k_0 }^{ a b } $ are positive definite and that there exist real numbers $ M_g $, $ M_\phi $ and $ M_f $ satisfying 
\begin{align*}
\max_{ a , b } | { g_0 }^{ a b } | + \max_{ a , b } | { k_0 }^{ a b } | \leq M_g , \qquad | \phi_0 | + | \psi_0 | \leq M_\phi , \qquad \| f_0 \|^2_{ g_0 , m + \frac12 , N } \leq M_f , 
\end{align*}
with $ m > 7 / 2 $ and $ N \geq 3 $. Then, there exists $ T > t_0 $ such that the EBs system admits a unique classical solution $ g^{ a b } $, $ k^{ a b } $, $ \phi $, $ \psi $ and $ f $ on $ [ t_0 , T ] $ satisfying 
\begin{align}
\sup_{ t_0 \leq t \leq T } \max_{ a , b } | { g }^{ a b } ( t ) | + \sup_{ t_0 \leq t \leq T } \max_{ a , b } | { k }^{ a b } ( t ) | \leq 2 M_g , \label{local 1} \\ 
\sup_{ t_0 \leq t \leq T } | \phi ( t ) | + \sup_{ t_0 \leq t \leq T } | \psi ( t ) | \leq 2 M_\phi , \label{local 2} \\ 
\sup_{ t_0 \leq t \leq T } \| f ( t ) \|^2_{ g , m , N } \leq 2 M_f . \label{local 3} 
\end{align}
Moreover, there exists $ c_1 \geq 1 $ such that $ g^{ a b } $ and $ k^{ a b } $ satisfy 
\begin{align}
\frac{ 1 }{ c_1 } | p |^2 \leq Z^2 ( t ) { g }^{ a b } ( t ) p_a p_b \leq c_1 | p |^2 , \qquad & \frac{ 1 }{ c_1 } | p |^2 \leq Z^2 ( t ) { k }^{ a b } ( t ) p_a p_b \leq c_1 | p |^2 , \label{local 4} \\ 
\max_{ a , b } | Z^2 ( t ) { g }^{ a b } ( t ) | \leq c_1 , \qquad & Z^6 ( t ) \det g^{ - 1 } ( t ) \geq \frac{ 1 }{ c_1 } , \label{local 5} 
\end{align}
for any $ p \in \bbr^3 $ on $ [ t_0 , T ] $.
\end{prop}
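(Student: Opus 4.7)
My plan is to construct a solution of the EBs system via the iteration \eqref{phi_j}--\eqref{k_j} and then establish uniqueness via a coupled Gronwall estimate on differences. The scalar-field and Einstein evolution equations are smooth ODEs in the iterate quantities (the Ricci tensor $R^{ab}$ being polynomial in $g^{ab}$, $g_{ab}$ and powers of $(\det g^{-1})^{-1}$), while the Boltzmann step is handled by Proposition \ref{prop Boltzmann} applied with metric $g_j^{ab}$. First I would fix $c_1 \geq 1$ large enough that the initial data satisfy strict versions of \eqref{local 4}--\eqref{local 5}, and show by a bootstrap argument that there is a $T > t_0$ on which all iterates obey \eqref{local 1}--\eqref{local 5} uniformly in $j$. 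The lower bound on $Z^6 \det g_j^{-1}$ is preserved because $\frac{d}{dt}(Z^6 \det g_j^{-1}) = (6\gamma - 2 k_j)(Z^6 \det g_j^{-1})$, a linear ODE with bounded coefficient; positive-definiteness of $k_j^{ab}$ follows from a similar continuity argument applied to \eqref{k_j}. Since the uniform bounds on $g_j^{ab}$ verify \eqref{assumption1}, Proposition \ref{prop Boltzmann} produces $f_{j+1}$ on $[t_0, T]$ with $\|f_{j+1}\|_{g_j, m, N}^2 \leq 2 M_f$.

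Convergence follows from a Cauchy argument: subtract consecutive iterations and derive linear estimates on the differences $\delta g_{j+1}^{ab}, \delta k_{j+1}^{ab}, \delta \phi_{j+1}, \delta \psi_{j+1}$ in sup-norm, and $\delta f_{j+1} = f_{j+1} - f_j$ in the $g$-independent norm \eqref{norm2}, shrinking $T$ further if necessary so that the induced map becomes a contraction. The cross-coupling through the $j$-dependent metric in the Boltzmann equation (affecting $\det g_j$, $\sqrt{s_j}$, the post-collision momenta and the orthonormal frame) is estimated using Lemma \ref{lem dp'g}. Passing to the limit yields a classical solution of the evolution equations, and propagation of the Hamiltonian and momentum constraints \eqref{EBs6}--\eqref{EBs7} is the standard identity that if they hold at $t_0$ they hold on $[t_0, T]$. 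The bounds \eqref{local 1}--\eqref{local 5} pass to the limit.

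The main obstacle is uniqueness, because the norm \eqref{norm1} itself depends on the metric, so differences of two solutions do not naturally live in a fixed Banach space. Given two solutions on $[t_0, T]$ with common initial data, I would measure the differences $\delta g^{ab}, \delta k^{ab}, \delta \phi, \delta \psi$ in sup-norm and $\delta f = f^{(1)} - f^{(2)}$ in the $g$-independent norm $\|\cdot\|_{m, N}$ from \eqref{norm2}, while retaining the individual bounds $\|f^{(i)}\|_{g^{(i)}, m, N}^2 \leq 2 M_f$ as a priori controls on each solution. The collision-operator error splits into a $\delta f$ piece (handled by the a priori bound exactly as in the estimates leading to \eqref{iterationB}) and a $\delta g$ piece; the latter is controlled via Lemma \ref{lem dp'g} for the post-collision momenta, together with the Lipschitz dependence of $(\det g)^{-1/2}$ and $\sqrt{s}$ on $g^{ab}$ along the straight-line homotopy $\alpha g^{(1)} + (1-\alpha) g^{(2)}$ for $\alpha \in [0,1]$. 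Lemma \ref{lem HJ} plays the crucial role here: it guarantees that $\det(\alpha g^{(1)} + (1-\alpha) g^{(2)})$ stays bounded below along the entire homotopy, so the Lipschitz constants do not blow up when either metric approaches degeneracy. A coupled Gronwall inequality on the vector $(\delta g^{ab}, \delta k^{ab}, \delta \phi, \delta \psi, \|\delta f\|_{m, N})$ then forces every component to vanish on $[t_0, T]$, completing uniqueness.
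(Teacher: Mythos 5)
Your existence argument tracks the paper's iteration scheme and is essentially correct; the small mismatch in the determinant ODE (in the iteration one gets $\tfrac{d}{dt}\det g_{j+1}^{-1}=-2(g_{j+1})_{ab}k_j^{ab}\det g_{j+1}^{-1}$ rather than the cleaner $-2k_j\det g_j^{-1}$, since $g$ and $k$ are offset by one index) is harmless. The genuine gap is in uniqueness, in your claim that the $\delta f$ piece of the collision-operator error is ``handled by the a priori bound exactly as in the estimates leading to \eqref{iterationB}.'' Those estimates depend crucially on the exponential weight $e^{p^0}$ in $\|\cdot\|_{g,m,N}$: in the gain term one splits $e^{p^0}=e^{\frac12 p'^0}e^{\frac12 q'^0}e^{-\frac12 q^0}$ via energy conservation, and the resulting $e^{-q^0}$ is what renders $\int (q^0)^{18N}e^{-q^0}\,dq$ finite. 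If you measure $\delta f=f^{(1)}-f^{(2)}$ in the weight-free norm $\|\cdot\|_{m,N}$ of \eqref{norm2}, then after Cauchy--Schwarz and the change of variables \eqref{change} the second factor contains $\int \frac{dq}{q^0 s}$, which diverges (for fixed $p$, $s\sim q^0$ as $|q|\to\infty$, so the integrand is $\sim(q^0)^{-2}$ against $d q\sim r^2dr$). The individual solutions $f^{(i)}$ have exponential momentum decay, but that decay is not transferred to $\delta f$ by the norm you chose, so the Gronwall inequality does not close.

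The paper's resolution is the weighted substitution $\tilde f_{\mathfrak i}=e^{\frac12 p_{\mathfrak i}^0}f_{\mathfrak i}$. Each $\tilde f_{\mathfrak i}$ solves \eqref{Boltzmanntilde}, whose collision kernel carries an explicit $e^{-\frac12 q^0}$ (restoring the needed $q$-integrability) and whose extra term $\tfrac{1}{2p^0}k^{ab}p_ap_b\,\tilde f$ has a favorable sign, see \eqref{unique4}. Since $\tilde f_1-\tilde f_2$ lives in the fixed metric-independent space $L^2_p$, $\|\tilde f_1-\tilde f_2\|_{0,0}$ can be estimated by Gronwall, with the contributions of $e^{\frac12 p_1^0}-e^{\frac12 p_2^0}$ of order $\max_{a,b}|g_1^{ab}-g_2^{ab}|$ by the very homotopy and mean-value argument you describe. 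Your invocation of Lemma \ref{lem HJ} to keep $\det g_{\mathfrak s}^{-1}$ bounded below along the interpolation, and of Lemma \ref{lem dp'g} for $\partial p'/\partial g^{uv}$, is correct, but these inputs only become useful once the divergent $q$-integral has been cured. Note also that the paper works with $\|\tilde f_1-\tilde f_2\|_{0,0}$ only, so the one-derivative Lemma \ref{lem dp'g} suffices; your choice of $\|\delta f\|_{m,N}$ would additionally demand higher-order analogues of that lemma which the paper never proves.
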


The proof of Proposition \ref{prop local} will be given  separately in Sections \ref{sec existence} and \ref{sec uniqueness}, since it requires lengthy calculations. The proof of the existence will be given in Section \ref{sec existence}, and the proof of uniqueness will be given in Section \ref{sec uniqueness}.

\subsubsection{Proof of Proposition \ref{prop local} (existence)}\label{sec existence}
Let us consider the iteration described above. Below, we will write 
\begin{align*}
X_0 & = \max_{ a , b } | { g_0 }^{ a b } | + \max_{ a , b } | { k_0 }^{ a b } | , \\ 
Y_0 & = | \phi_0 | + | \psi_0 | , 
\end{align*}
and 
\begin{align*}
X_j ( t ) & = \sup_{ t_0 \leq \tau \leq t } \max_{ a , b } | { g_j }^{ a b } ( \tau ) | + \sup_{ t_0 \leq \tau \leq t } \max_{ a , b } | { k_j }^{ a b } ( \tau ) | , \\ 
Y_j ( t ) & = \sup_{ t_0 \leq \tau \leq t } | \phi_j ( \tau ) | + \sup_{ t_0 \leq \tau \leq t } | \psi_j ( \tau ) | , 
\end{align*}
for $ j \geq 0 $. We notice that $ X_j ( t_0 ) = X_0 $ and $ Y_j ( t_0 ) = Y_0 $ for all $ j \geq 0 $. Let $ M_g $, $ M_\phi $ and $ M_f $ be real numbers such that 
\[
X_0 \leq M_g , \qquad Y_0 \leq M_\phi , \qquad \| f_0 \|^2_{ g_0 , m + \frac12 , N } \leq M_f . 
\]
Since $ { g_0 }^{ a b } $ and $ { k_0 }^{ a b } $ are positive definite, we can find a real number $ c_0 \geq 1 $ satisfying 
\begin{align}
\frac{ 1 }{ c_0 } | p |^2 \leq Z^2 ( t_0 ) { g_0 }^{ a b } p_a p_b \leq c_0 | p |^2 , \qquad & \frac{ 1 }{ c_0 } | p |^2 \leq Z^2 ( t_0 ) { k_0 }^{ a b } p_a p_b \leq c_0 | p |^2 , \label{c est1} \\
\max_{ a , b } | Z^2 ( t_0 ) { g_0 }^{ a b } | \leq c_0 , \qquad & Z^6 ( t_0 ) \det g_0^{ - 1 } \geq \frac{ 1 }{ c_0 } , \label{c est2}
\end{align}
for any $ p \in \bbr^3 $. Now, suppose that there exists an interval $ [ t_0 , T ] $ on which we have 
\[
X_j ( t ) \leq 2 M_g , \qquad Y_j ( t ) \leq 2 M_\phi , \qquad \sup_{ t_0 \leq \tau \leq t } \| f_j ( \tau ) \|^2_{ g_{ j - 1 } , m , N } \leq 2 M_f , 
\]
where $ \| f_0 ( \tau ) \|^2_{ g_{ - 1 } , m , N } $ is to be understood as $ \| f_0 ( \tau ) \|^2_{ g_0 , m , N } $, and 
\begin{align*}
\frac{ 1 }{ 2 c_0 } | p |^2 \leq Z^2 ( t ) { g_j }^{ a b } ( t ) p_a p_b \leq 2 c_0 | p |^2 , \qquad & \frac{ 1 }{ 2 c_0 } | p |^2 \leq Z^2 ( t ) { k_j }^{ a b } ( t ) p_a p_b \leq 2 c_0 | p |^2 , \\ 
\max_{ a , b } | Z^2 ( t ) { g_j }^{ a b } ( t ) | \leq 2 c_0 , \qquad & Z^6 ( t ) \det g_j^{ - 1 } ( t ) \geq \frac{ 1 }{ 2 c_0 } , 
\end{align*}
for any $ p \in \bbr^3 $. Here, $ \det g_0^{ - 1 } $ and $ \det g_j^{ - 1 } ( t ) $ denote the determinants of the matrices $ { g_0 }^{ a b } $ and $ { g_j }^{ a b } ( t ) $, respectively. Below, $ C_M $ will denote a positive constant which may depend on $ M_g $, $ M_\phi $, $ M_f $ or $ c_0 $, but does not depend on $ j $. 

We first consider the equations \eqref{phi_j} and \eqref{psi_j} for the scalar field, which are easily estimated as follows: 
\begin{align*}
| \phi_{ j + 1 } ( t ) | & \leq | \phi_0  | + C ( t - t_0 ) Y_j ( t ) , \\
| \psi_{ j + 1 } ( t ) | & \leq | \psi_0 | + C ( t - t_0 ) \left( c_0 X_j^3 ( t ) Y_j ( t ) + \sup_{ | x | \leq Y_j ( t ) } | V' ( x ) | \right) , 
\end{align*}
where we used $ k_j = { k_j }^{ a b } { g_j }_{ a b } $ and $ { g_j }_{ a b } $ is a quadratic polynomial of $ { g_j }^{ a b } $ multiplied by $ \det g_j $, which is the determinant of the matrix $ { g_j }_{ a b } $. Since $ V' $ is a smooth function, we obtain 
\begin{align*}
Y_{ j + 1 } ( t ) \leq M_\phi + C_M ( t - t_0 ) . 
\end{align*}
Hence, we find an interval $ [ t_0 , T_1 ] \subset [ t_0 , T ] $ on which we have 
\begin{align}
Y_{ j + 1 } ( t ) \leq 2 M_\phi . \label{phi_j est} 
\end{align}
For the Boltzmann equation, we apply Proposition \ref{prop Boltzmann} to obtain $ f_{ j + 1 } $. We note that \eqref{f_j} is the equation of $ f_{ j + 1 } $ with the metric $ { g_j }^{ a b } $, which satisfies the conditions of Proposition \ref{prop Boltzmann}, so that we obtain a time interval $ [ t_0 , T_2 ] \subset [ t_0 , T_1 ] $ on which $ f_{ j + 1 } $ satisfies 
\begin{align}
\sup_{ t_0 \leq \tau \leq T_2 } \| f_{ j + 1 } ( \tau ) \|^2_{ g_j , m , N } \leq 2 M_f . \label{f_j est}
\end{align}
For the estimates of $ { g_{ j + 1 } }^{ a b } $ and $ { k_{ j + 1 } }^{ a b } $, we need to consider the quantities $ { S_{ j + 1 } }^{ a b } $, $ \rho_{ j + 1 } $ and $ S_{ j + 1 } $ in \eqref{g_j} and \eqref{k_j}: 
\begin{align*}
{ S_{ j + 1 } }^{ a b } & = ( \det g_{ j } )^{ - \frac12 } \int_{ \bbr^3 } f_{ j + 1 } { p_{ j } }^a { p_{ j } }^b \frac{ d p }{ { p_{ j } }^0 } \\
& = ( \det g_{ j }^{ - 1 } )^{ \frac12 } { g_{ j } }^{ a c } { g_{ j } }^{ b d } \int_{ \bbr^3 } f_{ j + 1 } p_c p_d \frac{ d p }{ { p_{ j } }^0 } , 
\end{align*}
which can be estimated, since $ \det g_{ j }^{ - 1 } $ is a cubic polynomial of $ { g_{ j } }^{ a b } $, as follows: 
\begin{align}
| { S_{ j + 1 } }^{ a b } | & \leq C X_{ j }^{ \frac{ 7 }{ 2 } } \left( \int_{ \bbr^3 } | f_{ j + 1 } |^2 \langle p \rangle^{ 2 m } e^{ { p_{ j } }^0 } \, dp \right)^{ \frac12 } \left( \int_{ \bbr^3 } \langle p \rangle^{ 4 - 2 m } e^{ - { p_{ j } }^0 } \frac{ d p }{ ( { p_{ j } }^0 )^2 } \right)^{ \frac12 } \nonumber \\
& \leq C X_{ j }^{ \frac{ 7 }{ 2 } } \| f_{ j + 1 } \|_{ g_j , m , N } , \label{matter est} 
\end{align}
since $ m > 7 / 2 $. The quantities $ \rho_{ j + 1 } $ and $ S_{ j + 1 } $ are similarly estimated, and we obtain by \eqref{f_j est}: 
\begin{align}
| { S_{ j + 1 } }^{ a b } | , | \rho_{ j + 1 } | , | S_{ j + 1 } | \leq C_M . \label{S_j est}
\end{align}
Since we have from \eqref{g_j} and \eqref{k_j} 
\begin{align}
{ g_{ j + 1 } }^{ a b } ( t ) & = { g_0 }^{ a b } - 2 \int_{ t_0 }^t { k_{ j } }^{ a b } ( \tau ) \, d \tau , \label{g_j int} \\
{ k_{ j + 1 } }^{ a b } ( t ) & = { k_0 }^{ a b } + \int_{ t_0 }^t - 2 { k_j }^a_c ( \tau ) { k_j }^{ b c } ( \tau ) - k_j ( \tau ) { k_j }^{ a b } ( \tau ) - { R_j }^{ a b } ( \tau ) \, d \tau \nonumber \\
& \quad + \int_{ t_0 }^t { S_{ j + 1 } }^{ a b } ( \tau ) + \frac12 ( \rho_{ j + 1 } - S_{ j + 1 } ) { g_j }^{ a b } ( \tau ) + V ( \phi_{ j + 1 } ) { g_j }^{ a b } ( \tau ) \, d \tau , \label{k_j int} 
\end{align}
we obtain by applying \eqref{phi_j est} and \eqref{S_j est} a time interval $ [ t_0 , T_3 ] \subset [ t_0 , T_2 ] $ on which 
\begin{align}
X_{ j + 1 } ( t ) \leq 2 M_g . \label{g_j est} 
\end{align}
Now, concerning the positive definiteness of $ { g_{ j + 1 } }^{ a b } $, we notice from \eqref{g_j int} that 
\begin{align*}
{ g_0 }^{ a b } p_a p_b - C_M ( t - t_0 ) | p |^2 \leq { g_{ j + 1 } }^{ a b } ( t ) p_a p_b \leq { g_0 }^{ a b } p_a p_b + C_M ( t - t_0 ) | p |^2 . 
\end{align*}
Applying the initial condition for $ { g_0 }^{ a b } p_a p_b $ in \eqref{c est1}, we have 
\[
\left( \frac{ Z^2 ( t ) }{ c_0 Z^2 ( t_0 ) } - C_M Z^2 ( t ) ( t - t_0 ) \right) | p |^2 \leq Z^2 ( t ) { g_{ j + 1 } }^{ a b } ( t ) p_a p_b \leq \left( \frac{ c_0 Z^2 ( t ) }{ Z^2 ( t_0 ) } + C_M Z^2 ( t ) ( t - t_0 ) \right) | p |^2 . 
\]
Since $ Z $ is continuous, we can find $ [ t_0 , T_4 ] \subset [ t_0 , T_3 ] $ on which 
\[
\frac{ 1 }{ 2 c_0 } | p |^2 \leq Z^2 ( t ) { g_{ j + 1 } }^{ a b } ( t ) p_a p_b \leq 2 c_0 | p |^2 . 
\]
In a similar way we can find $ [ t_0 , T_5 ] \subset [ t_0 , T_4 ] $ on which 
\[
\frac{ 1 }{ 2 c_0 } | p |^2 \leq Z^2 ( t ) { k_{ j + 1 } }^{ a b } ( t ) p_a p_b \leq 2 c_0 | p |^2 , \qquad \max_{ a , b } | Z^2 ( t ) { g_{ j + 1 } }^{ a b } ( t ) | \leq 2 c_0 . 
\]
For the quantity $ \det g_{ j + 1 }^{ - 1 } $ we notice that 
\[
\frac{ d \det g_{ j + 1 }^{ - 1 } }{ d t } = \left( { g_{ j + 1 } }_{ a b } \frac{ d { g_{ j + 1 } }^{ a b } }{ d t } \right) \det g_{ j + 1 }^{ - 1 } = - 2 { g_{ j + 1 } }_{ a b } { k_j }^{ a b } \det g_{ j + 1 }^{ - 1 } , 
\]
which implies that 
\[
\det g_{ j + 1 }^{ - 1 } ( t ) = \det g_0^{ - 1 } \exp \left( - 2 \int_{ t_0 }^t { g_{ j + 1 } }_{ a b } ( \tau ) { k_j }^{ a b } ( \tau ) \, d \tau \right) . 
\]
Applying the initial condition for $ \det g_0^{ - 1 } $ in \eqref{c est2} we finally obtain an interval $ [ t_0 , T_6 ] \subset [ t_0 , T_5 ] $ on which 
\[
Z^6 ( t ) \det g_{ j + 1 }^{ - 1 } ( t ) \geq \frac{ Z^6 ( t ) }{ c_0 Z^6 ( t_0 ) } \exp \left( - C_M ( t - t_0 ) \right) \geq \frac{ 1 }{ 2 c_0 } . 
\]
Note that the interval $ [ t_0 , T_6 ] $ does not depend on $ j $. We conclude that $ X_j $, $ Y_j $ and $ \| f_j \|^2_{ g_{ j - 1 } , m , N } $ are bounded uniformly on $ j $, and we obtain a solution $ g^{ a b } $, $ k^{ a b } $, $ \phi $, $ \psi $ and $ f $ to the EBs system by taking $ j \to \infty $ on $ [ t_0 , T_6 ] $, which satisfies the properties \eqref{local 1}--\eqref{local 5}.

\subsubsection{Proof of Proposition \ref{prop local} (uniqueness)}\label{sec uniqueness} 
Uniqueness can be proved by the standard argument, but we need to make a slight modification to the Boltzmann equation. Suppose that $ { g_{ \mathfrak 1 } }^{ a b } $, $ { k_{ \mathfrak 1 } }^{ a b } $, $ \phi_{ \mathfrak 1 } $, $ \psi_{ \mathfrak 1 } $, $ f_{ \mathfrak 1 } $ and $ { g_{ \mathfrak 2 } }^{ a b } $, $ { k_{ \mathfrak 2 } }^{ a b } $, $ \phi_{ \mathfrak 2 } $, $ \psi_{ \mathfrak 2 } $, $ f_{ \mathfrak 2 } $ are two solutions of the EBs system, on the time interval $ [ t_0 , T_6 ] $, corresponding to an initial data $ { g_0 }^{ a b } $, $ { k_0 }^{ a b } $, $ \phi_0 $, $ \psi_0 $ and $ f_0 $. Now, we define 
\[
{ \tilde f }_{ \mathfrak 1 } ( t , p ) = e^{ \frac12 { p_{ \mathfrak 1 } }^0 } f_{ \mathfrak 1 } ( t , p ) , \qquad { \tilde f }_{ \mathfrak 2 } ( t , p ) = e^{ \frac12 { p_{ \mathfrak 2 } }^0 } f_{ \mathfrak 2 } ( t , p ) , 
\]
where $ { p_{ \mathfrak 1 } }^0 = \sqrt{ 1 + { g_{ \mathfrak 1 } }^{ a b } p_a p_b } $ and $ { p_{ \mathfrak 2 } }^0 = \sqrt{ 1 + { g_{ \mathfrak 2 } }^{ a b } p_a p_b } $. Then, $ { \tilde f }_{ \mathfrak 1 } $ satisfies 
\begin{align}
\frac{ \partial { \tilde f }_{ \mathfrak 1 } }{ \partial t } + \frac{ { k_{ \mathfrak 1 } }^{ a b } p_a p_b }{ 2 { p_{ \mathfrak 1 } }^0 } { \tilde f }_{ \mathfrak 1 } + \frac{ 1 }{ { p_{ \mathfrak 1 } }^0 } C^d_{ b c } p_d { p_{ \mathfrak 1 } }^b \frac{ \partial { \tilde f }_{ \mathfrak 1 } }{ \partial p_c } = { \tilde Q }_{ \mathfrak 1 + } ( { \tilde f }_{ \mathfrak 1 } , { \tilde f }_{ \mathfrak 1 } ) - { \tilde Q }_{ \mathfrak 1 - } ( { \tilde f }_{ \mathfrak 1 } , { \tilde f }_{ \mathfrak 1 } ) , \label{Boltzmanntilde}
\end{align}
where 
\begin{align*}
{ \tilde Q }_{ \mathfrak 1 + } ( { \tilde f }_{ \mathfrak 1 } , { \tilde f }_{ \mathfrak 1 } ) & = ( \det g_{ \mathfrak 1 }^{ - 1 } )^{ \frac12 } \iint \frac{ 1 }{ { p_{ \mathfrak 1 } }^0 { q_{ \mathfrak 1 } }^0 \sqrt{ s_{ \mathfrak 1 } } } { \tilde f }_{ \mathfrak 1 } ( p_{ \mathfrak 1 }' ) { \tilde f }_{ \mathfrak 1 } ( q_{ \mathfrak 1 }' ) e^{ - \frac12 { q_{ \mathfrak 1 } }^0 } \, d \omega \, d q , \\ 
{ \tilde Q }_{ \mathfrak 1 - } ( { \tilde f }_{ \mathfrak 1 } , { \tilde f }_{ \mathfrak 1 } ) & = ( \det g_{ \mathfrak 1 }^{ - 1 } )^{ \frac12 } \iint \frac{ 1 }{ { p_{ \mathfrak 1 } }^0 { q_{ \mathfrak 1 } }^0 \sqrt{ s_{ \mathfrak 1 } } } { \tilde f }_{ \mathfrak 1 } ( p ) { \tilde f }_{ \mathfrak 1 } ( q ) e^{ - \frac12 { q_{ \mathfrak 1 } }^0 } \, d \omega \, d q . 
\end{align*}
Here, $ { p_{ \mathfrak 1 } }^b = { g_{ \mathfrak 1 } }^{ a b } p_a $, $ s_{ \mathfrak 1 } = 2 + 2 { p_{ \mathfrak 1 } }^0 { q_{ \mathfrak 1 } }^0 - 2 { g_{ \mathfrak 1 } }^{ a b } p_a q_b $, and $ p_{ \mathfrak 1 }' $ and $ q_{ \mathfrak 1 }' $ are defined by \eqref{p'} and \eqref{q'} using the metric $ { g_{ \mathfrak 1 } }^{ a b } $. We obtain a similar equation for $ { \tilde f }_{ \mathfrak 2 } $. The uniqueness will be proved by using the modified equation \eqref{Boltzmanntilde} above. Let us first consider the equation for $ { \tilde f }_{ \mathfrak 1 } - { \tilde f }_{ \mathfrak 2 } $: 
\begin{align}
& \frac{ \partial ( { \tilde f }_{ \mathfrak 1 } - { \tilde f }_{ \mathfrak 2 } ) }{ \partial t } + \frac{ 1 }{ 2 } \left( \frac{ 1 }{ { p_{ \mathfrak 1 } }^0 } - \frac{ 1 }{ { p_{ \mathfrak 2 } }^0 } \right) { k_{ \mathfrak 1 } }^{ a b } p_a p_b { \tilde f }_{ \mathfrak 1 } + \frac{ 1 }{ 2 { p_{ \mathfrak 2 } }^0 } \left( { k_{ \mathfrak 1 } }^{ a b } - { k_{ \mathfrak 2 } }^{ a b } \right) p_a p_b { \tilde f }_{ \mathfrak 1 } \nonumber \\ 
& + \frac{ 1 }{ 2 { p_{ \mathfrak 2 } }^0 } { k_{ \mathfrak 2 } }^{ a b } p_a p_b \left( { \tilde f }_{ \mathfrak 1 } - { \tilde f }_{ \mathfrak 2 } \right) + \left( \frac{ 1 }{ { p_{ \mathfrak 1 } }^0 } - \frac{ 1 }{ { p_{ \mathfrak 2 } }^0 } \right) C^d_{ b c } p_d { p_{ \mathfrak 1 } }^b \frac{ \partial { \tilde f }_{ \mathfrak 1 } }{ \partial p_c } + \frac{ 1 }{ { p_{ \mathfrak 2 } }^0 } C^d_{ b c } p_d \left( { p_{ \mathfrak 1 } }^b - { p_{ \mathfrak 2 } }^b \right) \frac{ \partial { \tilde f }_{ \mathfrak 1 } }{ \partial p_c } \nonumber \\ 
& + \frac{ 1 }{ { p_{ \mathfrak 2 } }^0 } C^d_{ b c } p_d { p_{ \mathfrak 2 } }^b \frac{ \partial ( { \tilde f }_{ \mathfrak 1 } - { \tilde f }_{ \mathfrak 2 } ) }{ \partial p_c } = { \tilde Q }_{ \mathfrak 1 + } ( { \tilde f }_{ \mathfrak 1 } , { \tilde f }_{ \mathfrak 1 } ) - { \tilde Q }_{ \mathfrak 2 + } ( { \tilde f }_{ \mathfrak 2 } , { \tilde f }_{ \mathfrak 2 } ) - { \tilde Q }_{ \mathfrak 1 - } ( { \tilde f }_{ \mathfrak 1 } , { \tilde f }_{ \mathfrak 1 } ) + { \tilde Q }_{ \mathfrak 2 - } ( { \tilde f }_{ \mathfrak 2 } , { \tilde f }_{ \mathfrak 2 } ) . \label{Boltzmann12}
\end{align}
Multiplying the above equation by $ { \tilde f }_{ \mathfrak 1 } - { \tilde f }_{ \mathfrak 2 } $ and integrating it over $ \bbr^3_p $, we obtain from the first term on the left hand side: 
\begin{align}
\frac12 \frac{ d }{ d t } \| { \tilde f }_{ \mathfrak 1 } - { \tilde f }_{ \mathfrak 2 } \|^2_{ 0 , 0 } , \label{unique1}
\end{align}
where the norm $ \| \cdot \|_{ m , N } $ is defined in \eqref{norm2}. For the second term we notice that $ { p_{ \mathfrak 1 } }^0 $, $ { p_{ \mathfrak 2 } }^0 $ and $ \langle p \rangle $ are all equivalent, since $ { g_{ \mathfrak 1 } }^{ a b } $ and $ { g_{ \mathfrak 2 } }^{ a b } $ both satisfy the property \eqref{local 4} on $ [ t_0 , T_6 ] $: 
\[
\left| \frac{ 1 }{ { p_{ \mathfrak 1 } }^0 } - \frac{ 1 }{ { p_{ \mathfrak 2 } }^0 } \right| \leq \frac{ C }{ \langle p \rangle^2 } \left| { p_{ \mathfrak 1 } }^0 - { p_{ \mathfrak 2 } }^0 \right| . 
\]
To estimate the quantity $ \left| { p_{ \mathfrak 1 } }^0 - { p_{ \mathfrak 2 } }^0 \right| $, we consider $ p^0 $ as a function of $ g^{ a b } $ such that $ { p_{ \mathfrak 1 } }^0 = p^0 ( { g_{ \mathfrak 1 } }^{ a b } ) $ and $ { p_{ \mathfrak 2 } }^0 = p^0 ( { g_{ \mathfrak 2 } }^{ a b } ) $. Let us write for $ { \mathfrak s } \in [ 1 , 2 ] $, 
\[
{ g_{ \mathfrak s } }^{ a b } = ( 2 - { \mathfrak s } ) { g_{ \mathfrak 1 } }^{ a b } + ( { \mathfrak s } - 1 ) { g_{ \mathfrak 2 } }^{ a b } . 
\]
Then, the metric $ { g_{ \mathfrak s } }^{ a b } $ satisfies 
\begin{align}
\frac{ 1 }{ c_1 } | p |^2 \leq Z^2 { g_{ \mathfrak s } }^{ a b } p_a p_b \leq c_1 | p |^2 , \qquad \max_{ a , b } | Z^2 { g_{ \mathfrak s } }^{ a b } | \leq c_1 , \qquad Z^6 \det { g_{ \mathfrak s } }^{ - 1 } \geq \frac{ 1 }{ c_1 } , \label{g_s est}
\end{align}
since $ { g_{ \mathfrak 1 } }^{ a b } $ and $ { g_{ \mathfrak 2 } }^{ a b } $ both satisfy \eqref{local 4} and \eqref{local 5}. Here, we used Lemma \ref{lem HJ} for the third one. We now obtain
\begin{align}
\left| { p_{ \mathfrak 1 } }^0 - { p_{ \mathfrak 2 } }^0 \right| & = \left| \int_1^2 \frac{ \partial ( p^0 ( { g_{ \mathfrak s } }^{ a b } ) ) }{ \partial { \mathfrak s } } \, d { \mathfrak s } \right| \nonumber \\ 
& \leq C \int_1^2 \left| \frac{ \partial { p }^0 }{ \partial g^{ c d } } ( { g_{ \mathfrak s } }^{ a b } ) \right| | { g_{ \mathfrak 2 } }^{ c d } - { g_{ \mathfrak 1 } }^{ c d } | \, d { \mathfrak s } \nonumber \\ 
& \leq C \langle p \rangle \left( \max_{ c , d } | { g_{ \mathfrak 2 } }^{ c d } - { g_{ \mathfrak 1 } }^{ c d } | \right) , \label{g1-g2}
\end{align}
where we used the same argument as in \eqref{dq^0g} in the last inequality. Hence, we obtain from the second term:
\begin{align}
& \left| \int_{ \bbr^3 } \frac{ 1 }{ 2 } \left( \frac{ 1 }{ { p_{ \mathfrak 1 } }^0 } - \frac{ 1 }{ { p_{ \mathfrak 2 } }^0 } \right) { k_{ \mathfrak 1 } }^{ a b } p_a p_b { \tilde f }_{ \mathfrak 1 } ( { \tilde f }_{ \mathfrak 1 } - { \tilde f }_{ \mathfrak 2 } ) \, d p \right| \nonumber \\ 
& \leq C \max_{ a , b } | { g_{ \mathfrak 1 } }^{ a b } - { g_{ \mathfrak 2 } }^{ a b } | \int \langle p \rangle | { \tilde f }_{ \mathfrak 1 } | | { \tilde f }_{ \mathfrak 1 } - { \tilde f }_{ \mathfrak 2 } | \, d p \nonumber \\ 
& \leq C \max_{ a , b } | { g_{ \mathfrak 1 } }^{ a b } - { g_{ \mathfrak 2 } }^{ a b } | \left( \int | { \tilde f }_{ \mathfrak 1 } |^2 \langle p \rangle^2 \, d p \right)^{ \frac12 } \left( \int | { \tilde f }_{ \mathfrak 1 } - { \tilde f }_{ \mathfrak 2 } |^2 \, d p \right)^{ \frac12 } \nonumber \\ 
& \leq C \left( \max_{ a , b } | { g_{ \mathfrak 1 } }^{ a b } - { g_{ \mathfrak 2 } }^{ a b } | \right) \| { \tilde f }_{ \mathfrak 1 } - { \tilde f }_{ \mathfrak 2 } \|_{ 0 , 0 } , \label{unique2}
\end{align}
where we used $ \| { \tilde f }_{ \mathfrak 1 } \|_{ m , 0 } = \| f_{ \mathfrak 1 } \|_{ g_{ \mathfrak 1 } , m , 0 } $, which is bounded on $ [ t_0 , T_6 ] $. The third term on the left hand side of \eqref{Boltzmann12} is similarly estimated: 
\begin{align}
& \left| \int_{ \bbr^3 } \frac{ 1 }{ 2 { p_{ \mathfrak 2 } }^0 } \left( { k_{ \mathfrak 1 } }^{ a b } - { k_{ \mathfrak 2 } }^{ a b } \right) p_a p_b { \tilde f }_{ \mathfrak 1 } ( { \tilde f }_{ \mathfrak 1 } - { \tilde f }_{ \mathfrak 2 } ) \, d p \right| \nonumber \\ 
& \leq C \left( \max_{ a , b } | { k_{ \mathfrak 1 } }^{ a b } - { k_{ \mathfrak 2 } }^{ a b } | \right) \| { \tilde f }_{ \mathfrak 1 } - { \tilde f }_{ \mathfrak 2 } \|_{ 0 , 0 } . \label{unique3}
\end{align} 
For the fourth term we have the non-negativity: 
\begin{align}
\int_{ \bbr^3 } \frac{ 1 }{ 2 { p_{ \mathfrak 2 } }^0 } { k_{ \mathfrak 2 } }^{ a b } p_a p_b \left( { \tilde f }_{ \mathfrak 1 } - { \tilde f }_{ \mathfrak 2 } \right)^2 \, d p \geq 0 , \label{unique4} 
\end{align}
since $ { k_{ \mathfrak 2 } }^{ a b } $ is positive definite, so that it will be ignored. The fifth term is estimated as follows: 
\begin{align}
& \left| \int_{ \bbr^3 } \left( \frac{ 1 }{ { p_{ \mathfrak 1 } }^0 } - \frac{ 1 }{ { p_{ \mathfrak 2 } }^0 } \right) C^d_{ b c } p_d { p_{ \mathfrak 1 } }^b \frac{ \partial { \tilde f }_{ \mathfrak 1 } }{ \partial p_c } ( { \tilde f }_{ \mathfrak 1 } - { \tilde f }_{ \mathfrak 2 } ) \, d p \right| \nonumber \\ 
& \leq C \max_{ a , b } | { g_{ \mathfrak 1 } }^{ a b } - { g_{ \mathfrak 2 } }^{ a b } | \int \langle p \rangle \left| \frac{ \partial { \tilde f }_{ \mathfrak 1 } }{ \partial p_c } \right| | { \tilde f }_{ \mathfrak 1 } - { \tilde f }_{ \mathfrak 2 } | \, d p \nonumber \\ 
& \leq C \max_{ a , b } | { g_{ \mathfrak 1 } }^{ a b } - { g_{ \mathfrak 2 } }^{ a b } | \left( \int \left| \frac{ \partial { \tilde f }_{ \mathfrak 1 } }{ \partial p_c } \right|^2 \langle p \rangle^2 \, d p \right)^{ \frac12 } \left( \int | { \tilde f }_{ \mathfrak 1 } - { \tilde f }_{ \mathfrak 2 } |^2 \, d p \right)^{ \frac12 } \nonumber \\ 
& \leq C \left( \max_{ a , b } | { g_{ \mathfrak 1 } }^{ a b } - { g_{ \mathfrak 2 } }^{ a b } | \right) \| { \tilde f }_{ \mathfrak 1 } - { \tilde f }_{ \mathfrak 2 } \|_{ 0 , 0 } , \label{unique5}
\end{align}
where we used $ \| { \tilde f }_{ \mathfrak 1 } \|_{ m , N } \leq C \| f_{ \mathfrak 1 } \|_{ g_{ \mathfrak 1 } , m , N } $, which is bounded on $ [ t_0 , T ] $. For the sixth term, we note that
\[
| { p_{ \mathfrak 1 } }^b - { p_{ \mathfrak 2 } }^b | \leq C \left( \max_{ a , b } | { g_{ \mathfrak 1 } }^{ a b } - { g_{ \mathfrak 2 } }^{ a b } | \right) \langle p \rangle . 
\]
Hence, we obtain 
\begin{align}
& \left| \int_{ \bbr^3 } \frac{ 1 }{ { p_{ \mathfrak 2 } }^0 } C^d_{ b c } p_d \left( { p_{ \mathfrak 1 } }^b - { p_{ \mathfrak 2 } }^b \right) \frac{ \partial { \tilde f }_{ \mathfrak 1 } }{ \partial p_c } ( { \tilde f }_{ \mathfrak 1 } - { \tilde f }_{ \mathfrak 2 } ) \, d p \right| \nonumber \\ 
& \leq C \left( \max_{ a , b } | { g_{ \mathfrak 1 } }^{ a b } - { g_{ \mathfrak 2 } }^{ a b } | \right) \| { \tilde f }_{ \mathfrak 1 } - { \tilde f }_{ \mathfrak 2 } \|_{ 0 , 0 } . \label{unique6}
\end{align}
For the seventh term, we integrate it by parts to obtain 
\begin{align}
\left| \int_{ \bbr^3 } \frac{ 1 }{ { p_{ \mathfrak 2 } }^0 } C^d_{ b c } p_d { p_{ \mathfrak 2 } }^b \frac{ \partial ( { \tilde f }_{ \mathfrak 1 } - { \tilde f }_{ \mathfrak 2 } ) }{ \partial p_c } ( { \tilde f }_{ \mathfrak 1 } - { \tilde f }_{ \mathfrak 2 } ) \, d p \right| \leq C \| { \tilde f }_{ \mathfrak 1 } - { \tilde f }_{ \mathfrak 2 } \|^2_{ 0 , 0 } . \label{unique7} 
\end{align}
Now, it remains to estimate the collision terms on the right hand side of \eqref{Boltzmann12}. We first consider the loss terms. We can write 
\begin{align*}
& { \tilde Q }_{ \mathfrak 1 - } ( { \tilde f }_{ \mathfrak 1 } , { \tilde f }_{ \mathfrak 1 } ) - { \tilde Q }_{ \mathfrak 2 - } ( { \tilde f }_{ \mathfrak 2 } , { \tilde f }_{ \mathfrak 2 } ) \\ 
& = \left( ( \det g_{ \mathfrak 1 }^{ - 1 } )^{ \frac12 } - ( \det g_{ \mathfrak 2 }^{ - 1 } )^{ \frac12 } \right) \iint \frac{ 1 }{ { p_{ \mathfrak 1 } }^0 { q_{ \mathfrak 1 } }^0 \sqrt{ s_{ \mathfrak 1 } } } { \tilde f }_{ \mathfrak 1 } ( p ) { \tilde f }_{ \mathfrak 1 } ( q ) e^{ - \frac12 { q_{ \mathfrak 1 } }^0 } \, d \omega \, d q \\ 
& \quad + ( \det g_{ \mathfrak 2 }^{ - 1 } )^{ \frac12 } \iint \left( \frac{ 1 }{ { p_{ \mathfrak 1 } }^0 { q_{ \mathfrak 1 } }^0 \sqrt{ s_{ \mathfrak 1 } } } - \frac{ 1 }{ { p_{ \mathfrak 2 } }^0 { q_{ \mathfrak 2 } }^0 \sqrt{ s_{ \mathfrak 2 } } } \right) { \tilde f }_{ \mathfrak 1 } ( p ) { \tilde f }_{ \mathfrak 1 } ( q ) e^{ - \frac12 { q_{ \mathfrak 1 } }^0 } \, d \omega \, d q \\ 
& \quad + ( \det g_{ \mathfrak 2 }^{ - 1 } )^{ \frac12 } \iint \frac{ 1 }{ { p_{ \mathfrak 2 } }^0 { q_{ \mathfrak 2 } }^0 \sqrt{ s_{ \mathfrak 2 } } } \left( { \tilde f }_{ \mathfrak 1 } ( p ) { \tilde f }_{ \mathfrak 1 } ( q ) - { \tilde f }_{ \mathfrak 2 } ( p ) { \tilde f }_{ \mathfrak 2 } ( q ) \right) e^{ - \frac12 { q_{ \mathfrak 1 } }^0 } \, d \omega \, d q \\ 
& \quad + ( \det g_{ \mathfrak 2 }^{ - 1 } )^{ \frac12 } \iint \frac{ 1 }{ { p_{ \mathfrak 2 } }^0 { q_{ \mathfrak 2 } }^0 \sqrt{ s_{ \mathfrak 2 } } } { \tilde f }_{ \mathfrak 2 } ( p ) { \tilde f }_{ \mathfrak 2 } ( q ) \left( e^{ - \frac12 { q_{ \mathfrak 1 } }^0 } - e^{ - \frac12 { q_{ \mathfrak 2 } }^0 } \right) \, d \omega \, d q \\ 
& = : L_1 + L_2 + L_3 + L_4 . 
\end{align*}
For $ L_1 $, we apply the same argument as in \eqref{g1-g2} and use the properties \eqref{g_s est} to obtain 
\begin{align*}
\left| ( \det g_{ \mathfrak 1 }^{ - 1 } )^{ \frac12 } - ( \det g_{ \mathfrak 2 }^{ - 1 } )^{ \frac12 } \right| & = \left| \int_1^2 \frac{ \partial \left( ( \det g^{ - 1 } )^{ \frac12 } ( { g_{ \mathfrak s } }^{ a b } ) \right) }{ \partial { \mathfrak s } } \, d { \mathfrak s } \right| \\ 
& \leq C \int_1^2 \left| \frac{ \partial ( \det g^{ - 1 } )^{ \frac12 } }{ \partial g^{ c d } } ( { g_{ \mathfrak s } }^{ a b } ) \right| | { g_{ \mathfrak 2 } }^{ c d } - { g_{ \mathfrak 1 } }^{ c d } | \, d { \mathfrak s } \\ 
& \leq C \max_{ c , d } | { g_{ \mathfrak 2 } }^{ c d } - { g_{ \mathfrak 1 } }^{ c d } | . 
\end{align*}
Hence, we obtain 
\begin{align}
& \left| \int_{ \bbr^3 } L_1 ( { \tilde f }_{ \mathfrak 1 } - { \tilde f }_{ \mathfrak 2 } ) \, d p \right| \nonumber \\ 
& \leq C \left( \max_{ a , b } | { g_{ \mathfrak 1 } }^{ a b } - { g_{ \mathfrak 2 } }^{ a b } | \right) \iiint \frac{ 1 }{ { p_{ \mathfrak 1 } }^0 { q_{ \mathfrak 1 } }^0 \sqrt{ s_{ \mathfrak 1 } } } { \tilde f }_{ \mathfrak 1 } ( p ) { \tilde f }_{ \mathfrak 1 } ( q ) e^{ - \frac12 { q_{ \mathfrak 1 } }^0 } | ( { \tilde f }_{ \mathfrak 1 } - { \tilde f }_{ \mathfrak 2 } ) ( p ) | \, d \omega \, d q \, d p \nonumber \\ 
& \leq C \left( \max_{ a , b } | { g_{ \mathfrak 1 } }^{ a b } - { g_{ \mathfrak 2 } }^{ a b } | \right) \left( \iint | { \tilde f }_{ \mathfrak 1 } ( p ) |^2 | { \tilde f }_{ \mathfrak 1 } ( q ) |^2 \, d q \, d p \right)^{ \frac12 } \left( \iint e^{ - { q_{ \mathfrak 1 } }^0 } | ( { \tilde f }_{ \mathfrak 1 } - { \tilde f }_{ \mathfrak 2 } ) ( p ) |^2 \, d q \, d p \right)^{ \frac12 } \nonumber \\ 
& \leq C \left( \max_{ a , b } | { g_{ \mathfrak 1 } }^{ a b } - { g_{ \mathfrak 2 } }^{ a b } | \right) \| { \tilde f }_{ \mathfrak 1 } - { \tilde f }_{ \mathfrak 2 } \|_{ 0 , 0 } . \label{unique8} 
\end{align}
For $ L_2 $, we can apply \eqref{dq^0g} and \eqref{dsg} to obtain 
\[
\left| \frac{ \partial }{ \partial g^{ c d } } \left( \frac{ 1 }{ { p }^0 { q }^0 \sqrt{ s } } \right) \right| \leq C , 
\]
which is true for all $ { g_{ \mathfrak s } }^{ a b } $, $ { \mathfrak s } \in [ 1 , 2 ] $. Hence, we obtain by the same arguments as in \eqref{g1-g2} and \eqref{unique8}: 
\begin{align}
\left| \int_{ \bbr^3 } L_2 ( { \tilde f }_{ \mathfrak 1 } - { \tilde f }_{ \mathfrak 2 } ) \, d p \right| \leq C \left( \max_{ a , b } | { g_{ \mathfrak 1 } }^{ a b } - { g_{ \mathfrak 2 } }^{ a b } | \right) \| { \tilde f }_{ \mathfrak 1 } - { \tilde f }_{ \mathfrak 2 } \|_{ 0 , 0 } . \label{unique9} 
\end{align}
For $ L_3 $, we write 
\[
L_3 = \frac{ ( \det g_{ \mathfrak 2 }^{ - 1 } )^{ \frac12 } }{ 2 } \iint \frac{ 1 }{ { p_{ \mathfrak 2 } }^0 { q_{ \mathfrak 2 } }^0 \sqrt{ s_{ \mathfrak 2 } } } \left( ( { \tilde f }_{ \mathfrak 1 } + { \tilde f }_{ \mathfrak 2 } ) ( p ) ( { \tilde f }_{ \mathfrak 1 } - { \tilde f }_{ \mathfrak 2 } ) ( q ) + ( { \tilde f }_{ \mathfrak 1 } - { \tilde f }_{ \mathfrak 2 } ) ( p ) ( { \tilde f }_{ \mathfrak 1 } + { \tilde f }_{ \mathfrak 2 } ) ( q ) \right) e^{ - \frac12 { q_{ \mathfrak 1 } }^0 } \, d \omega \, d q , 
\]
and obtain 
\begin{align}
& \left| \int_{ \bbr^3 } L_3 ( { \tilde f }_{ \mathfrak 1 } - { \tilde f }_{ \mathfrak 2 } ) \, d p \right| \nonumber \\ 
& \leq C \iint | ( { \tilde f }_{ \mathfrak 1 } + { \tilde f }_{ \mathfrak 2 } ) ( p ) | | ( { \tilde f }_{ \mathfrak 1 } - { \tilde f }_{ \mathfrak 2 } ) ( q ) | | ( { \tilde f }_{ \mathfrak 1 } - { \tilde f }_{ \mathfrak 2 } ) ( p ) | e^{ - \frac12 { q_{ \mathfrak 1 } }^0 } \, d q \, d p \nonumber \\ 
& \quad + C \iint | ( { \tilde f }_{ \mathfrak 1 } - { \tilde f }_{ \mathfrak 2 } ) ( p ) |^2 | ( { \tilde f }_{ \mathfrak 1 } + { \tilde f }_{ \mathfrak 2 } ) ( q ) | e^{ - \frac12 { q_{ \mathfrak 1 } }^0 } \, d q \, d p \nonumber \\ 
& \leq C \left( \iint | ( { \tilde f }_{ \mathfrak 1 } + { \tilde f }_{ \mathfrak 2 } ) ( p ) |^2 | ( { \tilde f }_{ \mathfrak 1 } - { \tilde f }_{ \mathfrak 2 } ) ( q ) |^2 \, d q \, d p \right)^{ \frac12 } \left( \iint | ( { \tilde f }_{ \mathfrak 1 } - { \tilde f }_{ \mathfrak 2 } ) ( p ) |^2 e^{ - { q_{ \mathfrak 1 } }^0 } \, d q \, d p \right)^{ \frac12 } \nonumber \\ 
& \quad + C \| { \tilde f }_{ \mathfrak 1 } - { \tilde f }_{ \mathfrak 2 } \|^2_{ 0 , 0 } \left( \int | ( { \tilde f }_{ \mathfrak 1 } + { \tilde f }_{ \mathfrak 2 } ) ( q ) |^2 \, d q \right)^{ \frac12 } \left( \int e^{ - { q_{ \mathfrak 1 } }^0 } \, d q \right)^{ \frac12 } \nonumber \\ 
& \leq C \| { \tilde f }_{ \mathfrak 1 } - { \tilde f }_{ \mathfrak 2 } \|^2_{ 0 , 0 } , \label{unique10} 
\end{align}
where we used the boundedness of $ \| { \tilde f }_{ \mathfrak 1 } \|_{ 0 , 0 } $ and $ \| { \tilde f }_{ \mathfrak 2 } \|_{ 0 , 0 } $. For $ L_4 $, we apply again the argument of \eqref{g1-g2} with the estimate \eqref{dq^0g} to obtain 
\begin{align*}
| e^{ - \frac12 { q_{ \mathfrak 1 } }^0 } - e^{ - \frac12 { q_{ \mathfrak 2 } }^0 } | & = \left| \int_1^2 \frac{ \partial ( e^{ - \frac12 q^0 } ( { g_{ \mathfrak s } }^{ a b } ) ) }{ \partial { \mathfrak s } } \, d { \mathfrak s } \right| \\ 
& \leq C \int_1^2 \left| \left( \frac{ \partial q^0 }{ \partial g^{ c d } } e^{ - \frac12 q^0 } \right) ( { g_{ \mathfrak s } }^{ a b } ) \right| | { g_{ \mathfrak 2 } }^{ c d } - { g_{ \mathfrak 1 } }^{ c d } | \, d { \mathfrak s } \\ 
& \leq C \langle q \rangle e^{ - \langle q \rangle / C } \max_{ c , d } | { g_{ \mathfrak 2 } }^{ c d } - { g_{ \mathfrak 1 } }^{ c d } | , 
\end{align*}
for some $ C > 0 $, since $ \langle q \rangle $ and $ q^0 ( { g_{ \mathfrak s } }^{ a b } ) $ are equivalent. Now, we have 
\begin{align}
& \left| \int_{ \bbr^3 } L_4 ( { \tilde f }_{ \mathfrak 1 } - { \tilde f }_{ \mathfrak 2 } ) \, d p \right| \nonumber \\ 
& \leq C \left( \max_{ a , b } | { g_{ \mathfrak 2 } }^{ a b } - { g_{ \mathfrak 1 } }^{ a b } | \right) \iint | { \tilde f }_{ \mathfrak 2 } ( p ) | | { \tilde f }_{ \mathfrak 2 } ( q ) | \langle q \rangle e^{ - \langle q \rangle / C } | ( { \tilde f }_{ \mathfrak 1 } - { \tilde f }_{ \mathfrak 2 } ) ( p ) | \, d q \, d p \nonumber \\ 
& \leq C \left( \max_{ a , b } | { g_{ \mathfrak 2 } }^{ a b } - { g_{ \mathfrak 1 } }^{ a b } | \right) \left( \iint | { \tilde f }_{ \mathfrak 2 } ( p ) |^2 | { \tilde f }_{ \mathfrak 2 } ( q ) |^2 \, d q \, d p \right)^{ \frac12 } \left( \iint \langle q \rangle^2 e^{ - 2 \langle q \rangle / C } | ( { \tilde f }_{ \mathfrak 1 } - { \tilde f }_{ \mathfrak 2 } ) ( p ) |^2 \, d q \, d p \right)^{ \frac12 } \nonumber \\ 
& \leq C \left( \max_{ a , b } | { g_{ \mathfrak 2 } }^{ a b } - { g_{ \mathfrak 1 } }^{ a b } | \right) \| { \tilde f }_{ \mathfrak 1 } - { \tilde f }_{ \mathfrak 2 } \|_{ 0 , 0 } . \label{unique11} 
\end{align}
Finally, we consider the gain terms. We write 
\begin{align*}
& { \tilde Q }_{ \mathfrak 1 + } ( { \tilde f }_{ \mathfrak 1 } , { \tilde f }_{ \mathfrak 1 } ) - { \tilde Q }_{ \mathfrak 2 + } ( { \tilde f }_{ \mathfrak 2 } , { \tilde f }_{ \mathfrak 2 } ) \\ 
& = \left( ( \det g_{ \mathfrak 1 }^{ - 1 } )^{ \frac12 } - ( \det g_{ \mathfrak 2 }^{ - 1 } )^{ \frac12 } \right) \iint \frac{ 1 }{ { p_{ \mathfrak 1 } }^0 { q_{ \mathfrak 1 } }^0 \sqrt{ s_{ \mathfrak 1 } } } { \tilde f }_{ \mathfrak 1 } ( p_{ \mathfrak 1 }' ) { \tilde f }_{ \mathfrak 1 } ( q_{ \mathfrak 1 }' ) e^{ - \frac12 { q_{ \mathfrak 1 } }^0 } \, d \omega \, d q \\ 
& \quad + ( \det g_{ \mathfrak 2 }^{ - 1 } )^{ \frac12 } \iint \left( \frac{ 1 }{ { p_{ \mathfrak 1 } }^0 { q_{ \mathfrak 1 } }^0 \sqrt{ s_{ \mathfrak 1 } } } - \frac{ 1 }{ { p_{ \mathfrak 2 } }^0 { q_{ \mathfrak 2 } }^0 \sqrt{ s_{ \mathfrak 2 } } } \right) { \tilde f }_{ \mathfrak 1 } ( p_{ \mathfrak 1 }' ) { \tilde f }_{ \mathfrak 1 } ( q_{ \mathfrak 1 }' ) e^{ - \frac12 { q_{ \mathfrak 1 } }^0 } \, d \omega \, d q \\ 
& \quad + ( \det g_{ \mathfrak 2 }^{ - 1 } )^{ \frac12 } \iint \frac{ 1 }{ { p_{ \mathfrak 2 } }^0 { q_{ \mathfrak 2 } }^0 \sqrt{ s_{ \mathfrak 2 } } } \left( { \tilde f }_{ \mathfrak 1 } ( p_{ \mathfrak 1 }' ) { \tilde f }_{ \mathfrak 1 } ( q_{ \mathfrak 1 }' ) - { \tilde f }_{ \mathfrak 2 } ( p_{ \mathfrak 1 }' ) { \tilde f }_{ \mathfrak 2 } ( q_{ \mathfrak 1 }' ) \right) e^{ - \frac12 { q_{ \mathfrak 1 } }^0 } \, d \omega \, d q \\ 
& \quad + ( \det g_{ \mathfrak 2 }^{ - 1 } )^{ \frac12 } \iint \frac{ 1 }{ { p_{ \mathfrak 2 } }^0 { q_{ \mathfrak 2 } }^0 \sqrt{ s_{ \mathfrak 2 } } } \left( { \tilde f }_{ \mathfrak 2 } ( p_{ \mathfrak 1 }' ) { \tilde f }_{ \mathfrak 2 } ( q_{ \mathfrak 1 }' ) - { \tilde f }_{ \mathfrak 2 } ( p_{ \mathfrak 2 }' ) { \tilde f }_{ \mathfrak 2 } ( q_{ \mathfrak 2 }' ) \right) e^{ - \frac12 { q_{ \mathfrak 1 } }^0 } \, d \omega \, d q \\ 
& \quad + ( \det g_{ \mathfrak 2 }^{ - 1 } )^{ \frac12 } \iint \frac{ 1 }{ { p_{ \mathfrak 2 } }^0 { q_{ \mathfrak 2 } }^0 \sqrt{ s_{ \mathfrak 2 } } } { \tilde f }_{ \mathfrak 2 } ( p_{ \mathfrak 2 }' ) { \tilde f }_{ \mathfrak 2 } ( q_{ \mathfrak 2 }' ) \left( e^{ - \frac12 { q_{ \mathfrak 1 } }^0 } - e^{ - \frac12 { q_{ \mathfrak 2 } }^0 } \right) \, d \omega \, d q \\ 
& = : G_1 + G_2 + G_3 + G_4 + G_5 . 
\end{align*}
The estimates of $ G_1 $, $ G_2 $, $ G_3 $ and $ G_5 $ are almost the same as the estimates of $ L_1 $, $ L_2 $, $ L_3 $ and $ L_4 $, respectively, but we need to apply the change of variables \eqref{change}. We have 
\[
\frac{ d p_{ \mathfrak 1 }' \, d q_{ \mathfrak 1 }' }{ { p_{ \mathfrak 1 }' }^0 { q_{ \mathfrak 1 }' }^0 } = \frac{ d p \, d q }{ { p_{ \mathfrak 1 } }^0 { q_{ \mathfrak 1 } }^0 } , \qquad \frac{ d p_{ \mathfrak 2 }' \, d q_{ \mathfrak 2 }' }{ { p_{ \mathfrak 2 }' }^0 { q_{ \mathfrak 2 }' }^0 } = \frac{ d p \, d q }{ { p_{ \mathfrak 2 } }^0 { q_{ \mathfrak 2 } }^0 } , 
\]
with respect to $ { g_{ \mathfrak 1 } }^{ a b } $ and $ { g_{ \mathfrak 2 } }^{ a b } $, respectively. The estimate of $ G_1 $ is similar to \eqref{unique8}: 
\begin{align}
& \left| \int_{ \bbr^3 } G_1 ( { \tilde f }_{ \mathfrak 1 } - { \tilde f }_{ \mathfrak 2 } ) \, d p \right| \nonumber \\ 
& \leq C \left( \max_{ a , b } | { g_{ \mathfrak 1 } }^{ a b } - { g_{ \mathfrak 2 } }^{ a b } | \right) \iiint \frac{ 1 }{ { p_{ \mathfrak 1 } }^0 { q_{ \mathfrak 1 } }^0 \sqrt{ s_{ \mathfrak 1 } } } { \tilde f }_{ \mathfrak 1 } ( p_{ \mathfrak 1 }' ) { \tilde f }_{ \mathfrak 1 } ( q_{ \mathfrak 1 }' ) e^{ - \frac12 { q_{ \mathfrak 1 } }^0 } | ( { \tilde f }_{ \mathfrak 1 } - { \tilde f }_{ \mathfrak 2 } ) ( p ) | \, d \omega \, d q \, d p \nonumber \\ 
& \leq C \left( \max_{ a , b } | { g_{ \mathfrak 1 } }^{ a b } - { g_{ \mathfrak 2 } }^{ a b } | \right) \left( \iiint | { \tilde f }_{ \mathfrak 1 } ( p_{ \mathfrak 1 }' ) |^2 | { \tilde f }_{ \mathfrak 1 } ( q_{ \mathfrak 1 }' ) |^2 \, \frac{ d \omega \, d p \, d q }{ { p_{ \mathfrak 1 } }^0 { q_{ \mathfrak 1 } }^0 } \right)^{ \frac12 } \left( \iint e^{ - { q_{ \mathfrak 1 } }^0 } | ( { \tilde f }_{ \mathfrak 1 } - { \tilde f }_{ \mathfrak 2 } ) ( p ) |^2 \, d q \, d p \right)^{ \frac12 } \nonumber \\ 
& \leq C \left( \max_{ a , b } | { g_{ \mathfrak 1 } }^{ a b } - { g_{ \mathfrak 2 } }^{ a b } | \right) \| { \tilde f }_{ \mathfrak 1 } - { \tilde f }_{ \mathfrak 2 } \|_{ 0 , 0 } . \label{unique12} 
\end{align}
The quantity $ G_2 $ is estimated as in \eqref{unique9} by using 
\[
\left| \frac{ 1 }{ { p_{ \mathfrak 1 } }^0 { q_{ \mathfrak 1 } }^0 \sqrt{ s_{ \mathfrak 1 } } } - \frac{ 1 }{ { p_{ \mathfrak 2 } }^0 { q_{ \mathfrak 2 } }^0 \sqrt{ s_{ \mathfrak 2 } } } \right| \leq C \left( \max_{ a , b } | { g_{ \mathfrak 1 } }^{ a b } - { g_{ \mathfrak 2 } }^{ a b } | \right) , 
\]
so that we have 
\begin{align}
& \left| \int_{ \bbr^3 } G_2 ( { \tilde f }_{ \mathfrak 1 } - { \tilde f }_{ \mathfrak 2 } ) \, d p \right| \nonumber \\ 
& \leq C \left( \max_{ a , b } | { g_{ \mathfrak 1 } }^{ a b } - { g_{ \mathfrak 2 } }^{ a b } | \right) \iiint { \tilde f }_{ \mathfrak 1 } ( p_{ \mathfrak 1 }' ) { \tilde f }_{ \mathfrak 1 } ( q_{ \mathfrak 1 }' ) e^{ - \frac12 { q_{ \mathfrak 1 } }^0 } | ( { \tilde f }_{ \mathfrak 1 } - { \tilde f }_{ \mathfrak 2 } ) ( p ) | \, d \omega \, d q \, d p \nonumber \\ 
& \leq C \left( \max_{ a , b } | { g_{ \mathfrak 1 } }^{ a b } - { g_{ \mathfrak 2 } }^{ a b } | \right) \left( \iiint | { \tilde f }_{ \mathfrak 1 } ( p_{ \mathfrak 1 }' ) |^2 | { \tilde f }_{ \mathfrak 1 } ( q_{ \mathfrak 1 }' ) |^2 \, d \omega \, d p \, d q \right)^{ \frac12 } \left( \iint e^{ - { q_{ \mathfrak 1 } }^0 } | ( { \tilde f }_{ \mathfrak 1 } - { \tilde f }_{ \mathfrak 2 } ) ( p ) |^2 \, d q \, d p \right)^{ \frac12 } \nonumber \\ 
& \leq C \left( \max_{ a , b } | { g_{ \mathfrak 1 } }^{ a b } - { g_{ \mathfrak 2 } }^{ a b } | \right) \left( \iiint | { \tilde f }_{ \mathfrak 1 } ( p_{ \mathfrak 1 }' ) |^2 | { \tilde f }_{ \mathfrak 1 } ( q_{ \mathfrak 1 }' ) |^2 \frac{ { p_{ \mathfrak 1 } }^0 { q_{ \mathfrak 1 } }^0 }{ { p_{ \mathfrak 1 }' }^0 { q_{ \mathfrak 1 }' }^0 } \, d \omega \, d p_{ \mathfrak 1 }' \, d q_{ \mathfrak 1 }' \right)^{ \frac12 } \| { \tilde f }_{ \mathfrak 1 } - { \tilde f }_{ \mathfrak 2 } \|_{ 0 , 0 } \nonumber \\ 
& \leq C \left( \max_{ a , b } | { g_{ \mathfrak 1 } }^{ a b } - { g_{ \mathfrak 2 } }^{ a b } | \right) \left( \iiint | { \tilde f }_{ \mathfrak 1 } ( p ) |^2 | { \tilde f }_{ \mathfrak 1 } ( q ) |^2 \langle p \rangle \langle q \rangle \, d \omega \, d p \, d q \right)^{ \frac12 } \| { \tilde f }_{ \mathfrak 1 } - { \tilde f }_{ \mathfrak 2 } \|_{ 0 , 0 } \nonumber \\ 
& \leq C \left( \max_{ a , b } | { g_{ \mathfrak 1 } }^{ a b } - { g_{ \mathfrak 2 } }^{ a b } | \right) \| { \tilde f }_{ \mathfrak 1 } - { \tilde f }_{ \mathfrak 2 } \|_{ 0 , 0 } , \label{unique13} 
\end{align}
where we used 
\[
\frac{ { p_{ \mathfrak 1 } }^0 { q_{ \mathfrak 1 } }^0 }{ { p_{ \mathfrak 1 }' }^0 { q_{ \mathfrak 1 }' }^0 } \leq \frac{ ( { p_{ \mathfrak 1 }' }^0 + { q_{ \mathfrak 1 }' }^0 )^2 }{ { p_{ \mathfrak 1 }' }^0 { q_{ \mathfrak 1 }' }^0 } \leq C { p_{ \mathfrak 1 }' }^0 { q_{ \mathfrak 1 }' }^0 , 
\]
and the equivalence of $ { p_{ \mathfrak 1 } }^0 { q_{ \mathfrak 1 } }^0 $ and $ \langle p \rangle \langle q \rangle $ on $ [ t_0 , T_6 ] $. The estimate of $ G_3 $ is almost the same as that of $ L_3 $. We write 
\[
G_3 = \frac{ ( \det g_{ \mathfrak 2 }^{ - 1 } )^{ \frac12 } }{ 2 } \iint \frac{ 1 }{ { p_{ \mathfrak 2 } }^0 { q_{ \mathfrak 2 } }^0 \sqrt{ s_{ \mathfrak 2 } } } \left( ( { \tilde f }_{ \mathfrak 1 } + { \tilde f }_{ \mathfrak 2 } ) ( p_{ \mathfrak 1 }' ) ( { \tilde f }_{ \mathfrak 1 } - { \tilde f }_{ \mathfrak 2 } ) ( q_{ \mathfrak 1 }' ) + ( { \tilde f }_{ \mathfrak 1 } - { \tilde f }_{ \mathfrak 2 } ) ( p_{ \mathfrak 1 }' ) ( { \tilde f }_{ \mathfrak 1 } + { \tilde f }_{ \mathfrak 2 } ) ( q_{ \mathfrak 1 }' ) \right) e^{ - \frac12 { q_{ \mathfrak 1 } }^0 } \, d \omega \, d q , 
\]
and obtain 
\begin{align*}
& \left| \int_{ \bbr^3 } G_3 ( { \tilde f }_{ \mathfrak 1 } - { \tilde f }_{ \mathfrak 2 } ) \, d p \right| \nonumber \\ 
& \leq C \iiint \frac{ 1 }{ { p_{ \mathfrak 2 } }^0 { q_{ \mathfrak 2 } }^0 } | ( { \tilde f }_{ \mathfrak 1 } + { \tilde f }_{ \mathfrak 2 } ) ( p_{ \mathfrak 1 }' ) | | ( { \tilde f }_{ \mathfrak 1 } - { \tilde f }_{ \mathfrak 2 } ) ( q_{ \mathfrak 1 }' ) | | ( { \tilde f }_{ \mathfrak 1 } - { \tilde f }_{ \mathfrak 2 } ) ( p ) | e^{ - \frac12 { q_{ \mathfrak 1 } }^0 } \, d \omega \, d q \, d p \nonumber \\ 
& \quad + C \iiint \frac{ 1 }{ { p_{ \mathfrak 2 } }^0 { q_{ \mathfrak 2 } }^0 } | ( { \tilde f }_{ \mathfrak 1 } - { \tilde f }_{ \mathfrak 2 } ) ( p_{ \mathfrak 1 }' ) | | ( { \tilde f }_{ \mathfrak 1 } + { \tilde f }_{ \mathfrak 2 } ) ( q_{ \mathfrak 1 }' ) | | ( { \tilde f }_{ \mathfrak 1 } - { \tilde f }_{ \mathfrak 2 } ) ( p ) | e^{ - \frac12 { q_{ \mathfrak 1 } }^0 } \, d \omega \, d q \, d p \nonumber \\ 
& \leq C \left( \iiint \frac{ 1 }{ { p_{ \mathfrak 2 } }^0 { q_{ \mathfrak 2 } }^0 } | ( { \tilde f }_{ \mathfrak 1 } + { \tilde f }_{ \mathfrak 2 } ) ( p_{ \mathfrak 1 }' ) |^2 | ( { \tilde f }_{ \mathfrak 1 } - { \tilde f }_{ \mathfrak 2 } ) ( q_{ \mathfrak 1 }' ) |^2 \, d \omega \, d q \, d p \right)^{ \frac12 } \left( \iint | ( { \tilde f }_{ \mathfrak 1 } - { \tilde f }_{ \mathfrak 2 } ) ( p ) |^2 e^{ - { q_{ \mathfrak 1 } }^0 } \, d q \, d p \right)^{ \frac12 } \nonumber \\ 
& \quad + C \left( \iiint \frac{ 1 }{ { p_{ \mathfrak 2 } }^0 { q_{ \mathfrak 2 } }^0 } | ( { \tilde f }_{ \mathfrak 1 } - { \tilde f }_{ \mathfrak 2 } ) ( p_{ \mathfrak 1 }' ) |^2 | ( { \tilde f }_{ \mathfrak 1 } + { \tilde f }_{ \mathfrak 2 } ) ( q_{ \mathfrak 1 }' ) |^2 \, d \omega \, d q \, d p \right)^{ \frac12 } \left( \iint | ( { \tilde f }_{ \mathfrak 1 } - { \tilde f }_{ \mathfrak 2 } ) ( p ) |^2 e^{ - { q_{ \mathfrak 1 } }^0 } \, d q \, d p \right)^{ \frac12 } . 
\end{align*}
Here, we need to use the change of variables, but the kernel $ 1 / ( { p_{ \mathfrak 2 } }^0 { q_{ \mathfrak 2 } }^0 ) $ should be estimated as
\begin{align}
\frac{ 1 }{ { p_{ \mathfrak 2 } }^0 { q_{ \mathfrak 2 } }^0 } = \frac{ { p_{ \mathfrak 1 } }^0 { q_{ \mathfrak 1 } }^0 }{ { p_{ \mathfrak 2 } }^0 { q_{ \mathfrak 2 } }^0 } \frac{ 1 }{ { p_{ \mathfrak 1 } }^0 { q_{ \mathfrak 1 } }^0 } \leq \frac{ C }{ { p_{ \mathfrak 1 } }^0 { q_{ \mathfrak 1 } }^0 } , \label{p1q1p2q2}
\end{align}
by the equivalence of $ { p_{ \mathfrak 1 } }^0 { q_{ \mathfrak 1 } }^0 $, $ { p_{ \mathfrak 2 } }^0 { q_{ \mathfrak 2 } }^0 $ and $ \langle p \rangle \langle q \rangle $ on $ [ t_0 , T_6 ] $. Then, we obtain 
\begin{align}
& \left| \int_{ \bbr^3 } G_3 ( { \tilde f }_{ \mathfrak 1 } - { \tilde f }_{ \mathfrak 2 } ) \, d p \right| \leq C \| { \tilde f }_{ \mathfrak 1 } - { \tilde f }_{ \mathfrak 2 } \|^2_{ 0 , 0 } . \label{unique14} 
\end{align}
We now consider $ G_4 $: 
\[
G_4 = ( \det g_{ \mathfrak 2 }^{ - 1 } )^{ \frac12 } \iint \frac{ 1 }{ { p_{ \mathfrak 2 } }^0 { q_{ \mathfrak 2 } }^0 \sqrt{ s_{ \mathfrak 2 } } } \left( { \tilde f }_{ \mathfrak 2 } ( p_{ \mathfrak 1 }' ) { \tilde f }_{ \mathfrak 2 } ( q_{ \mathfrak 1 }' ) - { \tilde f }_{ \mathfrak 2 } ( p_{ \mathfrak 2 }' ) { \tilde f }_{ \mathfrak 2 } ( q_{ \mathfrak 2 }' ) \right) e^{ - \frac12 { q_{ \mathfrak 1 } }^0 } \, d \omega \, d q . 
\]
We notice that $ F : = { \tilde f }_{ \mathfrak 2 } ( p' ) { \tilde f }_{ \mathfrak 2 } ( q' ) $ is a function of $ g^{ a b } $ such that $ F ( { g_{ \mathfrak 1 } }^{ a b } ) = { \tilde f }_{ \mathfrak 2 } ( p_{ \mathfrak 1 }' ) { \tilde f }_{ \mathfrak 2 } ( q_{ \mathfrak 1 }' ) $ and $ F ( { g_{ \mathfrak 2 } }^{ a b } ) = { \tilde f }_{ \mathfrak 2 } ( p_{ \mathfrak 2 }' ) { \tilde f }_{ \mathfrak 2 } ( q_{ \mathfrak 2 }' ) $. Hence, by the argument of \eqref{g1-g2} again, we obtain 
\begin{align*}
| F ( { g_{ \mathfrak 1 } }^{ a b } ) - F ( { g_{ \mathfrak 2 } }^{ a b } ) | & = \left| \int_1^2 \frac{ \partial ( F ( { g_{ \mathfrak s } }^{ a b } ) ) }{ \partial { \mathfrak s } } \, d { \mathfrak s } \right| \\ 
& \leq C \int_1^2 \left| \frac{ \partial F }{ \partial g^{ c d } } ( { g_{ \mathfrak s } }^{ a b } ) \right| | { g_{ \mathfrak 2 } }^{ c d } - { g_{ \mathfrak 1 } }^{ c d } | \, d { \mathfrak s } , 
\end{align*}
where the quantity $ \partial F / \partial g^{ c d } $ is estimated by $ \partial_p { \tilde f }_{ \mathfrak 2 } ( p' ) { \tilde f }_{ \mathfrak 2 } ( q' ) $ and $ { \tilde f }_{ \mathfrak 2 } ( p' ) \partial_p { \tilde f }_{ \mathfrak 2 } ( q' ) $ with $ \partial p' / \partial g^{ c d } $ and $ \partial q' / \partial g^{ c d } $ evaluated at $ { g_{ \mathfrak s } }^{ a b } $. We apply Lemma \ref{lem dp'g} to obtain 
\begin{align*}
| F ( { g_{ \mathfrak 1 } }^{ a b } ) - F ( { g_{ \mathfrak 2 } }^{ a b } ) | & \leq C \left( \max_{ a , b } | { g_{ \mathfrak 2 } }^{ a b } - { g_{ \mathfrak 1 } }^{ a b } | \right) \langle p \rangle \langle q \rangle^4 \int_1^2 | \partial_p { \tilde f }_{ \mathfrak 2 } ( p_{ \mathfrak s }' ) | | { \tilde f }_{ \mathfrak 2 } ( q_{ \mathfrak s }' ) | + | { \tilde f }_{ \mathfrak 2 } ( p_{ \mathfrak s }' ) | | \partial_p { \tilde f }_{ \mathfrak 2 } ( q_{ \mathfrak s }' ) | \, d { \mathfrak s } , 
\end{align*}
which implies that 
\begin{align}
& \left| \int_{ \bbr^3 } G_4 ( { \tilde f }_{ \mathfrak 1 } - { \tilde f }_{ \mathfrak 2 } ) \, d p \right| \nonumber \\ 
& \leq C \left( \max_{ a , b } | { g_{ \mathfrak 1 } }^{ a b } - { g_{ \mathfrak 2 } }^{ a b } | \right) \int_1^2 \iiint \frac{ 1 }{ { p_{ \mathfrak 2 } }^0 { q_{ \mathfrak 2 } }^0 } \langle p \rangle \langle q \rangle^4 | \partial_p { \tilde f }_{ \mathfrak 2 } ( p_{ \mathfrak s }' ) | | { \tilde f }_{ \mathfrak 2 } ( q_{ \mathfrak s }' ) | e^{ - \frac12 { q_{ \mathfrak 1 } }^0 } | ( { \tilde f }_{ \mathfrak 1 } - { \tilde f }_{ \mathfrak 2 } ) ( p ) | \, d \omega \, d q \, d p \, d { \mathfrak s } \nonumber \\ 
& \quad + C \left( \max_{ a , b } | { g_{ \mathfrak 1 } }^{ a b } - { g_{ \mathfrak 2 } }^{ a b } | \right) \int_1^2 \iiint \frac{ 1 }{ { p_{ \mathfrak 2 } }^0 { q_{ \mathfrak 2 } }^0 } \langle p \rangle \langle q \rangle^4 | { \tilde f }_{ \mathfrak 2 } ( p_{ \mathfrak s }' ) | | \partial_p { \tilde f }_{ \mathfrak 2 } ( q_{ \mathfrak s }' ) | e^{ - \frac12 { q_{ \mathfrak 1 } }^0 } | ( { \tilde f }_{ \mathfrak 1 } - { \tilde f }_{ \mathfrak 2 } ) ( p ) | \, d \omega \, d q \, d p \, d { \mathfrak s } \nonumber \allowdisplaybreaks \\ 
& \leq C \left( \max_{ a , b } | { g_{ \mathfrak 1 } }^{ a b } - { g_{ \mathfrak 2 } }^{ a b } | \right) \nonumber \\ 
& \qquad \times \int_1^2 \left( \iiint \frac{ \langle p \rangle^2 }{ { p_{ \mathfrak 2 } }^0 { q_{ \mathfrak 2 } }^0 } | \partial_p { \tilde f }_{ \mathfrak 2 } ( p_{ \mathfrak s }' ) |^2 | { \tilde f }_{ \mathfrak 2 } ( q_{ \mathfrak s }' ) |^2 \, d \omega \, d q \, d p \right)^{ \frac12 } \left( \iint \langle q \rangle^8 e^{ - { q_{ \mathfrak 1 } }^0 } | ( { \tilde f }_{ \mathfrak 1 } - { \tilde f }_{ \mathfrak 2 } ) ( p ) |^2 \, d q \, d p \right)^{ \frac12 } \, d { \mathfrak s } \nonumber \\ 
& \quad + C \left( \max_{ a , b } | { g_{ \mathfrak 1 } }^{ a b } - { g_{ \mathfrak 2 } }^{ a b } | \right) \nonumber \\
& \qquad \times \int_1^2 \left( \iiint \frac{ \langle p \rangle^2 }{ { p_{ \mathfrak 2 } }^0 { q_{ \mathfrak 2 } }^0 } | { \tilde f }_{ \mathfrak 2 } ( p_{ \mathfrak s }' ) |^2 | \partial_p { \tilde f }_{ \mathfrak 2 } ( q_{ \mathfrak s }' ) |^2 \, d \omega \, d q \, d p \right)^{ \frac12 } \left( \iint \langle q \rangle^8 e^{ - { q_{ \mathfrak 1 } }^0 } | ( { \tilde f }_{ \mathfrak 1 } - { \tilde f }_{ \mathfrak 2 } ) ( p ) |^2 \, d q \, d p \right)^{ \frac12 } \, d { \mathfrak s } \nonumber \allowdisplaybreaks \\ 
& \leq C \left( \max_{ a , b } | { g_{ \mathfrak 1 } }^{ a b } - { g_{ \mathfrak 2 } }^{ a b } | \right) \| { \tilde f }_{ \mathfrak 1 } - { \tilde f }_{ \mathfrak 2 } \|_{ 0 , 0 } \int_1^2 \left( \iiint \frac{ \langle p_{ \mathfrak s }' \rangle^2 \langle q_{ \mathfrak s }' \rangle^2 }{ { p_{ \mathfrak s } }^0 { q_{ \mathfrak s } }^0 } | \partial_p { \tilde f }_{ \mathfrak 2 } ( p_{ \mathfrak s }' ) |^2 | { \tilde f }_{ \mathfrak 2 } ( q_{ \mathfrak s }' ) |^2 \, d \omega \, d q \, d p \right)^{ \frac12 } \, d { \mathfrak s } \nonumber \\ 
& \quad + C \left( \max_{ a , b } | { g_{ \mathfrak 1 } }^{ a b } - { g_{ \mathfrak 2 } }^{ a b } | \right) \| { \tilde f }_{ \mathfrak 1 } - { \tilde f }_{ \mathfrak 2 } \|_{ 0 , 0 } \int_1^2 \left( \iiint \frac{ \langle p_{ \mathfrak s }' \rangle^2 \langle q_{ \mathfrak s }' \rangle^2 }{ { p_{ \mathfrak s } }^0 { q_{ \mathfrak s } }^0 } | { \tilde f }_{ \mathfrak 2 } ( p_{ \mathfrak s }' ) |^2 | \partial_p { \tilde f }_{ \mathfrak 2 } ( q_{ \mathfrak s }' ) |^2 \, d \omega \, d q \, d p \right)^{ \frac12 } \, d { \mathfrak s } , \nonumber 
\end{align}
where we used Lemma \ref{lem pp'q'} and the argument of \eqref{p1q1p2q2}. Here, $ p_{ \mathfrak s }' $ and $ q_{ \mathfrak s }' $ are defined by \eqref{p'} and \eqref{q'} using the metric $ { g_{ \mathfrak s } }^{ a b } $. For the two integrals above, over $ \bbs^2_\omega \times \bbr^3_q \times \bbr^3_p $, we use the change of variables
\[
\frac{ d p_{ \mathfrak s }' \, d q_{ \mathfrak s }' }{ { p_{ \mathfrak s }' }^0 { q_{ \mathfrak s }' }^0 } = \frac{ d p \, d q }{ { p_{ \mathfrak s } }^0 { q_{ \mathfrak s } }^0 } , 
\]
and the equivalence of $ { p_{ \mathfrak s } }^0 { q_{ \mathfrak s } }^0 $ and $ { p_{ \mathfrak 2 } }^0 { q_{ \mathfrak 2 } }^0 $ to obtain 
\begin{align*}
& \left( \iiint \frac{ \langle p_{ \mathfrak s }' \rangle^2 \langle q_{ \mathfrak s }' \rangle^2 }{ { p_{ \mathfrak s } }^0 { q_{ \mathfrak s } }^0 } | \partial_p { \tilde f }_{ \mathfrak 2 } ( p_{ \mathfrak s }' ) |^2 | { \tilde f }_{ \mathfrak 2 } ( q_{ \mathfrak s }' ) |^2 \, d \omega \, d q \, d p \right)^{ \frac12 } \\ 
& \leq C \left( \iiint \frac{ \langle p \rangle^2 \langle q \rangle^2 }{ { p_{ \mathfrak s } }^0 { q_{ \mathfrak s } }^0 } | \partial_p { \tilde f }_{ \mathfrak 2 } ( p ) |^2 | { \tilde f }_{ \mathfrak 2 } ( q ) |^2 \, d \omega \, d q \, d p \right)^{ \frac12 } \\ 
& \leq C \left( \iint \langle p \rangle^2 \langle q \rangle^2 | \partial_p { \tilde f }_{ \mathfrak 2 } ( p ) |^2 | { \tilde f }_{ \mathfrak 2 } ( q ) |^2 \, d q \, d p \right)^{ \frac12 } \\ 
& \leq C \| { \tilde f }_{ \mathfrak 2 } \|_{ 1 , 1 } \| { \tilde f }_{ \mathfrak 2 } \|_{ 1 , 0 } , 
\end{align*}
which is bounded on $ [ t_0 , T_6 ] $. The second integral is estimated the same way. We conclude that 
\begin{align}
& \left| \int_{ \bbr^3 } G_4 ( { \tilde f }_{ \mathfrak 1 } - { \tilde f }_{ \mathfrak 2 } ) \, d p \right| \leq C \left( \max_{ a , b } | { g_{ \mathfrak 1 } }^{ a b } - { g_{ \mathfrak 2 } }^{ a b } | \right) \| { \tilde f }_{ \mathfrak 1 } - { \tilde f }_{ \mathfrak 2 } \|_{ 0 , 0 } . \label{unique15} 
\end{align}
For the last term $ G_5 $, we obtain the same result as above, but we skip the details, since it can be obtained by the same way as in \eqref{unique11} with the change of variables $ ( p_{ \mathfrak 2 }' , q_{ \mathfrak 2 }' ) \mapsto ( p , q ) $:
\begin{align}
\left| \int_{ \bbr^3 } G_5 ( { \tilde f }_{ \mathfrak 1 } - { \tilde f }_{ \mathfrak 2 } ) \, d p \right| \leq C \left( \max_{ a , b } | { g_{ \mathfrak 2 } }^{ a b } - { g_{ \mathfrak 1 } }^{ a b } | \right) \| { \tilde f }_{ \mathfrak 1 } - { \tilde f }_{ \mathfrak 2 } \|_{ 0 , 0 } . \label{unique16} 
\end{align}
To summarize, we combine \eqref{unique1}, \eqref{unique2}--\eqref{unique13}, \eqref{unique14}--\eqref{unique16} to obtain 
\begin{align}
\frac{ d }{ d t } \| { \tilde f }_{ \mathfrak 1 } - { \tilde f }_{ \mathfrak 2 } \|_{ 0 , 0 } \leq C \left( \max_{ a , b } | { g_{ \mathfrak 1 } }^{ a b } - { g_{ \mathfrak 2 } }^{ a b } | + \max_{ a , b } | { k_{ \mathfrak 1 } }^{ a b } - { k_{ \mathfrak 2 } }^{ a b } | + \| { \tilde f }_{ \mathfrak 1 } - { \tilde f }_{ \mathfrak 2 } \|_{ 0 , 0 } \right) . \label{uniquef} 
\end{align}
The differential inequalities for $ { g_{ \frak 1 } }^{ a b } - { g_{ \frak 2 } }^{ a b } $ and $ { k_{ \frak 1 } }^{ a b } - { k_{ \frak 2 } }^{ a b } $ are easily obtained. We obtain from \eqref{EBs^1} and \eqref{EBs^2}:
\begin{align*}
\frac{ d }{ d t } ( { g_{ \frak 1 } }^{ a b } - { g_{ \frak 2 } }^{ a b } ) & = - 2 ( { k_{ \mathfrak 1 } }^{ a b } - { k_{ \frak 2 } }^{ a b } ) , \\ 
\frac{ d }{ d t } ( { k_{ \frak 1 } }^{ a b } - { k_{ \frak 2 } }^{ a b } ) & = - 2 { k_{ \frak 1 } }^a_c { k_{ \frak 1 } }^{ b c } - { k_{ \frak 1 } } { k_{ \frak 1 } }^{ a b } - { R_{ \frak 1 } }^{ a b } + { S_{ \frak 1 } }^{ a b } + \frac12 ( { \rho_{ \frak 1 } } - { S_{ \frak 1 } } ) { g_{ \frak 1 } }^{ a b } + V ( \phi_{ \frak 1 } ) { g_{ \frak 1 } }^{ a b } \\ 
& \quad + 2 { k_{ \frak 2 } }^a_c { k_{ \frak 2 } }^{ b c } + { k_{ \frak 2 } } { k_{ \frak 2 } }^{ a b } + { R_{ \frak 2 } }^{ a b } - { S_{ \frak 2 } }^{ a b } - \frac12 ( { \rho_{ \frak 2 } } - { S_{ \frak 2 } } ) { g_{ \frak 2 } }^{ a b } - V ( \phi_{ \frak 2 } ) { g_{ \frak 2 } }^{ a b } . 
\end{align*}
Recall that $ { g_{ \frak 1 } }_{ a b } $ is a polynomial of $ { g_{ \frak 1 } }^{ a b } $ and $ ( \det { g_{ \frak 1 } }^{ - 1 } )^{ - 1 } $. Hence, the quantities $ { k_{ \frak 1 } }^a_c = { k_{ \frak 1 } }^{ a d } { g_{ \frak 1 } }_{ c d } $, $ k_{ \frak 1 } = { k_{ \frak 1 } }^{ a b } { g_{ \frak 1 } }_{ a b } $ and $ { R_{ \frak 1 } }^{ a b } $ are some polynomials of $ { g_{ \frak 1 } }^{ a b } $, $ { k_{ \frak 1 } }^{ a b } $ and $ ( \det { g_{ \frak 1 } }^{ - 1 } )^{ - 1 } $, and similar arguments are applied to the quantities $ { k_{ \frak 2 } }^a_c $, $ k_{ \frak 2 } $ and $ { R_{ \frak 2 } }^{ a b } $. Moreover, since the quantities $ { g_{ \frak 1 } }^{ a b } $, $ { k_{ \frak 1 } }^{ a b } $, $ ( \det { g_{ \frak 1 } }^{ - 1 } )^{ - 1 } $, $ { g_{ \frak 2 } }^{ a b } $, $ { k_{ \frak 2 } }^{ a b } $ and $ ( \det { g_{ \frak 2 } }^{ - 1 } )^{ - 1 } $ are all bounded on $ [ t_0 , T_6 ] $, we obtain 
\begin{align*}
& \left| - 2 { k_{ \frak 1 } }^a_c { k_{ \frak 1 } }^{ b c } - { k_{ \frak 1 } } { k_{ \frak 1 } }^{ a b } - { R_{ \frak 1 } }^{ a b } + 2 { k_{ \frak 2 } }^a_c { k_{ \frak 2 } }^{ b c } + { k_{ \frak 2 } } { k_{ \frak 2 } }^{ a b } + { R_{ \frak 2 } }^{ a b } \right| \leq C \left( \max_{ a , b } | { g_{ \frak 1 } }^{ a b } - { g_{ \frak 2 } }^{ a b } | + \max_{ a , b } | { k_{ \frak 1 } }^{ a b } - { k_{ \frak 2 } }^{ a b } | \right) , 
\end{align*}
for some positive $ C $. For the matter terms we consider 
\[
{ S_{ \frak 1 } }^{ a b } - { S_{ \frak 2 } }^{ a b } = ( \det { g_{ \frak 1 } }^{ - 1 } )^{ \frac12 } { g_{ \frak 1 } }^{ a c } { g_{ \frak 1 } }^{ b d } \int_{ \bbr^3 } { f_{ \frak 1 } } p_c p_d \frac{ d p }{ { p_{ \frak 1 } }^0 } - ( \det { g_{ \frak 2 } }^{ - 1 } )^{ \frac12 } { g_{ \frak 2 } }^{ a c } { g_{ \frak 2 } }^{ b d } \int_{ \bbr^3 } { f_{ \frak 2 } } p_c p_d \frac{ d p }{ { p_{ \frak 2 } }^0 } . 
\]
We note that $ ( \det { g_{ \frak 1 } }^{ - 1 } )^{ \frac12 } { g_{ \frak 1 } }^{ a c } { g_{ \frak 1 } }^{ b d } - ( \det { g_{ \frak 2 } }^{ - 1 } )^{ \frac12 } { g_{ \frak 2 } }^{ a c } { g_{ \frak 2 } }^{ b d } $ is bounded by a constant multiplied by $ \max_{ a , b } | { g_{ \frak 1 } }^{ a b } - { g_{ \frak 2 } }^{ a b } | $. We also note that the integrals above are estimated, in terms of $ { \tilde f }_{ \mathfrak 1 } $ and $ { \tilde f }_{ \mathfrak 2 } $, as in \eqref{matter est}:
\[
\left| \int_{ \bbr^3 } { f_{ \mathfrak 1 } } p_c p_d \frac{ d p }{ { p_{ \mathfrak 1 } }^0 } \right| = \left| \int_{ \bbr^3 } { { \tilde f }_{ \mathfrak 1 } } e^{ - \frac12 { p_{ \mathfrak 1 } }^0 } p_c p_d \frac{ d p }{ { p_{ \mathfrak 1 } }^0 } \right| \leq C \| { \tilde f }_{ \mathfrak 1 } \|_{ 0 , 0 } , 
\]
and similarly for the integral of $ f_{ \mathfrak 2 } $. Now, we consider 
\begin{align*}
& \int_{ \bbr^3 } { f_{ \mathfrak 1 } } p_c p_d \frac{ d p }{ { p_{ \mathfrak 1 } }^0 } - \int_{ \bbr^3 } { f_{ \mathfrak 2 } } p_c p_d \frac{ d p }{ { p_{ \mathfrak 2 } }^0 } \\
& = \int_{ \bbr^3 } { { \tilde f }_{ \mathfrak 1 } } e^{ - \frac12 { p_{ \mathfrak 1 } }^0 } p_c p_d \frac{ d p }{ { p_{ \mathfrak 1 } }^0 } - \int_{ \bbr^3 } { { \tilde f }_{ \mathfrak 2 } } e^{ - \frac12 { p_{ \mathfrak 2 } }^0 } p_c p_d \frac{ d p }{ { p_{ \mathfrak 2 } }^0 } \\
& = \int_{ \bbr^3 } ( { \tilde f }_{ \mathfrak 1 } - { \tilde f }_{ \mathfrak 2 } ) e^{ - \frac12 { p_{ \mathfrak 1 } }^0 } p_c p_d \frac{ d p }{ { p_{ \mathfrak 1 } }^0 } + \int_{ \bbr^3 } { \tilde f }_{ \mathfrak 2 } ( e^{ - \frac12 { p_{ \mathfrak 1 } }^0 } - e^{ - \frac12 { p_{ \mathfrak 2 } }^0 } ) p_c p_d \frac{ d p }{ { p_{ \mathfrak 1 } }^0 } + \int_{ \bbr^3 } { \tilde f }_{ \mathfrak 2 } e^{ - \frac12 { p_{ \mathfrak 2 } }^0 } p_c p_d \left( \frac{ 1 }{ { p_{ \mathfrak 1 } }^0 } - \frac{ 1 }{ { p_{ \mathfrak 2 } }^0 } \right) \, d p . 
\end{align*}
The first quantity on the right hand side is estimated the same way as above: 
\begin{align*}
\left| \int_{ \bbr^3 } ( { \tilde f }_{ \mathfrak 1 } - { \tilde f }_{ \mathfrak 2 } ) e^{ - \frac12 { p_{ \mathfrak 1 } }^0 } p_c p_d \frac{ d p }{ { p_{ \mathfrak 1 } }^0 } \right| \leq C \| { \tilde f }_{ \mathfrak 1 } - { \tilde f }_{ \mathfrak 2 } \|_{ 0 , 0 } . 
\end{align*}
For the second and the third quantities, we apply the same arguments as in $ L_4 $ and $ L_2 $, respectively, to obtain the upper bound $ \max_{ a , b } | { g_{ \mathfrak 1 } }^{ a b } - { g_{ \mathfrak 2 } }^{ a b } | $. Hence, we obtain 
\[
| { S_{ \mathfrak 1 } }^{ a b } - { S_{ \mathfrak 2 } }^{ a b } | \leq C \left( \max_{ a , b } | { g_{ \mathfrak 1 } }^{ a b } - { g_{ \mathfrak 2 } }^{ a b } | + \| { \tilde f }_{ \mathfrak 1 } - { \tilde f }_{ \mathfrak 2 } \|_{ 0 , 0 } \right) . 
\]
The other matter terms are similarly estimated. Consequently, we obtain 
\begin{align*}
\frac{ d }{ d t } | { g_{ \mathfrak 1 } }^{ a b } - { g_{ \mathfrak 2 } }^{ a b } | & \leq C \left( \max_{ a , b } | { k_{ \mathfrak 1 } }^{ a b } - { k_{ \mathfrak 2 } }^{ a b } | \right) , \\ 
\frac{ d }{ d t } | { k_1 }^{ a b } - { k_2 }^{ a b } | & \leq C \left( \max_{ a , b } | { g_{ \mathfrak 1 } }^{ a b } - { g_{ \mathfrak 2 } }^{ a b } | + \max_{ a , b } | { k_{ \mathfrak 1 } }^{ a b } - { k_{ \mathfrak 2 } }^{ a b } | + \| { \tilde f }_{ \mathfrak 1 } - { \tilde f }_{ \mathfrak 2 } \|_{ 0 , 0 } + | \phi_{ \mathfrak 1 } - \phi_{ \mathfrak 2 } | \right) . 
\end{align*}
For the scalar field we easily obtain
\begin{align*}
\frac{ d }{ d t } | \phi_{ \mathfrak 1 } - \phi_{ \mathfrak 2 } | & \leq C | \psi_{ \mathfrak 1 } - \psi_{ \mathfrak 2 } | , \\ 
\frac{ d }{ d t } | \psi_1 - \psi_2 | & \leq C \left( \max_{ a , b } | { g_{ \mathfrak 1 } }^{ a b } - { g_{ \mathfrak 2 } }^{ a b } | + \max_{ a , b } | { k_{ \mathfrak 1 } }^{ a b } - { k_{ \mathfrak 2 } }^{ a b } | + | \phi_{ \mathfrak 1 } - \phi_{ \mathfrak 2 } | + | \psi_{ \mathfrak 1 } - \psi_{ \mathfrak 2 } | \right) . 
\end{align*}
Finally, applying Gr{\" o}nwall's inequality to the above four differential inequalities together with \eqref{uniquef}, we obtain the uniqueness of solutions to the EBs system. This completes the proof of Proposition \ref{prop local}.

\section{Global existence and asymptotic behavior}\label{sec Global}

In this part we obtain the global existence of small solutions to the EBs system. The argument will be the standard: we assume that initial data is small and obtain certain decay properties, so that the time interval of Proposition \ref{prop local} will be extended to $ [ t_0 , \infty ) $. 

Recall that the potential function $ V : \bbr \to [ V_0 , \infty ) $ for the scalar field is a smooth function satisfying 
\[
V ( 0 ) = V_0 > 0 , \qquad V' ( 0 ) = 0 , \qquad V'' ( 0 ) > 0 .
\]
Hence, we may write 
\[
V ( \phi ) = V_0 + \frac12 V'' ( 0 ) \phi^2 + { \tilde V } ( \phi ) , 
\]
where $ { \tilde V } $ satisfies 
\begin{align}
| { \tilde V } ( \phi ) | \leq C | \phi |^3 , \qquad | { \tilde V } ' ( \phi ) | \leq C \phi^2 , \label{tilde V}
\end{align}
for small $ \phi $. Below, we will first consider the scalar field and the Hubble variable. We define
\begin{align*}
E & = \frac12 \left( \psi^2 + 3 \gamma \phi \psi + \left( \frac92 + \chi_0 \right) \gamma^2 \phi^2 \right) , \qquad \chi_0 = \frac{ V'' ( 0 ) }{ \gamma^2 } , \qquad \gamma = \sqrt{ \frac{ V_0 }{ 3 } } , \\ 
X & = 3 H^2 - \frac12 \psi^2 - V ( \phi ) , 
\end{align*} 
and
\[
E_0 = E ( t_0 ) , \qquad X_0 = X ( t_0 ) . 
\]
We will show that $ E $ decays, provided that $ E $ is initially small and the Hubble variable $ H $ is close to $ \gamma $. This implies that the scalar fields $ \phi $ and $ \psi $ are also small and decay. The estimates for $ X $ will show that $ H $ decays to $ \gamma $, if it is initially close to $ \gamma $. This will be studied in Section \ref{sec asymptotics1}.

\subsection{Asymptotics for the scalar field and the Hubble variable}\label{sec asymptotics1}

We first observe that the Hubble variable $ H $ is bounded below by $ \gamma $. Let us introduce the shear tensor $ \sigma_{ a b } $ defined by the trace free part of the second fundamental form, i.e., 
\[
k_{ a b } = \sigma_{ a b } + H g_{ a b } , \qquad H = \frac13 k , \qquad k = k_{ a b } g^{ a b } . 
\]
Then, the constraint equation \eqref{EBs6} can be written as 
\begin{align}
\frac12 R - \frac12 \sigma_{ a b } \sigma^{ a b } + 3 H^2 = \rho + \frac12 \psi^2 + V ( \phi ) . \label{constraint} 
\end{align}
We note that the Ricci scalar $ R $ is non-positive, and the potential has the positive lower bound $ V ( \phi ) \geq V_0 $. Hence, we have $ 3 H^2 \geq V_0 $, which is equivalent to 
\begin{align} 
H \geq \gamma , \label{H gamma}
\end{align}
if we assume $ H_0 \geq \gamma $. Moreover, we can see that $ H $ is decreasing. Applying \eqref{constraint} to the evolution equations \eqref{EBs1} and \eqref{EBs2}, we obtain the evolution equation for $ H $: 
\begin{align}
\frac{ d H }{ d t } = \frac16 R - \frac12 \sigma_{ a b } \sigma^{ a b } - \frac12 \rho - \frac12 \psi^2 - \frac16 S , \label{dH dt} 
\end{align}
which shows that 
\begin{align}
\frac{ d H }{ d t } \leq 0 . \label{dH} 
\end{align}
We can also observe that the quantity $ X $ is non-negative. Let us write \eqref{constraint} as 
\[
3 H^2 - \frac12 \psi^2 - V ( \phi ) = \rho - \frac12 R + \frac12 \sigma_{ a b } \sigma^{ a b } , 
\]
and note that the left hand side equals $ X $ and the right hand side is non-negative. Hence, we have 
\begin{align}
X = \rho - \frac12 R + \frac12 \sigma_{ a b } \sigma^{ a b } \geq 0 .  \label{X geq 0} 
\end{align} 
Finally, we notice that $ E $ is equivalent to $ \phi^2 + \psi^2 $. Applying the inequality $ | 3 \gamma \phi \psi | \leq \frac12 ( 9 \gamma^2 \phi^2 + \psi^2 ) $ to $ E $, we obtain 
\begin{align}
\frac12 \left( \frac12 \psi^2 + \chi_0 \gamma^2 \phi^2 \right) \leq E \leq \frac12 \left( \frac32 \psi^2 + ( 9 + \chi_0 ) \gamma^2 \phi^2 \right) . \label{E equiv} 
\end{align}
For more details, we refer to Chapter 27 of Ref.~\cite{Ringstrom}. The following is the result for the scalar field and the Hubble variable.

\begin{lemma}\label{lem H small} 
Suppose that $ H $, $ \phi $ and $ \psi $ satisfy the EBs system on an interval $ [ t_0 , T ] $. Then, there exist positive numbers $ \varepsilon $, $ b_1 $ and $ b_2 $ such that if initial data satisfy 
\begin{align}
H_0 - \gamma + E_0 < \varepsilon , \label{H small} 
\end{align}
then we have 
\begin{align}
& E ( t ) \leq E_0 e^{ - b_1 ( t - t_0 ) } , \label{E decay} \\ 
& X ( t ) \leq X_0 e^{ - 2 \gamma ( t - t_0 ) } , \label{X decay} \\ 
& 0 \leq H ( t ) - \gamma \leq C ( E_0 + X_0 ) e^{ - b_2 t } , \label{H decay} 
\end{align} 
where $ C $ does not depend on $ T $. 
\end{lemma}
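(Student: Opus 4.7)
The plan is to treat the three decay estimates in the order $X$, then $E$, then $H-\gamma$, since the bound on $H-\gamma$ will follow by feeding the first two back through the constraint.

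For $X$, I would differentiate $X = 3H^2 - \tfrac12\psi^2 - V(\phi)$ along the scalar-field equations \eqref{EBs3}--\eqref{EBs4}. The $V'(\phi)\psi$ terms cancel exactly, leaving $\dot X = 6H\dot H + 3H\psi^2$. Substituting \eqref{dH dt} and then using \eqref{X geq 0} to eliminate $R$ in favour of $X$, $\rho$ and $\sigma_{ab}\sigma^{ab}$, I expect the identity
\[
\frac{dX}{dt} = -H\bigl(\rho + S + 2\sigma_{ab}\sigma^{ab}\bigr) - 2HX.
\]
The matter terms $\rho$ and $S$ are non-negative because they are moments of $f\geq 0$ against $p^0 \geq 0$ and $g^{ab}p_ap_b = (p^0)^2 - 1 \geq 0$ respectively, so the right-hand side is $\leq -2HX \leq -2\gamma X$ by \eqref{H gamma}. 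Gronwall then gives \eqref{X decay} directly.

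For $E$, write $V'(\phi) = \chi_0\gamma^2\phi + \tilde V'(\phi)$ with $\tilde V'$ satisfying \eqref{tilde V}, and differentiate $E$ along \eqref{EBs3}--\eqref{EBs4}. After collecting the $\phi\psi$ coefficient, which simplifies to $-\tfrac{9\gamma}{2}(H-\gamma)$, I expect
\[
\frac{dE}{dt} = -\bigl(3H - \tfrac{3\gamma}{2}\bigr)\psi^2 - \tfrac{9\gamma}{2}(H-\gamma)\phi\psi - \tfrac{3\gamma^3\chi_0}{2}\phi^2 - \psi\tilde V'(\phi) - \tfrac{3\gamma}{2}\phi\tilde V'(\phi).
\]
At $H=\gamma$ the quadratic part is negative definite and, by \eqref{E equiv}, dominates $-cE$ with $c := \min\{2\gamma,\, 3\gamma\chi_0/(9+\chi_0)\} > 0$. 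The key observation is that $H$ is monotonically non-increasing by \eqref{dH}, so $0 \leq H(t)-\gamma \leq H_0-\gamma < \varepsilon$ on $[t_0,T]$ \emph{automatically}; no bootstrap for $H$ is required. Hence the cross term is an $O(\varepsilon)E$ perturbation, and the cubic remainders are $O(E^{3/2})$ by \eqref{tilde V} together with $|\phi|,|\psi| \leq C\sqrt{E}$ from \eqref{E equiv}. A standard continuity argument in which one assumes $E(t) \leq 2E_0 < 2\varepsilon$ on a subinterval then yields $\dot E \leq -(c - O(\varepsilon) - O(\sqrt{\varepsilon}))E \leq -(c/2)E$ for $\varepsilon$ sufficiently small, which integrates to $E(t) \leq E_0 e^{-(c/2)(t-t_0)}$ and closes the bootstrap with $b_1 = c/2$.

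Finally, for $H-\gamma$, rearrange the definition of $X$ (or equivalently \eqref{constraint}) to get $3(H-\gamma)(H+\gamma) = X + \tfrac12\psi^2 + \tfrac12 V''(0)\phi^2 + \tilde V(\phi)$. Since $H+\gamma \geq 2\gamma$ and $|\tilde V(\phi)| \leq C|\phi|^3$, this gives $H-\gamma \leq C(X + E + E^{3/2}) \leq C(X+E)$. Combining \eqref{X decay} with the $E$-decay just established yields \eqref{H decay} with $b_2 := \min\{b_1, 2\gamma\}$, after absorbing the harmless constant $e^{b_2 t_0}$ into $C$ to convert $e^{-b_2(t-t_0)}$ into $e^{-b_2 t}$. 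The main obstacle will be the quadratic-form bookkeeping in the $E$ calculation: producing an explicit, quantitative coefficient $c>0$ out of the comparison between the $Q(\phi,\psi)$ appearing in $\dot E$ and the upper bound in \eqref{E equiv}, uniformly as $H$ ranges over $[\gamma, H_0]$. Once that algebra is pinned down, the monotonicity of $H$ makes the bootstrap and the closing chain of estimates essentially routine.
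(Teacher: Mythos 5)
Your proposal is correct and takes essentially the same approach as the paper: the differential inequality $\dot{X}\le -2\gamma X$ from \eqref{dH dt} and non-negativity of $\rho$, $S$, $\sigma_{ab}\sigma^{ab}$, $-R$; the differential inequality $\dot{E}\le -b_1 E$ using the $\tilde V$ decomposition, the equivalence \eqref{E equiv}, and crucially the monotone bound $H\le H_0$ from \eqref{dH} to control the $\phi\psi$ cross term; and finally $H-\gamma\lesssim X+E$ from the constraint. The only cosmetic differences are the order in which you establish the $X$ and $E$ decays (they are independent), an explicit rewriting of the $\dot{X}$ identity in terms of $X$ itself rather than leaving the parenthesized constraint quantities, and a spelled-out continuity argument where the paper states the smallness requirement implicitly.
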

\begin{proof}
By direct calculations, we have the following differential inequality for $ E $: 
\begin{align*}
\frac{ d E }{ d t } & = \psi \frac{ d \psi }{ d t } + \frac32 \gamma \psi^2 + \frac32 \gamma \phi \frac{ d \psi }{ d t } + \left( \frac92 + \chi_0 \right) \gamma^2 \phi \psi \\ 
& = \psi \left( - 3 H \psi - V' ( \phi ) \right) + \frac32 \gamma \psi^2 + \frac32 \gamma \phi \left( - 3 H \psi - V' ( \phi ) \right) + \left( \frac92 + \chi_0 \right) \gamma^2 \phi \psi \\ 
& = \left( - 3 H + \frac32 \gamma \right) \psi^2 - \psi V' ( \phi ) + \left( - \frac92 H + \frac92 \gamma \right) \gamma \phi \psi - \frac32 \gamma \phi V' ( \phi ) + \chi_0 \gamma^2 \phi \psi \\ 
& = \left( - 3 H + \frac32 \gamma \right) \psi^2 - \psi \left( V' ( \phi ) - \chi_0 \gamma^2 \phi \right) + \left( - \frac92 H + \frac92 \gamma \right) \gamma \phi \psi - \frac32 \gamma \phi \left( V' ( \phi ) - \chi_0 \gamma^2 \phi \right) - \frac32 \chi_0 \gamma^3 \phi^2 , 
\end{align*}
where we used \eqref{EBs4} for $ d \psi / d t $. Recall that $ H \geq \gamma $ and 
\[
V' ( \phi ) - \chi_0 \gamma^2 \phi = { \tilde V }' ( \phi ) . 
\]
Hence, we obtain by applying \eqref{tilde V}, \eqref{H gamma} and \eqref{dH}, 
\begin{align*}
\frac{ d E }{ d t } & \leq - \frac32 \gamma \psi^2 + | \psi { \tilde V }' ( \phi ) | + \frac94 ( H - \gamma ) ( \gamma^2 \phi^2 + \psi^2 ) + \frac32 | \gamma \phi { \tilde V }' ( \phi ) | - \frac32 \chi_0 \gamma^3 \phi^2 \\ 
& \leq - \frac32 \gamma ( \psi^2 + \chi_0 \gamma^2 \phi^2 ) + \frac94 ( H_0 - \gamma ) ( \gamma^2 \phi^2 + \psi^2 ) + C E^{ \frac32 } , 
\end{align*}
as long as $ \phi $ is small. We apply \eqref{E equiv} to conclude that there exist $ \varepsilon > 0 $ and $ b_1 > 0 $ such that if $ H_0 $ and $ E_0 $ satisfy \eqref{H small}, then 
\[
\frac{ d E }{ d t } \leq - b_1 E , 
\]
which implies the first result \eqref{E decay}. The differential inequality for $ X $ is obtained by applying \eqref{dH dt}, 
\begin{align*}
\frac{ d X }{ d t } & = 6 H \frac{ d H }{ d t } - \psi \frac{ d \psi }{ d t } - V' ( \phi ) \psi \\ 
& = 6 H \left( \frac16 R - \frac12 \sigma_{ a b } \sigma^{ a b } - \frac12 \rho - \frac12 \psi^2 - \frac16 S \right) - \psi \left( - 3 H \psi - V' ( \phi ) \right) - V' ( \phi ) \psi \\ 
& = - 2 H \left( - \frac12 R + \frac32 \sigma_{ a b } \sigma^{ a b } + \frac32 \rho + \frac12 S \right) . 
\end{align*}
We notice that all the terms in the parenthesis above are non-negative. Applying \eqref{H gamma}, we obtain 
\[
\frac{ d X }{ d t } \leq - 2 \gamma X , 
\]
which gives the second result \eqref{X decay}. For the third one, we rewrite \eqref{X decay} as 
\begin{align*}
3 H^2 - V_0 & \leq \frac12 \psi^2 + V ( \phi ) - V_0 + X_0 e^{ - 2 \gamma ( t - t_0 ) } \\ 
& \leq C ( E_0 e^{ - b_1 t } + X_0 e^{ - 2 \gamma t } ) , 
\end{align*}
and we choose $ b_2 = \min \{ b_1 , 2 \gamma \} $. Since $ 3 H^2 - V_0 \geq 6 \gamma ( H - \gamma ) $, we obtain the third result. 
\end{proof}

In Lemma \ref{lem H small}, we did not assume $ X_0 $ is small, but it is small under the assumption \eqref{H small}. By the definition of $ X $ and the decreasing property of $ H $, we obtain 
\begin{align*}
X & = 3 H^2 - \frac12 \psi^2 - V ( \phi ) \\ 
& \leq 3 H_0^2 - V_0 - \frac12 \psi^2 + | V ( \phi ) - V_0 | \\ 
& \leq 3 ( H_0 + \gamma ) ( H_0 - \gamma ) + \frac12 \psi^2 + C \phi^2 \\ 
& \leq C ( H_0 - \gamma ) + C E . 
\end{align*}
Hence, if we assume \eqref{H small}, then $ X_0 $ is also small. The following lemma follows from Lemma \ref{lem H small}.

\begin{lemma}\label{lem phi decay}
Suppose that $ H $, $ \phi $, $ \psi $, $ \rho $, $ R $ and $ \sigma_{ a b } $ satisfy the EBs system on an interval $ [ t_0 , T ] $. Then, there exist positive numbers $ \varepsilon $, $ b_1 $ and $ b_2 $ such that if initial data satisfy 
\begin{align}
H_0 - \gamma + E_0 < \varepsilon , \label{H small2} 
\end{align}
then we have 
\begin{align}
& | \phi ( t ) | + | \psi ( t ) | \leq C \sqrt{ \varepsilon } e^{ - \frac12 b_1 ( t - t_0 ) } , \label{phi decay} \\ 
& \rho ( t ) + | R ( t ) | + \sigma_{ a b } \sigma^{ a b } ( t ) \leq C \varepsilon e^{ - 2 \gamma t } , \label{shear decay} \\ 
& 0 \leq H ( t ) - \gamma \leq C \varepsilon e^{ - b_2 t } , \label{H decay2} 
\end{align}
where $ C $ does not depend on $ T $. 
\end{lemma}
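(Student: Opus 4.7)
The plan is to deduce Lemma \ref{lem phi decay} directly from Lemma \ref{lem H small}, using the equivalences between the auxiliary quantities $E$, $X$ and the pointwise quantities of interest. The smallness assumption \eqref{H small2} is identical to \eqref{H small}, so Lemma \ref{lem H small} applies and furnishes exponential decay of $E$ and $X$, together with the bound on $H - \gamma$.

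For the first estimate \eqref{phi decay}, I would use the lower bound in \eqref{E equiv}, namely
\[
\frac{1}{4}\left( \psi^2 + \chi_0 \gamma^2 \phi^2 \right) \leq E,
\]
so that $\phi^2 + \psi^2 \leq C E(t) \leq C E_0 e^{-b_1(t-t_0)} \leq C \varepsilon e^{-b_1(t-t_0)}$. Taking square roots then gives $|\phi(t)| + |\psi(t)| \leq C\sqrt{\varepsilon}\, e^{-\frac12 b_1(t-t_0)}$.

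For the second estimate \eqref{shear decay}, I would invoke the non-negativity relation \eqref{X geq 0}, which reads $X = \rho - \tfrac12 R + \tfrac12 \sigma_{ab}\sigma^{ab} \geq 0$. Because $\rho \geq 0$, $-R \geq 0$ (Bianchi I--VIII), and $\sigma_{ab}\sigma^{ab} \geq 0$, each of the three summands is individually bounded above by $2X$. Combining this with \eqref{X decay} and the observation (made immediately after Lemma \ref{lem H small}) that $X_0 \leq C(H_0 - \gamma) + CE_0 \leq C\varepsilon$ under \eqref{H small2}, I obtain $\rho(t) + |R(t)| + \sigma_{ab}\sigma^{ab}(t) \leq CX_0 e^{-2\gamma(t-t_0)} \leq C\varepsilon e^{-2\gamma t}$, where the factor $e^{2\gamma t_0}$ is absorbed into $C$.

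For the third estimate \eqref{H decay2}, I would simply use \eqref{H decay} together with the bound $E_0 + X_0 \leq C\varepsilon$ already established. There is no real obstacle here; the proof is essentially a bookkeeping exercise, with the only subtlety being to record that under \eqref{H small2} both $E_0$ and $X_0$ are controlled by $\varepsilon$, which is why all three estimates share the common $\varepsilon$-prefactor. Since the constants in Lemma \ref{lem H small} are independent of $T$, the same holds here.
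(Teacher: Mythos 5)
Your proof is correct and follows essentially the same approach as the paper, which simply refers back to Lemma \ref{lem H small} and the discussion immediately following it. You have correctly filled in the details that the paper leaves implicit — in particular using the lower bound in \eqref{E equiv} for \eqref{phi decay}, the sign properties of $\rho$, $-R$, $\sigma_{ab}\sigma^{ab}$ with \eqref{X geq 0} for \eqref{shear decay}, and the observation that $X_0 \leq C\varepsilon$ under \eqref{H small2}.
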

\begin{proof}
The first and the third results, \eqref{phi decay} and \eqref{H decay2}, are obtained by \eqref{E decay} and \eqref{H decay} of Lemma \ref{lem H small}, respectively. The second result \eqref{shear decay} is obtained by \eqref{X decay} of Lemma \ref{lem H small} together with \eqref{X geq 0}. 
\end{proof}

\subsection{Proof of the main theorem}
We are now ready to prove the global existence. For the scalar fields $ \phi $ and $ \psi $, we will use Lemma \ref{lem phi decay}. For the metric $ g^{ a b } $ and the second fundamental form $ k^{ a b } $ with its trace free part $ \sigma^{ a b } $, we will need the following estimates: 
\begin{align}
| Z^2 g^{ a b } | , | Z^{ - 2 } g_{ a b } |  \leq C , \qquad | Z^2 \sigma^{ a b } | , | Z^{ - 2 } \sigma_{ a b } | \leq C \sqrt{ \varepsilon } e^{ - \gamma t } , \label{g bound}
\end{align}
which hold under the assumptions of Lemma \ref{lem phi decay}. For the proof of the estimates in \eqref{g bound}, we refer to Chapter 27 of Ref.~\cite{Ringstrom}. We now prove the main theorem of the paper.

\subsubsection{Global existence}
In Lemma \ref{lem phi decay}, we showed that there exists $ \varepsilon > 0 $ such that if $ H $ and $ E $ are initially small in the sense of \eqref{H small2}, then the scalar fields $ \phi $ and $ \psi $ decay to zero exponentially. This will be enough for the global existence of the scalar fields. Now, let us consider the metric $ g^{ a b } $ and the second fundamental form $ k^{ a b } $. We assume \eqref{H small2} for initial data, and let $ \varepsilon < 1 $ for simplicity. By direct calculations, we have 
\begin{align}
\frac{ d }{ d t } ( Z^2 g^{ a b } ) & = 2 Z \frac{ d Z }{ d t } g^{ a b } - 2 Z^2 k^{ a b } \nonumber \\ 
& = - 2 Z^2 \sigma^{ a b } - 2 ( H - \gamma ) Z^2 g^{ a b } . \label{dZg}
\end{align}
Applying \eqref{H decay2} and \eqref{g bound}, we obtain 
\begin{align}
| Z^2 g^{ a b } - Z^2 ( t_0 ) { g_0 }^{ a b } | \leq C \sqrt{ \varepsilon } , \label{g close}
\end{align}
where $ C $ does not depend on $ t $. Moreover, we have 
\begin{align*}
| Z^2 k^{ a b } - \gamma Z^2 g^{ a b } | & = | Z^2 \sigma^{ a b } + Z^2 H g^{ a b } - \gamma Z^2 g^{ a b } | \nonumber \\ 
& \leq | Z^2 \sigma^{ a b } | + ( H - \gamma ) | Z^2 g^{ a b } | \nonumber \\ 
& \leq C \sqrt{ \varepsilon } , 
\end{align*}
and combine this with \eqref{g close} to obtain 
\begin{align}
| Z^2 k^{ a b } - \gamma Z^2 ( t_0 ) { g_0 }^{ a b } | \leq C \sqrt{ \varepsilon } . \label{k close} 
\end{align}
We notice that \eqref{g close} and \eqref{k close} imply 
\begin{align}
& \max_{ a , b } | Z^2 g^{ a b } | \leq Z^2 ( t_0 ) \max_{ a , b } | { g_0 }^{ a b } | + C \sqrt{ \varepsilon } , \label{Zg bound}  \\ 
& \max_{ a , b } | Z^2 k^{ a b } | \leq \gamma Z^2 ( t_0 ) \max_{ a , b } | { g_0 }^{ a b } | + C \sqrt{ \varepsilon } , \label{Zk bound} 
\end{align}
where the constants $ C $ are independent of $ t $. For the distribution function $ f $, we need to check the conditions \eqref{local 4} and \eqref{local 5}. We first consider 
\begin{align*}
Z^2 g^{ a b } p_a p_b = ( Z^2 g^{ a b } - Z^2 ( t_0 ) { g_0 }^{ a b } ) p_a p_b + Z^2 ( t_0 ) { g_0 }^{ a b } p_a p_b , 
\end{align*}
where the first term on the right hand side can be estimated by using \eqref{g close} as 
\[
| ( Z^2 g^{ a b } - Z^2 ( t_0 ) { g_0 }^{ a b } ) p_a p_b | \leq C \sqrt{ \varepsilon } | p |^2 . 
\]
Since the initial data $ { g_0 }^{ a b } $ is positive definite, we conclude that there exists $ c_1 \geq 1 $, which is independent of $ t $, satisfying 
\begin{align}
\frac{ 1 }{ c_1 } | p |^2 \leq Z^2 g^{ a b } p_a p_b \leq c_1 | p |^2 , \label{g positive} 
\end{align}
for sufficiently small $ \varepsilon > 0 $. Similarly, we have 
\[
Z^2 k^{ a b } p_a p_b = Z^2 \sigma^{ a b } p_a p_b + Z^2 H g^{ a b } p_a p_b , 
\]
where $ | Z^2 \sigma^{ a b } p_a p_b | \leq C \sqrt{ \varepsilon } | p |^2 $, so that we can conclude that there exists, by abusing the notation, $ c_1 \geq 1 $ such that 
\begin{align}
\frac{ 1 }{ c_1 } | p |^2 \leq Z^2 k^{ a b } p_a p_b \leq c_1 | p |^2 . \label{k positive} 
\end{align}
Note that \eqref{Zg bound} implies the first condition in \eqref{local 5}. For the determinant of $ g^{ - 1 } $, we notice that 
\begin{align*}
\frac{ d \det g^{ - 1 } }{ d t } = - 6 H \det g^{ - 1 } , 
\end{align*}
to obtain 
\begin{align*}
\det g^{ - 1 } & = \det g_0^{ - 1 } \exp \left( - 6 \int_{ t_0 }^t H \, d \tau \right) \\ 
& = \det g_0^{ - 1 } \exp \left( - 6 \int_{ t_0 }^t ( H - \gamma ) \, d \tau \right) e^{ - 6 \gamma t } e^{ 6 \gamma t_0 } . 
\end{align*}
Since $ H - \gamma \geq 0 $ is small and integrable, we conclude that 
\begin{align}
\frac12 Z^6 ( t_0 ) \det g_0^{ - 1 } \leq Z^6 \det g^{ - 1 } \leq Z^6 ( t_0 ) \det g_0^{ - 1 } . \label{det positive} 
\end{align}
Now, we consider the following estimate:
\begin{align*}
& \frac{ d }{ d t } \| f ( t ) \|^2_{ g , m , N } \nonumber \\ 
& \leq Z^{ - 1 } \left( \sup_{ p , q } { \mathcal C }_1 \right) \| f ( t ) \|^2_{ g , m , N } + ( \det g )^{ - \frac14 } \left( \sup_{ p , q } { \mathcal C }_2 + \sup_{ p , q } { \mathcal C }_3 \right) \| f ( t ) \|^3_{ g , m , N } , 
\end{align*}
which is the estimate \eqref{iterationB}, where $ f^n $ and $ f^{ n + 1 } $ are replaced by $ f $. The quantities $ \sup_{ p , q } { \mathcal C }_1 $, $ \sup_{ p , q } { \mathcal C }_2 $ and $ \sup_{ p , q } { \mathcal C }_3 $ are bounded by a constant $ C $, since we have \eqref{g positive} and \eqref{k positive} together with \eqref{Zg bound}. Note that $ Z^{ - 1 } $ is integrable on $ [ t_0 , \infty ) $. The determinant is estimated by using \eqref{det positive}: 
\[
( \det g )^{ - \frac14 } \leq C Z^{ - \frac32 } \leq C e^{ - \frac32 \gamma t } , 
\]
which is also integrable on $ [ t_0 , \infty ) $. Hence, we obtain by Gr{\" o}nwall's lemma, 
\begin{align}
\| f ( t ) \|^2_{ g , m , N } \leq C \| f_0 \|^2_{ g , m , N } , \label{f bound} 
\end{align}
for sufficiently small initial data.

To summarize, let $ { g_0 }^{ a b } $, $ { k_0 }^{ a b } $, $ \phi_0 $, $ \psi_0 $ and $ f_0 $ be a set of initial data of the EBs system \eqref{EBs1}--\eqref{EBs7}. Suppose that initial data satisfy 
\[
H_0 - \gamma + E_0 < \varepsilon , \qquad \max_{ a , b } | { g_0 }^{ a b } | \leq C_0 , \qquad \| f_0 \|^2_{ g_0 , m + \frac12 , N } < \varepsilon , 
\]
and $ { g_0 }^{ a b } $ and $ { k_0 }^{ a b } $ are positive definite. Then, by the equivalence \eqref{E equiv}, we have 
\[
| \phi_0 | + | \psi_0 | < C \sqrt{ \varepsilon } . 
\]
For the initial value of the second fundamental form, we use \eqref{g bound} to obtain 
\begin{align*}
| { k_0 }^{ a b } | & = | { \sigma_0 }^{ a b } + H_0 { g_0 }^{ a b } - \gamma { g_0 }^{ a b } + \gamma { g_0 }^{ a b } | \leq C \sqrt{ \varepsilon } + C_0 ( \varepsilon + \gamma ) , 
\end{align*}
which implies that 
\[
\max_{ a , b } | { g_0 }^{ a b } | + \max_{ a , b } | { k_0 }^{ a b } | \leq C \sqrt{ \varepsilon } + C_0 ( 1 + \varepsilon + \gamma ) . 
\]
Hence, we can find $ C_1 \geq 1 $ such that initial data satisfy 
\begin{align*} 
\max_{ a , b } | { g_0 }^{ a b } | + \max_{ a , b } | { k_0 }^{ a b } | \leq C_1 \sqrt{ \varepsilon } + C_0 ( 1 + \varepsilon + \gamma ) , \qquad | \phi_0 | + | \psi_0 | \leq C_1 \sqrt{ \varepsilon } , \qquad \| f_0 \|^2_{ g_0 , m + \frac12 , N } \leq C_1 \varepsilon . 
\end{align*}
Then, by Proposition \ref{prop local}, we obtain a solution on an interval $ [ t_0 , T ] $, where we have 
\begin{align}
\sup_{ t_0 \leq t \leq T } \max_{ a , b } | { g }^{ a b } ( t ) | + \sup_{ t_0 \leq t \leq T } \max_{ a , b } | { k }^{ a b } ( t ) | \leq 2 C_1 \sqrt{ \varepsilon } + 2 C_0 ( 1 + \varepsilon + \gamma ) , \label{g local} \\ 
\sup_{ t_0 \leq t \leq T } | \phi ( t ) | + \sup_{ t_0 \leq t \leq T } | \psi ( t ) | \leq 2 C_1 \sqrt{ \varepsilon } , \label{phi local} \\ 
\sup_{ t_0 \leq t \leq T } \| f ( t ) \|^2_{ g , m , N } \leq 2 C_1 \varepsilon . \label{f local} 
\end{align}
On the other hand, the estiamtes \eqref{phi decay}, \eqref{Zg bound}, \eqref{Zk bound} and \eqref{f bound} show that there exists $ \varepsilon > 0 $ such that the solutions satisfy the following estimates on $ [ t_0 , T ] $: 
\begin{align*}
& | \phi ( t ) | \leq C_1 \sqrt{ \varepsilon } e^{ - \frac12 b_1 ( t - t_0 ) } , \\ 
& | \psi ( t ) | \leq C_1 \sqrt{ \varepsilon } e^{ - \frac12 b_1 ( t - t_0 ) } , \\ 
& \max_{ a , b } | g^{ a b } ( t ) | \leq C_0 + C_1 \sqrt{ \varepsilon } , \\ 
& \max_{ a , b } | k^{ a b } ( t ) | \leq C_0 \gamma + C_1 \sqrt{ \varepsilon } , \\ 
& \| f ( t ) \|^2_{ g , m , N } \leq C_1 \varepsilon , 
\end{align*}
by making $ C_1 $ larger if necessary. This implies that the estimates \eqref{g local}--\eqref{f local} hold on $ [ t_0 , \infty ) $, and we obtain the global existence for the EBs system.

\subsubsection{Asymptotic behavior}
Note that we have already obtained the following asymptotics: 
\begin{align*}
| \phi ( t ) | + | \psi ( t ) | & \leq C \sqrt{ \varepsilon } e^{ - \frac12 b_1 t } , \\ 
\rho ( t ) + | R ( t ) | + \sigma_{ a b } \sigma^{ a b } ( t ) & \leq C \varepsilon e^{ - 2 \gamma t } , \\ 
0 \leq H ( t ) - \gamma & \leq C \varepsilon e^{ - b_2 t } , \\ 
\det g^{ - 1 } & \leq C e^{ - 6 \gamma t } . 
\end{align*}
Integrating \eqref{dZg} over $ [ t_0 , t ] $, we have 
\[
Z^2 ( t ) g^{ a b } ( t ) = Z^2 ( t_0 ) { g_0 }^{ a b } - \int_{ t_0 }^t 2 Z^2 \sigma^{ a b } + 2 ( H - \gamma ) Z^2 g^{ a b } \, d \tau , 
\]
and define 
\[
{ g_\infty }^{ a b } = Z^2 ( t_0 ) { g_0 }^{ a b } - \int_{ t_0 }^\infty 2 Z^2 \sigma^{ a b } + 2 ( H - \gamma ) Z^2 g^{ a b } \, d \tau . 
\]
Using \eqref{H decay2} and \eqref{g bound}, we obtain 
\begin{align*}
| Z^2 ( t ) g^{ a b } ( t ) - { g_\infty }^{ a b } | & \leq 2 \int_t^\infty | Z^2 \sigma^{ a b } | + | ( H - \gamma ) Z^2 g^{ a b } | \, d \tau \\ 
& \leq C \sqrt{ \varepsilon } e^{ - \gamma t } + C \varepsilon e^{ - b_2 t } , 
\end{align*}
and we choose $ b_3 = \min \{ \gamma , b_2 \} $. This proves the asymptotic behavior of the metric components. Similarly, we obtain 
\[
| Z^{ - 2 } ( t ) g_{ a b } ( t ) - { g_\infty }_{ a b } | \leq C \sqrt{ \varepsilon } e^{ - b_3 t } . 
\]
This completes the proof of Theorem \ref{thm main}.

\section*{Acknowledgments}
H. Lee was supported by the Basic Science Research Program through the National Research Foundation of Korea (NRF) funded by the Ministry of Science, ICT and Future Planning (NRF- 2018R1A1A1A05078275). E. Nungesser has been supported by Grant MTM2017-85934-C3-3-P of Agencia Estatal de Investigaci\'{o}n (Spain).

\end{document}